\let\epsilon\varepsilon
\newtheorem{defi}{Definition}[section]
\newtheorem{rema}[defi]{Remark}
\newtheorem{lem}[defi]{Lemma}
\newtheorem{theo}[defi]{Theorem}
\newtheorem{ass}[defi]{Assumption}
\newtheorem{prop}[defi]{Proposition}
\newtheorem{exam}[defi]{Example}
\colorlet{g}{green!70}
\colorlet{p}{red!25}
\colorlet{vp}{red!60}
\colorlet{vg}{green!25}
\def\-{\raisebox{.75pt}{-}}
\newcommand{\bbr}{\mathbb{R}}
\DeclareMathOperator*{\argmin}{arg\,min}
\begin{document}


\title{An Integrated Approach to Importance Sampling and Machine Learning for Efficient Monte Carlo Estimation of Distortion Risk Measures in Black Box Models}

\author{
S\"oren Bettels \hspace{1.3cm}  Stefan Weber\\[1.0ex] \textit{Leibniz Universit{\"a}t Hannover} }
\date{\today\thanks{House of Insurance \& Institute of Actuarial and Financial Mathematics, Leibniz Universit\"at Hannover, Welfengarten 1, 30167 Hannover, Germany. e-mail:  {\tt stefan.weber@insurance.uni-hannover.de}}}

\maketitle

\begin{abstract} 
Distortion risk measures play a critical role in quantifying risks associated 
with uncertain outcomes. Accurately estimating these risk measures in the context of computationally expensive simulation models that lack analytical tractability is fundamental to 
effective risk management and decision making. In this paper, we propose an efficient important
sampling method for distortion risk measures in such models that reduces the computational cost through machine learning. We demonstrate the applicability and efficiency of the Monte Carlo method in numerical experiments on various distortion risk measures and models. 
\end{abstract}
\vspace{0.2cm}
\textbf{Keywords:} distortion risk measures; importance sampling; quantile estimation; asset-liability management; monetary risk measures

\doublespacing
	
\section{Introduction} \label{sec:introduction}

Many real world simulation models are typically highly complex. Controlled random inputs are transformed by functions that require costly evaluations. These models also provide the basis for risk measurement and control of firms and systems, and this requires a careful analysis of rare events. Important industry examples are internal models of banks and insurance companies that are applied in  their internal risk management process and solvency regulation. 

The goal of this paper is to develop an importance sampling (IS) algorithm for computing an important class of measures of the downside risk, distortion risk measures (DRMs), when very costly computations are required in the mapping of model inputs to outputs. We call such models \emph{black box models} which are characterized by high computational complexity -- sometimes even by opaque mechanisms obscuring the relationship between input and output variables. Besides machine learning (ML) techniques, our simulation approach builds on two main ingredients: (i) efficient IS for quantiles which was developed by \cite{Glynn1996} and \cite{AhnShy2011}, and (ii) representations of distortion risk measures as mixtures of quantiles, cf. \cite{Dhaene2012}.

The quantitative characterization of the downside risk has been studied systematically since the 1990s; the axiomatic foundation for downside risk quantification was laid by \cite{Artzner1999}, \cite{FoellmerSchied2002}, and \cite{FrittelliGianin2002}. An important and extensive class of risk measures are distortion risk measures (DRMs). Particular examples are many distribution-based coherent risk measures, but also frequently used non-convex risk measures. These include value at risk, average value at risk, commonly known as expected shortfall, and range value at risk. Also the insurance premium principles of Wang are included in this class, cf. \cite{Wang1995}, \cite{Wang1996}. 
\noindent
The main innovations of this paper are:
\begin{enumerate}
    \item[(i)]   Using ML techniques, we design an IS algorithm for the Monte Carlo estimation of DRMs in black box models. The construction of an IS distribution requires a discretization of DRM to a mixtures of quantiles, suitable measure changes for quantiles at different levels, and an efficient allocation of the available samples to these levels. Machine learning provides a cheaper alternative for the highly costly evaluation of the black box model. 
    \item [(ii)] We analyze and illustrate the performance of the method in various case studies. For DRMs that focus on extremely rare scenarios, we additionally suggest and test an iterative refinement. Finally, the algorithm is successfully applied in a simple asset-liability management model of an insurance firm.
\end{enumerate}

\subsection*{Literature}

A more extensive treatment of risk measures and DRMs can be found in \cite{FoellmerSchied2016} and 
\cite{FoellmerWeber2015}. DRMs are closely related to Choquet integrals 
introduced by \cite{Choquet1954} and discussed in detail in \cite{Denneberg1994}. DRMs are, for example, studied in 
\cite{Wang1995}, \cite{Wang1996}, \cite{Kusuoka2001}, \cite{Acerbi2002}, \cite{DhaeneEtAl2006}, 
\cite{SongYan2006}, \cite{SongYan2009}, \cite{SongYan2009_2}, \cite{Weber2018} and 
\cite{KimWeber2022}. The representation theorem for DRMs used in the paper can be found in \cite{Dhaene2012}.

Surveys on Monte Carlo simulation and IS are \cite{Glasserman2003} 
and \cite{AsmussenGlynn2007}. Importance sampling techniques for rare-event simulation  include \cite{RubinoTuffin2009}, \cite{Bucklew2004}, 
\cite{BlanchetGlynn2008}, \cite{DupuisWang2002}, \cite{HultNyquist2016}, \cite{AsmussenEtAl2000}, and 
\cite{JunejaShahabuddin2006}. More closely related to DRMs are the following papers.  The asymptotic properties of the IS quantile estimators used in this paper are discussed in 
 \cite{Glynn1996} and generalized in \cite{AhnShy2011}. \cite{Glynn1996} considers the IS estimation of quantiles and suggests four estimators for which asymptotic normality is shown. Results from the theory of large deviations motivate in applications the choice of IS distributions within an exponential class. \cite{AhnShy2011} build on this contribution and study IS for V@R and AV@R. Asymptotic normality is proven under weaker conditions. \cite{AriefEtAl2021} consider rare-event simulation in black box systems focusing on the estimation of probabilities. \cite{GlassermanHeidelbergerShahabuddin2002} study the IS estimation of V@R for heavy-tailed risk factors on the basis of exponential measure changes. \cite{DunkelWeber2007} investigate the estimation of utility-based shortfall risk, combining stochastic approximation and IS. \cite{BrazauskasJonesPuriZitikis2008} focus on the estimation of conditional value at risk, but do not consider IS. The paper proves, for example, the consistency of the estimator and constructs confidence intervals. \cite{SunHong2009} study IS for value at risk and average value risk, exploiting the OCE representation of average value at risk which is due to \cite{RU00}, \cite{RU02}, see also \cite{BTT2007}. Measure changes are selected from an exponential family. \cite{BeutnerZaehle2010} present a modified functional delta method for the estimation of DRMs and derive asymptotic distributions and approximate confidence intervals, mainly motivated from a statistical perspective. They do not consider IS. \cite{PandeyPrashanthSanjay2021} combine a trapezoidal rule and quantile estimation to estimate spectral risk measures. Bounds for the error in probability are proven. IS is not considered.
Estimators of DRMs are often related to $L$-estimators for which the reader is referred to 
\cite{Stigler1974} and \cite{Serfling1980}. 

Surveys on techniques and applications of ML are 
\cite{ShalevSchwartzBenDavid2014} and \cite{MohriRostamizadehTalwalkar2018}. 
Some applications of black box models in finance are discussed in \cite{HuangChaiCho2020}.

\subsection*{Outline}
The paper is structured as follows: In Section 2 we set the scene by briefly introducing DRMs, the considered
quantile estimators, and their asymptotic distribution. The IS method for DRMs is also developed in this section. Section 3 applies the method across various case studies to test its performance. Section 4 discusses an application to asset-liability management of an insurance firm. Auxiliary results are collected in an online appendix. This includes background material on distortion risk measures, asymptotics of quantile estimators in IS, tools from machine learning, some computations and proofs, and additional figures on the basis of data that were obtained in the case studies.

\section{Efficient Estimation of DRM of Black Box Models} \label{sec:EffEstDRM}

\subsection{Setting the Scene} \label{sec:SetScene}

Accurately measuring risk in complex systems is an important task. Let $(\Omega, \mathcal{F}, \mathsf{P})$ be an atomless (i.e., sufficiently rich) probability space and $X:\Omega \to \mathbb{R}^d$ a random vector. The random outcome of the system is modeled by a random variable $Y=h(X)$ for some measurable function $h: \mathbb{R}^d\to \mathbb{R}$. We assume that $Y$ is accessible via a simulation oracle, but that $h$ is highly complex and not analytically tractable. In contrast, the distribution of the random vector $X$ is explicitly known and can appropriately be modified in order to increase the efficiency of the estimation. The assumption is that even if the simulation mechanism for $X$ changes, the function $h$ can still be evaluated, but its evaluation is very costly. The problem is to determine $\rho(h(X))$ by simulation where $\rho$ is a monetary risk measure. Our sign convention is that $h(X)$ counts losses as positive and gains as negative, as is customary in actuarial science. More specifically, we suppose that $\rho = \rho_g$ is a DRM associated to a distortion function $g:  [0,1] \to [0,1]$ of the form
$ \rho_g(Y) = \int_{-\infty}^0 [ g \left( \mathsf{P}(Y > y) \right) - 1 ] dy + \int_0^\infty g(\mathsf{P}(Y > y)) dy. $ For further details, we refer to Appendix~\ref{app:RM}. By \cite{Dhaene2012}, see also \cite{BettelsKimWeber2022}, DRMs can be written as mixtures of quantiles, i.e., 
\begin{align}
    \rho_g(Y) = c_1  \int_{[0,1]} q_Y^+ ( 1 -u ) dg_1(u) + c_2 \int_{[0,1]} q_Y(1 - u) dg_2(u), \label{form:DRMDecomp}
\end{align}  
where $g_1, g_2$ are right- resp. left-continuous distortion functions, $c_1 + c_2 = 1$, $c_1, c_2 \in[0,1]$, $g = c_1 g_1 + c_2 g_2$,
and $q_Y^+(u) = \sup \{ y \vert F_Y(y) \leq u\}$, $q_Y(u) = \inf\{ y \vert F_Y(y) \geq u \}$. Eq.~\eqref{form:DRMDecomp} is the starting point for the Monte Carlo simulation scheme.

The risk estimation problem has two aspects:  The quantiles, which appear as integrands in eq.~\eqref{form:DRMDecomp}, must be simulated efficiently, and the integrals must be discretized. We propose an importance sampling technique for the quantile estimation procedure based on ML estimation of the function $h$; in addition, we devise a strategy for allocating samples along the discretization to achieve good performance. The next sections explain how to design and to implement the following algorithm for the Monte Carlo estimation of DRMs.

\begin{algorithm}[h!]
\begin{algorithmic}[1]
    \State {\bf Input: } Distortion function $g$, pivot sample size $M$, sample size $N$, size of partition $m$.
    \State {\bf Output: Estimation of $\rho_g(Y)$}
    \Function{Main:}{}
        \State Set $\alpha_i = i \alpha /m$ for $i \in \{0, \dots, m\}$  and $\alpha_{m+1} = 1$;
        \State Sample $X \gets (X_1, \dots, X_M)$ from $F$ and set $Y \gets (h(X_1), \dots, h(X_M))$;
        \For{$i\in \{0, \dots, m\}$}{}
            \State Set $aux \gets$ empirical quantile of sample $Y$ at level $1- \alpha_{i}$;
            \State Set $\vartheta_i$ such that 
            $$ aux =  \frac{\sum_{j=1}^M Y_i \exp ( \vartheta_i Y_j)}{ \sum_{j=1}^M \exp( \vartheta_i Y_j)}; $$
        \EndFor
        \For{$i\in \{0, \dots, m\}$}{}
            \State Set $aux \gets$ empirical quantile of sample $Y$ at level $1- \alpha_i$;
            \State Set $aux\_c \gets \frac{1}{M} \sum_{j=1}^M \frac{dF}{dF_{\vartheta_i}} (X_j) \mathds{1}_{\{Y_j > 
            aux \}};$            
            \State Set $c_i$ such that
            $$ c_i \gets \frac{aux\_c - \alpha_i^2}{G^\prime(aux)} \cdot ( g(\alpha_{i+1}) - g(\alpha_i));$$
        \EndFor
        \For{$i\in \{0, \dots, m\}$}{}
            \State Set 
            $$ p_i \gets  \frac{\sqrt{c_i}}{\sum_{i=0}^m \sqrt{c_i}};$$
        \EndFor
        \State Choose $\hat{h}$ as the regression selected by a $k$-fold validation and calibration from $X, Y$;
        \State Set $F_i$ for $i \in \{0,1, \dots, m \}$ such that 
        $$ dF_i = \exp \left( \vartheta_i \hat{h}(x) - \hat{\psi}(\vartheta_i) \right) dF; $$
    \State Sample $\theta_1, \dots, \theta_N$ as i.i.d. copies of $\theta$
        such that $\mathsf{P}(\theta= i) = p_i$ for $i \in \{0,1, \dots, m \}$;
    \State Sample $X^\prime \gets (X^\prime_1, \dots, X^\prime_N)$ such that $X_i \sim F_{\theta_i}$ and set 
    $Y^\prime \gets (h(X_1^\prime), \dots, h(X^\prime_N))$;
    \State Set $estimate \gets 0$;
    \For{$i\in \{0, \dots, m\}$}{}
        \State {\bf Option 1:} Compare the variances of $\hat{q}_{F_i, N_i}( 1 - \alpha_i)$ and $\hat{q}_{F^\ast, N}(1 - \alpha_i)$;
        \State \hspace{19mm} Set $\hat{q}_Y(1 - \alpha_i)$ as the better performing estimator;
        \State {\bf Option 2:} Set $\hat{q}_Y(1 - \alpha_i) \gets \hat{q}_{F^\ast, N}(1 - \alpha_i)$; 
        \State Set $estimate \gets estimate + \hat{q}_Y(1 - \alpha_{i}) \cdot (g(\alpha_{i+1}) - g(\alpha_i))$;
    \EndFor 
    \State {\bf Return:} $estimate$;
    \EndFunction
\end{algorithmic}
\caption{Importance Sampling DRM Estimation Algorithm}
\label{algo:ISDRMEsti}
\end{algorithm}

\subsection{Quantile Estimation with Importance Sampling} \label{subsec:QuantEstIS}

We begin with a quantile estimation technique that incorporates importance sampling, as proposed and studied in \cite{Glynn1996} and \cite{AhnShy2011}. For this we will first assume that the function $h$ is known; ML techniques in the simulation are discussed in Section~\ref{subsec:ImpIssues}.
Let $F$ be the distribution function of $X$, and $F^\ast$ some other distribution function on $\mathbb{R}^d$ such that $F$ is absolutely continuous with respect to $F^\ast$. We are interested in a $u$-quantile of $Y=h(X)$ for $u\in (0,1)$, if $X$ has distribution function $F$. Sampling $(X_i)_{i=1,2, \dots, N}$ independently from $F^\ast$, for any $u\in (0,1)$ a quantile estimator 
of $q_Y(u)$ is given by
\begin{align}
    \hat{q}_{F^\ast, N}(u) := \inf \left\{ x \in \mathbb{R} \left\vert \frac{1}{N} \sum_{h(X_i) > x } 
    \frac{dF}{dF^\ast} (X_i) \leq 1 - u \right\} \right., \quad u \in (0,1). 
    \label{form:QuantEstDefi}
\end{align} 
Conditions for the asymptotic normality of this estimator are provided in \cite{Glynn1996} and \cite{AhnShy2011} and stated in Appendix~\ref{app:QuantEstCLT}. More precisely,  denoting by $G$ and $G^\ast$ the distribution functions of $Y=h(X)$, if X is respectively distributed according to $F$ and $F^\ast$, \cite{AhnShy2011} show the following result.

\begin{theo}\label{theo:CLT}
	Suppose that Assumption~\ref{ass:ISDist} holds.
	Then for $u \in (0,1)$ we have
	$$ \sqrt{N} \left( \hat{q}_{F^\ast, N} (u) - q_Y (u) \right) 
	\overset{d}{\underset{N \rightarrow \infty}{\longrightarrow}}  \mathcal{N}
	\left( 0, \frac{\mathsf{E}_{F^\ast} \left[ \frac{dF}{dF^\ast}(X)^2 \mathds{1}_{\{h(X) \in( q_Y(u), \infty)\}} 
	\right] - (1 - u)^2}{G^\prime(q_Y(u))^2} \right).$$
\end{theo}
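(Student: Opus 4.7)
The plan is to reduce the CLT for the quantile estimator to a CLT for the associated weighted empirical distribution function via the standard quantile-inversion trick. Define
$$\hat{G}_N^\ast(x) := 1 - \frac{1}{N} \sum_{i=1}^N \frac{dF}{dF^\ast}(X_i) \mathds{1}_{\{h(X_i) > x\}},$$
which, under sampling from $F^\ast$, is an unbiased estimator of $G(x)$ at each fixed $x$. Since $x \mapsto \mathds{1}_{\{h(X_i) > x\}}$ is non-increasing and right-continuous, $\hat{G}_N^\ast$ is non-decreasing and right-continuous in $x$, so that $\hat{q}_{F^\ast, N}(u) = \inf\{x : \hat{G}_N^\ast(x) \geq u\}$ and hence
$$\bigl\{\hat{q}_{F^\ast, N}(u) \leq y\bigr\} = \bigl\{\hat{G}_N^\ast(y) \geq u\bigr\}.$$

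Next I would fix $t \in \mathbb{R}$, set $x_N := q_Y(u) + t/\sqrt{N}$, and use this identity to write
$$\mathsf{P}\!\left(\sqrt{N}\bigl(\hat{q}_{F^\ast, N}(u) - q_Y(u)\bigr) \leq t\right) = \mathsf{P}\!\left(\sqrt{N}\bigl(\hat{G}_N^\ast(x_N) - G(x_N)\bigr) \geq \sqrt{N}\bigl(u - G(x_N)\bigr)\right).$$
Differentiability of $G$ at $q_Y(u)$, which I expect to be part of Assumption~\ref{ass:ISDist}, yields $\sqrt{N}(u - G(x_N)) \to -G'(q_Y(u))\,t$. It then remains to establish the CLT
$$\sqrt{N}\bigl(\hat{G}_N^\ast(x_N) - G(x_N)\bigr) \xrightarrow{d} \mathcal{N}\!\bigl(0,\, \sigma^2(q_Y(u))\bigr), \qquad \sigma^2(y) := \mathsf{E}_{F^\ast}\!\left[\tfrac{dF}{dF^\ast}(X)^2 \mathds{1}_{\{h(X) > y\}}\right] - (1 - G(y))^2,$$
after which a Slutsky-type step delivers the stated asymptotic normality with variance $\sigma^2(q_Y(u))/G'(q_Y(u))^2$, which agrees with the formula in the theorem since $1-u = 1-G(q_Y(u))$ at a continuity point.

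The main technical obstacle is this last CLT, because the summands $\xi_{i,N} := \frac{dF}{dF^\ast}(X_i)\,\mathds{1}_{\{h(X_i) > x_N\}}$ form a triangular array whose common variance depends on $N$. The plan is to apply a Lindeberg-Feller CLT: $\mathrm{Var}_{F^\ast}(\xi_{1,N}) \to \sigma^2(q_Y(u))$ by dominated convergence using continuity of $G$ at $q_Y(u)$, while the Lindeberg condition is expected to follow from a uniform integrability hypothesis on the squared likelihood ratios $(dF/dF^\ast)^2$, again supplied by Assumption~\ref{ass:ISDist}. An equivalent route would be to prove the CLT first at the fixed point $q_Y(u)$ via the classical Lindeberg-L\'evy theorem and then show
$$\sqrt{N}\bigl(\hat{G}_N^\ast(x_N) - \hat{G}_N^\ast(q_Y(u)) - G(x_N) + G(q_Y(u))\bigr) \xrightarrow{\mathsf{P}} 0$$
by a second-moment bound on the increment; this is the weighted analogue of the stochastic-equicontinuity step used in the classical proof of the quantile CLT, and I would expect it to be the most delicate piece of the argument.
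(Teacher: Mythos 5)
The paper itself does not prove Theorem~\ref{theo:CLT}: the result is quoted from \cite{AhnShy2011} (generalizing \cite{Glynn1996}), with the hypotheses recorded as Assumption~\ref{ass:ISDist} and commented on in Remark~\ref{rema:ass}. Your reduction --- writing $\hat q_{F^\ast,N}(u)$ as the generalized inverse of the weighted empirical distribution function $\hat G_N^\ast$, using the identity $\{\hat q_{F^\ast,N}(u)\le y\}=\{\hat G_N^\ast(y)\ge u\}$ at $x_N=q_Y(u)+t/\sqrt N$, and invoking the differentiability of $G$ at $q_Y(u)$ from (A3) to get $\sqrt N\,(u-G(x_N))\to -G^\prime(q_Y(u))\,t$ --- is the standard first half of exactly this kind of argument and is correct as stated.

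The gap is in the final step. You assert that the Lindeberg condition for the array $\xi_{i,N}=\frac{dF}{dF^\ast}(X_i)\mathds 1_{\{h(X_i)>x_N\}}$ ``is expected to follow from a uniform integrability hypothesis on the squared likelihood ratios, again supplied by Assumption~\ref{ass:ISDist}.'' No such hypothesis is supplied. Conditions (A4)--(A6) are finite-variation and weighted-integrability conditions on $dG/dG^\ast$ with weight $(1-G^\ast(x-))^{1/2-\lambda}$; as Remark~\ref{rema:ass} explains, they were introduced by \cite{AhnShy2011} precisely to \emph{replace} Glynn's moment condition $\mathsf E_{G^\ast}[(dG/dG^\ast)^3]<\infty$, under which your dominated-convergence argument would indeed close. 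Extracting from (A4)--(A6) the finiteness of $\mathsf E_{F^\ast}\bigl[(dF/dF^\ast)^2\mathds 1_{\{h(X)>q_Y(u)-\delta\}}\bigr]$ and the required uniform integrability (equivalently, the stochastic equicontinuity of the weighted empirical process near $q_Y(u)$ in your alternative route) is the entire technical content of the cited theorem, and is handled there by weighted empirical process methods rather than Lindeberg--Feller; your sketch leaves this bridge unbuilt. A second, smaller issue: Assumption~\ref{ass:ISDist} constrains only the output laws $G,G^\ast$, while your array and the limiting variance involve $\frac{dF}{dF^\ast}(X)$, whose conditional second moment given $h(X)$ dominates that of $\frac{dG}{dG^\ast}(h(X))$ by Jensen's inequality; so even granting the moment control at the level of $G$, an additional argument at the level of $F$ is needed to make the Lindeberg step legitimate.
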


This theorem can now be leveraged to construct a sampling distribution ${F^\ast}$ that improves the efficiency of the Monte Carlo simulation. A classical choice for large, rare outcomes are exponential tilts. In our setting, $X$ is sampled, and we are interested in large outcomes of $Y=h(X)$. We thus consider the candidate family of sampling distribution
\begin{align}
    dF_\vartheta (x) = \exp \left( \vartheta h(x) - \psi(\vartheta) \right) dF(x), \label{form:SampleDist}
\end{align}
where $\vartheta \in \Theta \subseteq \mathbb{R}$ for some suitable neighbourhood $\Theta$ of $0$,  $\psi(\vartheta) := \log \left( \mathsf{E}_{F} [\exp ( \vartheta h(X))] \right)$. In order to minimize the variance in Theorem~\ref{theo:CLT}, \cite{SunHong2009} minimize a suitable upper bound and obtain the condition
\begin{align}
    q_Y(u) = \mathsf{E}_{F_\vartheta} [h(X)], \label{form:OptiTheta}
\end{align} 
which is used to determine a good parameter $\vartheta$. They prove that, under suitable technical conditions, the corresponding measure change reduces the variance of the estimator. We review this in Appendix~\ref{app:QuantEstCLT}. The implementation of eq.~\eqref{form:OptiTheta} requires knowledge of the quantile $q_Y(u)$ which is the value we seek to estimate. Moreover, the exact structure of $h$ and $\psi(\vartheta)$ are unknown. An algorithmic approach based on ML and MCMC to overcome these challenges is presented in Section~\ref{subsec:ImpIssues}.

\subsection{Discretization and Optimal Allocation of the Sampling Budget} \label{subsec:OptAllo}

The estimation of eq.~\eqref{form:DRMDecomp} requires a discretization of the two integrals involving the left- and right-continuous distortion functions. This distinction is only relevant, if $q_Y^+ ( 1 -u ) \neq q_Y(1 - u)$ for some $u$. In the numerical implementation, we suppose that  $q_Y^+ ( 1 -u ) = q_Y(1 - u)$ for all $u$ that appear in the discretization. This condition is consistent with the application of Theorem~\ref{theo:CLT} and Assumption~\ref{ass:ISDist} that forms the basis of the quantile approximation that we use. The requirement is essentially that the distribution function grows locally in a neighborhood of the quantiles.
We consider the approximation 
\begin{equation}\label{eq:dicretization}
    \hat{\rho}_g(Y ) = \sum_{i=0}^m \hat{q}_{F_{\vartheta_i^\ast}, N_i}(1 - \alpha_i) ( g(\alpha_{i+1} )- g(\alpha_i))
\end{equation}
of $\rho_g(Y)$,
defined for a partition $0 = \alpha_0 < \alpha_1 < \dots < \alpha_m < \alpha_{m+1} = 1$
where $\hat{q}_{F_{\vartheta_i^\ast}, N_i}(1- \alpha_i)$ are IS estimators according to Section~\ref{subsec:QuantEstIS}
with the sampling distributions $F_{\vartheta_i^\ast}$ and an allocated sampling budget $N_i$. For example, the partition $(\alpha_i)_{i=0,1, \dots, m}$ could be chosen uniformly in the region where $g$ grows, or one could choose $\alpha_i= g^{-1} \left(\frac{i}{m+1}\right)$, $i=0,1, \dots, m+1$, to adequately cover the levels where $g$ places weight. We suppose that the technical Assumption~\ref{ass:ISDist} is satisfied and that for each $i$ the number of samples $N_i$ is chosen large enough so that the $\hat{q}_{F_{\vartheta_i^\ast}, N_i}
(1 - \alpha_i)$ is finite according eq.~\eqref{prop:finiteQuantEst}. Using the notation $$ \hat{q}_Y ( 1 -u ) := \sum_{i=0}^m \mathds{1}_{\{ u \in [ \alpha_i, \alpha_{i+1})\}} 
\hat{q}_{F_{\vartheta_i^\ast}, N_i}(1 - \alpha_i) $$ for the estimator of the quantile function, we obtain
$ \hat{\rho}_g(Y) = \int_0^1 \hat{q}_Y(1 -u ) dg(u).$
Two questions have to be considered: First, how should the available $N$ samples be allocated to each quantile at the different levels? Second, should individual IS be used for each quantile, or should a single common measure change for pooled samples be preferred?

\subsubsection{Sample Allocation to Quantiles}\label{sec:saq}

Using Jensens's inequality, Fubini's theorem and Theorem~\ref{theo:CLT}, the MSE of the estimator can approximately be bounded above as follows:
{\small
\begin{align}
& \mathsf{E} \left[ (\rho_g(Y) - \hat{\rho}_g(Y))^2 \right] 
= \mathsf{E} \left[ \left( \int_0^1 q_Y(1 -u ) - \hat{q}_Y(1-u) dg(u) \right)^2 \right]\leq 
     \int_0^1 \mathsf{E} [ (q_Y(1 -u) - \hat{q}_Y(1 -u))^2 ] dg(u)\notag  \\&
    \approx \sum_{i=0}^m \int_{\alpha_i}^{\alpha_{i+1}} ( q_Y(1 -u) - q_Y(1 - \alpha_i))^2 dg(u) +
	\underbrace{\frac{\mathsf{E}_{\vartheta_i^\ast} \left[ \frac{dF}{dF_{\vartheta_i^\ast}}(X)^2 \mathds{1}_{\{h(X) > q_Y(1 - \alpha_i)\}} 
	\right] - \alpha_i^2}{N_i G^\prime(q_Y(1 - \alpha_i))^2}
    }_{:= \frac{V \left(1 - \alpha_i, F_{\vartheta_i^\ast} \right)}{N_i}}
	(g(\alpha_{i+1} ) - g(\alpha_i) ) 	  \notag \\
    & =: \mathcal{E}(F_{\vartheta_0^\ast}, F_{\vartheta_1^\ast}, \dots, F_{\vartheta_m^\ast}, \tilde{N}) \mbox{ with } \tilde{N}= (N_i)_{i=0,1, \dots, m}.  \label{eq:ErrorBound}
\end{align}} 
In the latter sum only the second summand  $\frac{V \left(1 - \alpha_i, F_{\vartheta_i^\ast} \right)}{N_i} (g(\alpha_{i+1}) - g(\alpha_i))$ depends
on $N_i$. Hence, when minimizing the approximate upper bound, the optimal allocation is obtained by minimizing $\sum_{i=0}^m \frac{V \left(1 - \alpha_i, F_{\vartheta_i^\ast} \right)}{N_i} (g(\alpha_{i+1}) - g(\alpha_i))$ under the constraint $\sum_{i=0}^m N_i = N$. This leads (up to rounding) to the solution 
\begin{equation}\label{theo:OptAllo}
    N_i^\ast =N \frac{\sqrt{c_i}}{\sum_{j=0}^m \sqrt{c_j}}, \quad i=0,1, \dots, m,
\end{equation}  
where 
$c_j := V \left(1 - \alpha_j, F_{\vartheta_j^\ast} \right)(g(\alpha_{j+1}) - g(\alpha_j)) , \; j \in \{0,1, \dots, m \}.$ The derivation of this result can be found in Appendix~\ref{app:OptAllo}. 

If the total sample size $N$ is not known in advance, eq.~\eqref{theo:OptAllo} determines the fraction $p_i:= \frac{N_i^\ast}{N}$ of the samples generated for each quantile. The total collection of all samples for these quantiles can also be viewed as samples from the mixture sampling distribution $F^\ast := \sum_{i=0}^m p_i \cdot F_{\vartheta_i^\ast}$, 
where $F_{\vartheta_i^\ast}$ are the sampling distributions for each individual quantile constructed in Section~\ref{subsec:QuantEstIS}.

\subsubsection{Efficient Use of the Samples in the Estimation of Multiple Quantiles}\label{subsec:EffUse}

The estimation of DRMs according to eq.~\eqref{eq:dicretization} requires the estimation of quantiles at the levels $1-u$ for $u= \alpha_0, \dots, \alpha_m$. We discuss whether individual importance sampling should be used, or a single common measure change for pooled samples is preferred. We assume in our comparison that the generation of individual samples is costly, but that the evaluation of the quantile estimators for given samples is comparatively inexpensive. We suppose that samples are allocated to the individual quantiles according to eq.~\eqref{theo:OptAllo} and that $F^*$ is the mixture sampling distribution in Section~\ref{subsec:OptAllo}. For each $i$, estimators of $q_Y(1-\alpha_i)$ are $\hat q_{F_{\vartheta_i^*},N_i^*} (1-\alpha_i)$ and $\hat q_{F^*,N} (1-\alpha_i)$ with approximate variances 
\begin{align*}
	\frac{V(1 - \alpha_i, F_{\vartheta_i^\ast})}{N^*_i} &= \frac{\mathsf{E}_{F_{\vartheta_i^\ast}} 
    \left[ \frac{dF} {dF_{\vartheta_i^\ast}}(X)^2 \mathds{1}_{\{h(X) > q_Y(1 - \alpha_i) \}} \right] 
    - \alpha_i^2 } {p_i \cdot N \cdot  G^\prime(q_Y(1 -\alpha_i))^2}, \\
	\frac{V(1 - \alpha_i, F^\ast)}{N} &= \frac{\mathsf{E}_{F^\ast} \left[ \frac{dF}{dF^\ast} (X)^2 
    \mathds{1}_{\{h(X) > q_Y(1 - \alpha_i) \}} \right] - \alpha_i^2} {N G^\prime (q_Y(1 - \alpha_i))^2},
\end{align*}
respectively, if the conditions of Theorem~\ref{theo:CLT} are satisfied. Hence, by comparing 
\begin{align*}
	\mathsf{E}_{F} \left[ \frac{dF}{dF_{\vartheta_i^\ast}} (X) \mathds{1}_{\{h(X) > q_Y(1 - \alpha_i) \}} \right]  - \alpha_i^2 \quad
	\text{to} \hspace{3mm} p_i \cdot \left(
	\mathsf{E}_{F} \left[ \frac{dF}{dF^\ast} (X) 
    \mathds{1}_{\{h(X) > q_Y(1 - \alpha_i) \}} \right] - \alpha_i^2 \right),  
\end{align*}
the preferred estimator can be selected. For simplicity, we propose using the mixture distribution for all quantile levels and implement it in all case studies in Sections~\ref{sec:CaseStudies} \& \ref{sec:ALM}. The corresponding estimator is 
\begin{equation*}
    \hat{\rho}_g(Y ) = \sum_{i=0}^m \hat{q}_{F^\ast, N}(1 - \alpha_i) ( g(\alpha_{i+1} )- g(\alpha_i)).
\end{equation*}

\subsection{Machine Learning and Implementation} \label{subsec:ImpIssues}

The objective of the simulation is to estimate $\rho_g(Y)$ for $Y= h(X)$ -- with the main challenge being the high cost of evaluating $h$. In order to apply importance sampling to this situation, we propose an algorithm with the following steps: First, pivot samples are used to compute parameters that govern importance sampling on the basis of exponential changes of measure. Second, the costly function $h$ is approximated by some auxiliary function $\hat h$ that simplifies the measure changes and allows to accelerate the generation of the corresponding samples. This step typically involves acceptance-rejection methods where the auxiliary function $\hat h$ allows to avoid the costly evaluation of $Y=h(X)$. Third, quantile estimators are computed for these samples for the original random variable $Y = h(X)$.
We discuss additional design nuances and details of the approach in Appendix \ref{app:Algo1}.
Let $(X_1, h(X_1)), \dots, (X_M, h(X_M))$ be pivot samples. According to eq. \eqref{form:OptiTheta}, an estimator $\hat{\vartheta}^\ast_i$ of  ${\vartheta}^\ast_i$ can be obtained by solving
$   \hat{q}_{F, M}(1 - \alpha_i ) = \frac{\sum_{i=1}^M h(X_i) \exp( \hat{\vartheta}^\ast_i h(X_i))}{
    \sum_{i=1}^M \exp ( \hat{\vartheta}^\ast_i h(X_i))}
$  numerically;
here, $\hat{q}_{F,M}(1 - \alpha_i)$ signifies the crude quantile estimator. A plug-in estimator of $\psi(\vartheta^*_i)$ is, for example, given by
$ \hat{\psi}_i = \log \left( \frac{1}{M} \sum_{j=1}^M \exp( \hat{\vartheta}^\ast_i h(X_j)) 
\right).$ 

To facilitate the generation of additional samples, we use ML-techniques trained on the pivot samples to approximate $h$ by a less costly function $\hat{h}$. More specifically, we consider linear and polynomial predictors, linear, polynomial and Gaussian support vector machines (SVMs) and $k$-nearest neighbor ($k$-NN) regressions in numerical case studies. To compare methods and parameters, $k$-fold validation is applied; we determine an approximation that yields the smallest MSE across splits of the 
training sample. For a brief review of the methods see Appendix \ref{app:MLTools} and 
\cite{ShalevSchwartzBenDavid2014}.

Importance sampling is based on the approximation $\hat{h}$, i.e., in \eqref{form:SampleDist} we use a Radon-Nikodym density proportional to $\exp \left( \hat{\vartheta}^\ast_i \hat{h}- \hat{\psi}_i\right) =: \hat{f}_{\hat{\vartheta}^\ast_i}$. The latter function might not itself be a probability density due to a potentially incorrect normalization, since  $\hat{\psi}_i$ was estimated from $h$ instead of $\hat{h}$. The correct normalization constant could be estimated at this stage, or Markov Chain Monte Carlo (MCMC) can directly be used to genererate more samples. Unless the proposal kernel is the independence kernel, MCMC does typically not preserve the independence of simulations, but appears to be quite efficient in our case studies. The Metropolis-Hastings algorithm with random walk proposal allows to either produce samples from a density proportional to $\hat{f}_{\hat{\vartheta}^\ast_i}$ 
or proportional to the mixture
$p_0 \hat{f}_{\hat{\vartheta}_0^\ast}  + \dots + p_m \hat{f}_{\hat{\vartheta}_m^\ast}$ with
 $(p_i)_{i=0,1, \dots, m}$ according to Section \ref{sec:saq}.

For the importance sampling quantile estimator (\ref{form:QuantEstDefi}) we need to evaluate the likelhood ratio which requires knowledge of the normalizing constants. To be more precise, we have
$
    \frac{dF}{d\hat{F}_{\hat{\vartheta}^\ast_i}} = \frac{z_i}{ \exp \left( \hat{\vartheta}^\ast_i \hat{h}(x) 
    - \hat{\psi}_i \right) },$ $
    z_i = \int \exp \left( \hat{\vartheta}^\ast_i \hat{h}(x) - \hat{\psi}_i  \right) F(dx)  
$ and
$    \frac{dF}{d \left( \sum_{i=0}^m p_i \hat{F}_{\hat{\vartheta}^\ast_i}\right) } 
    = \frac{z}{\sum_{i=0}^m p_i \exp \left( \hat{\vartheta}^\ast_i \hat{h}(x) - 
    \hat{\psi}_i \right) } $,
 $   z = \sum_{i=0}^m p_i z_i.$
The normalizing factors $z_i$ and $z$ could be approximated using simulations, but this might be costly given the desired accuracy. In low dimensions, we
can alternatively either use a trapezoidal rule 
with the $N+M$ samples as grid points or apply an adaptive quadrature rule explained in
\cite{Shampine2008}.  Since $F$ is the original distribution of the factor $X$, a suitable original model design might also ensure the applicability of such an approach for high dimensions $d$. Another strategy for estimating the normalizing constant relies on estimating the density function 
from the samples drawn; in this case, we assume that $F$ has density $f$ with respect to $d$-dimensional Lebesgue measure. Consider, for example, the mixture distribution, and let $\hat{f}_{\rm{mix}}$ be the estimated density, e.g., via kernel density 
estimation. Then for all $x \in \mathbb R^d$ we have that $z \approx \frac{\sum_{i=0}^m p_i 
\exp( \hat{\vartheta}^\ast_i \hat{h}(x)) f(x)}{\hat{f}_{\rm{mix}}(x)}$. Thus, $z$ can be estimated by computing the right hand side for several $x$ and taking an average. In the implementations we chose for each application the method which performed best in test cases. While the suggested approach works quite well in the considered numerical experiments, future research needs to further optimize the algorithm to guarantee good performance for high-dimensional random vectors $X$. A successful strategy could be to choose tractable pairs of ML-hypotheses classes on the one hand and the factor sampling distribution $F$ on the other hand that facilitate the implementation of measure changes.

\subsection{Time Efficiency of IS and Crude Estimations} \label{sec:TimeEff}

The proposed IS algorithm has higher computational complexity than a crude estimator of the same DRM. In this section, we identify conditions under which the IS algorithm is preferable. Specifically, we examine its efficiency when samples are drawn i.i.d. from the mixture IS distribution and analyze the error bound in (\ref{eq:ErrorBound}), while individual IS is deferred to Appendix \ref{app:TimeEff}. As we will see, when the IS estimator attains a smaller MSE and the evaluation of $h$ is costly, its additional computational overhead is offset by its efficiency gain.\\
We denote by $T_{CR}(N_{CR}, m)$ and $T_{IS}(M, N_{IS}, m)$ the computation times for estimating the DRM $\rho_g(Y)$. Here, $N_{CR}$ is the sample size used in the crude estimator, while $M$ and $N_{IS}$ denote the pivot and importance sampling sample sizes, respectively. The parameter $m$ represents the partition size. We assume that $N_{CR}, M,$ and $N_{IS}$ are chosen so that both methods satisfy the same error bound in (\ref{eq:ErrorBound}), and that $N_{CR} > M + N_{IS}$. This assumption is stated formally in the following lemma.
\begin{lem} \label{lem:TimeEff}
   Let $F^\ast = \sum_{i=0}^m p_i F_i$, where $F_i$ are distribution functions for $i \in \{0,1,\dots,m\}$. Assume that $F$ is absolutely continuous with respect to each $F_i$, and that Assumptions \ref{ass:ISDist} hold. If samples are drawn i.i.d. from $F^\ast$, then
\[
   \mathcal{E}(F,\dots,F,\tilde{N}_{CR}) 
   = \mathcal{E}(F^\ast,\dots,F^\ast,\tilde{N}_{IS})
   \;\;\Longleftrightarrow\;\;
   N_{CR} = N_{IS}\,
   \frac{\sum_{i=0}^m V(1-\alpha_i,F)\,\bigl(g(\alpha_{i+1})-g(\alpha_i)\bigr)}
        {\sum_{i=0}^m V(1-\alpha_i,F^\ast)\,\bigl(g(\alpha_{i+1})-g(\alpha_i)\bigr)}.
\]
\end{lem}
\begin{proof}
    When all samples from $F^\ast$ are used
    for the quantile estimation, we have $\tilde{N}_{CR} = (N_{CR}, \dots, N_{CR})^T$
    and $\tilde{N}_{IS} = (N_{IS}, \dots, N_{IS})^T$. Hence:
    \begin{align*}
        & \mathcal{E}(F, \dots, F, \tilde{N}_{CR}) 
        = \mathcal{E}(F^\ast, \dots, F^\ast, \tilde{N}_{IS}) \\
        \Leftrightarrow \quad 
        & \frac{1}{N_{CR}} \sum_{i=0}^m V(1 - \alpha_i, F)\, \bigl(g(\alpha_{i+1}) - g(\alpha_i)\bigr) 
         \quad = \frac{1}{N_{IS}} \sum_{i=0}^m V(1 - \alpha_i, F^\ast)\, \bigl(g(\alpha_{i+1}) - g(\alpha_i)\bigr) \\
        \Leftrightarrow \quad 
        & N_{CR} 
        = N_{IS} \cdot \frac{\sum_{i=0}^m V(1 - \alpha_i, F)\, \bigl(g(\alpha_{i+1}) - g(\alpha_i)\bigr)}%
        {\sum_{i=0}^m V(1 - \alpha_i, F^\ast)\, \bigl(g(\alpha_{i+1}) - g(\alpha_i)\bigr)}.
    \end{align*}
\end{proof}
For comparing the computational costs of the crude and IS algorithms, we define the following quantities:
\begin{itemize}
    \item[] $t_S(N)$: computation time required to sample $(X_i)_{i \in \{1,\dots,N\}}$ from $F$.
    \item[] $t_h(N)$: computation time required to evaluate $(h(X_i))_{i \in \{1,\dots,N\}}$.
    \item[] $t_\rho(N,m)$: computation time required to compute the quantile estimates and $\hat{\rho}_g(Y)$.
    \item[] $t_{\mathrm{MIX}}(M,m)$: computation time required to compute $\vartheta_i$, $\psi(\vartheta_i)$, and $p_i$ for $i \in \{0,\dots,m\}$ from the samples $(X_i)_{i \in \{1,\dots,M\}}$.
    \item[] $t_{\mathrm{kFold}}(M)$: computation time required for $k$-fold validation based on the samples $(X_i)_{i \in \{1,\dots,M\}}$.
    \item[] $t_{\mathrm{MH}}(N)$: computation time required to sample $(X_i)_{i \in \{1,\dots,N\}}$ from the IS distribution using the Metropolis--Hastings algorithm.
    \item[] $t_{\mathrm{Norm}}(m)$: computation time required to estimate the normalizing constant.
\end{itemize}
We assume that the computation times are increasing in their arguments, and that 
$t_S(\cdot)$ and $t_h(\cdot)$ are linear functions. With these definitions, we can 
express the estimation methods together with their computation times. For the 
purpose of comparison, the estimation methods are decomposed into the following steps:
\footnotesize \\
\begin{minipage}[t]{0.41\textwidth}
    \vspace{3mm}
    \underline{Crude Method:}
\begin{itemize}[leftmargin=*, labelwidth=4.5em, align=left]
    \item[$t_S(N_{CR})$:] Draw samples $(X_i)_{i \in \{1, \dots, N_{CR}\}}$ from $F$.
    \item[$t_h(N_{CR})$:] Evaluate $h$ to obtain $(h(X_i))_{i \in \{1, \dots, N_{CR}\}}$.
    \item[$t_\rho(N_{CR}, m)$:] Compute the quantile estimators 
    $\hat{q}_{F, N_{CR}}(1-\alpha_i)$, $i \in \{0, \dots, m\}$, and the estimate 
    $\hat{\rho}_g(Y)$.
\end{itemize}
\vspace{3mm}
\end{minipage} \hspace{10mm}
\begin{minipage}[t]{0.41\textwidth}
    \vspace{3mm}
\underline{IS Method:}
\begin{itemize}[leftmargin=*, labelwidth=4.5em, align=left]
    \item[$t_S(M)$:] Draw samples $(X_i)_{i \in \{1, \dots, M\}}$ from $F$.
    \item[$t_h(M)$:] Evaluate $h$ to obtain $(h(X_i))_{i \in \{1, \dots, M\}}$.
    \item[$t_{Mix}(M, m)$:] Estimate $\vartheta_i^\ast$, $\psi(\vartheta_i^\ast)$, and $p_i$ for 
    $i \in \{0, \dots, m\}$.
    \item[$t_{kFold}(M)$:] Perform $k$-fold validation based on $(h(X_i))_{i \in \{1, \dots, M\}}$.
    \item[$t_{MH}(N_{IS})$:] Generate samples $(X_i^\prime)_{i \in \{1, \dots, N_{IS}\}}$ from the IS distribution.
    \item[$t_h(N_{IS})$:] Evaluate $h$ to obtain $(h(X_i^\prime))_{i \in \{1, \dots, N_{IS}\}}$.
    \item[$t_{Norm}(m)$:] Estimate the normalizing constant.
    \item[$t_\rho(N_{IS}, m)$:] Compute the quantile estimators $\hat{q}_{F^\ast, N_{IS}}(1-\alpha_i)$,
    $i \in \{0, \dots, m\}$, and the estimate $\hat{\rho}_g(Y)$.
\end{itemize}
    \vspace{3mm}
\end{minipage}
\normalsize
\newline
The total computation times of the crude and IS estimations are
\begin{align*}
    T_{CR}(N_{CR}, m) 
        &:= t_S(N_{CR}) + t_h(N_{CR}) + t_\rho(N_{CR}, m), \\
    T_{IS}(M, N_{IS}, m) 
        &:= t_S(M) + t_h(M) + t_{Mix}(M, m) + t_{kFold}(M) + t_{MH}(N_{IS}) \\
        &\quad + t_h(N_{IS}) + t_{Norm}(m) + t_\rho(N_{IS}, m) \\
        &= t_S(M) + t_h(M+N_{IS}) + t_{Mix}(M, m) + t_{kFold}(M) \\
        &\quad + t_{MH}(N_{IS}) + t_{Norm}(m) + t_\rho(N_{IS}, m).
\end{align*}
\begin{prop} \label{prop:TimeEff}
Suppose $N_{CR} > M+N_{IS}$. If
\[
    t_h\bigl(N_{CR}-(M+N_{IS})\bigr) 
        > t_{Mix}(M,m) + t_{kFold}(M) + t_{MH}(N_{IS}) + t_{Norm}(m),
\]
then
\[
    T_{CR}(N_{CR}, m) - T_{IS}(M, N_{IS}, m) > 0,
\]
that is, the crude method requires more computation time than IS.
\end{prop}
\begin{proof}
We may write
\begin{align*}
    T_{CR}(N_{CR},m) - T_{IS}(M,N_{IS},m)
    &= t_S(N_{CR}-M) 
       - t_h\!\left(N_{CR}-(M+N_{IS})\right) \\
    &\quad - t_{Mix}(M,m) + t_{kFold}(M) + t_{MH}(N_{IS}) + t_{Norm}(m) \\
    &\quad + \bigl[t_\rho(N_{CR},m) - t_\rho(N_{IS},m)\bigr].
\end{align*}
Since $N_{CR} > M+N_{IS}$, the monotonicity of $t_S(\cdot)$ and $t_\rho(\cdot)$ yields 
$t_S(N_{CR}-M) > 0$ and $t_\rho(N_{CR},m) - t_\rho(N_{IS},m) > 0$. 
Hence, if
\[
    t_h\!\left(N_{CR}-(M+N_{IS})\right) 
      > t_{Mix}(M,m) + t_{kFold}(M) + t_{MH}(N_{IS}) + t_{Norm}(m),
\]
then $T_{CR}(N_{CR},m) - T_{IS}(M,N_{IS},m) > 0$, as claimed.
\end{proof}

\section{Case Studies} \label{sec:CaseStudies}

In this section, we apply the developed method to various test models and distributions. The goal is to experimentally evaluate the variance reduction achieved by the proposed algorithm compared to importance sampling in the exact model, which is known in closed form for the test cases. We compare the root mean square errors (RMSEs) when estimating different DRMs that model both risk-averse and risk-seeking attitudes.

\subsection{Simulation Design} \label{subsec:StudiesDesign}

We consider the distortion function $g_{\gamma, \alpha} (u) 
=\mathds{1}_{\{u \in [0, \alpha] \}} \left( \frac{u}{\alpha} \right)^{\gamma } + \mathds{1}_{\{u \in ( 
\alpha, 1] \}}$  with $\gamma \in \{1/2, 1, 2\}$ illustrated in 
Figure~\ref{fig:distFuncIllu}, see Example~\ref{ex:DRM} in the Appendix for more details.
The concave function $g_{1/2,\alpha}$ defines a convex DRM that models a risk-averse attitude. Conversely, the 
 function $g_{2, \alpha}$ is convex on the interval $[0, \alpha]$ and models a risk-seeking attitude. The function $g_{1, \alpha}$ corresponds to the Average Value at Risk (AV@R) at level $\alpha$. 
The AV@R, also known as Expected Shortfall, is particularly important in practice, since it serves as the foundation for various solvency regimes.  For additional details and references see Appendix \ref{app:DRM}.

\begin{figure}[h]
    \centering
    \includegraphics[scale = 0.5]{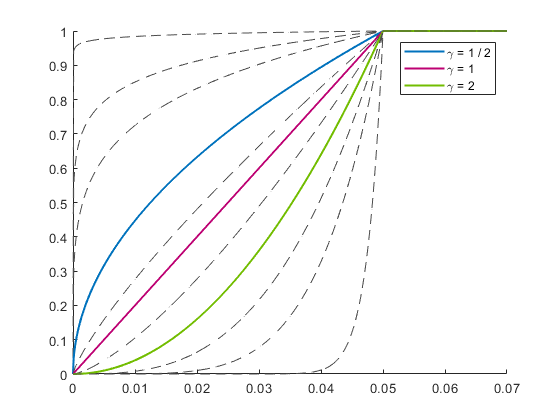}
    \caption{Different distortion functions $g_{\alpha, \gamma} (u)$ with $\alpha = 0.05$ and 
    $\gamma \in \{0.01,0.1, 0.2, 0.5, 0.8, 1, 1.4, 2, 3, 5, 20\}$. The distortion function used in the case studies are highlighted.}
    \label{fig:distFuncIllu}
\end{figure}

In our numerical experiments, we repeat Algorithm~\ref{algo:ISDRMEsti} for DRM estimation $R$ times to obtain data that can be further analyzed. For clear benchmarking, we specify the hypothesis class used in each of these experiments and compare the results across hypothesis classes. Thus, we do not perform the $k$-fold validation in line 15 of Algorithm~\ref{algo:ISDRMEsti} in each repetition, but only recalibrate within any previously selected class. Additionally, we identify the winner of a $k$-fold validation with $k=20$ based on $M=2000$ pivot samples. Numerical tests in the context of our case studies show that this determination of a ML hypothesis class is quite robust, i.e., different sets of $M=2000$ pivot samples typically lead to the selection of the same class.

We consider the following functions and distributions:
\begin{itemize}
    \item[(1)] \textit{Identity of Normal:} 
    We set $X \sim \mathcal{N}(0,1)$ and $h(x) = x$, implying $Y \sim \mathcal{N}(0,1)$. 
    The $k$-fold cross validation from the pivot samples suggests using a linear regression to approximate 
    $h$.
    \item[(2)] \textit{Sum of Normals:} We consider $X_1, X_2 \sim \mathcal{N}(0, 1)$
    with $\mathrm{Corr}(X_1, X_2) = 0.3$ and $h(x_1, x_2) = x_1 + x_2$. The $k$-fold cross validation
    suggests a linear regression. 
    \item[(3)] \textit{Product of Normals:}  
    Let $X_1, X_2 \sim \mathcal{N}(2,1)$ with $\mathrm{Corr}(X_1, X_2) = - 0.3$
    and $h(x_1, x_2) = x_1 \cdot x_2$. 
    The $k$-fold validation identifies the SVM with a polynomial kernel of degree 
    $2$ as the optimal choice for $\hat{h}$. 
    \item[(4)] \textit{Sum of Squared Normals:}
    Consider the independent random variables $X_1, X_2, X_3, X_4 \sim \mathcal{N}(0,1)$ and let
    $h(x_1, x_2, x_3, x_4) = x_1^2 + x_2^2 + x_3^2 + x_4^2$. Then $h(X_1, X_2, X_3, X_4)$ follows
    a $\chi^2$ distribution with $4$ degrees of freedom. The $k$-fold validation 
    suggests a polynomial SVM with degree $2$.     
    \item[(5)] \textit{Sine and Uniform:}
    Set $X \sim \mathrm{unif}(0,1)$ and $h(x) = x \sin(2.5 \cdot \pi \cdot x)$. This
    example is used in \cite{Altman1992} to illustrate $k$-NN 
    regression. The $k$-fold validation suggests polynomial regression with degree  
    $5$ as an approximation of $h$. 
    \item[(6)] \textit{Logistic Transformation and Exponential:}
    Letting $X \sim \exp(1)$, then $h(X)$ with $h(x) = - \log \left( e^{-x} /
    (1 - e^{-x}) \right)$ follows a $\mathrm{Logistic}(0,1)$ distribution. 
    The $k$-fold validation suggests either a SVM with Gaussian kernel or 
    $k$-NN regression with $k = 1$. 
\end{itemize}

For each of these functions and distributions, we perform numerical experiments for different ML hypothesis classes used to approximate $h$ in the IS algorithms. In particular, this analysis is also performed for the winner of the $k$-fold validations. Each experiment is repeated $R$ times. In all cases, we implement a crude estimation with $M + N$ samples as well as an estimation based on the IS method with $M$ pivot samples and $N$ samples from the mixture distribution defined in  Section~\ref{subsec:OptAllo}. As a benchmark, we determine an ``exact value" by a crude estimation with $10,000,000$ samples. From the replications we calculate for all cases an estimated RMSE.

\subsection{Results} \label{subsec:StudiesResults}

\subsubsection{Distribution of the Samples}\label{sec:SampleDistApp}

To illustrate the measure change, we consider model (5) in Figure~\ref{fig:SampleDist}. As mentioned before, option 2 in line 23 of Algorithm~\ref{algo:ISDRMEsti} was implemented in all case studies. An analogous analysis for models (1)-(4) \& (6) can be found in Figures~\ref{fig:SampleDistApp} \& \ref{fig:SampleDistApp3}. We consider the DRMs $\rho_{g_{\alpha, \gamma}}(Y)$, 
$\alpha = 0.05, \gamma \in \{1 / 2, 1 , 2\}$. The figures show the true density of $Y$. Additionally, $200$ samples from the mixture distribution (with values on the x-axis) are plotted along the probability density. The labeled quantile $q_Y(0.05)$ indicates the threshold above which samples are relevant 
for the estimation of $\rho_{g_{\alpha, \gamma}}(Y)$. The crosses mark the expectations of 
the individual importance sampling components of the mixture distribution. By design, samples and expectations are in the right tail of the distribution in the area relevant for the estimation of the DRM.

\begin{figure}
    \centering
      \begin{minipage}[c]{0.32\textwidth}
        \centering
        \includegraphics[width=1.11\textwidth]{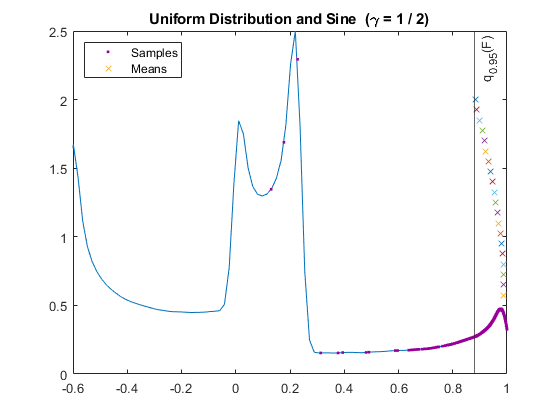}
    \end{minipage}
    \begin{minipage}[c]{0.32\textwidth}
        \centering
        \includegraphics[width=1.11\textwidth]{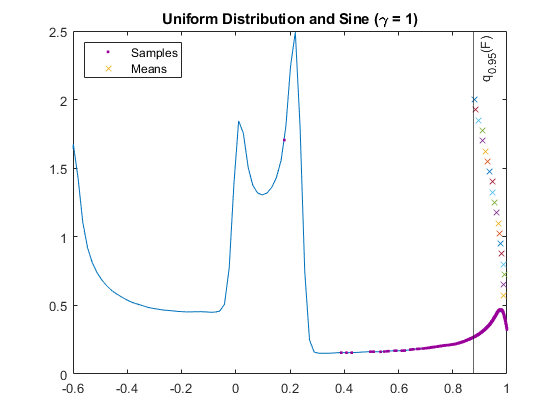}
    \end{minipage}
    \begin{minipage}[c]{0.32\textwidth}
        \centering
        \includegraphics[width=1.11\textwidth]{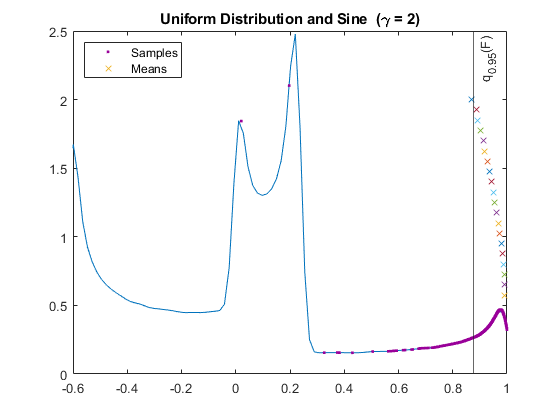}
    \end{minipage}
    \caption{An example of $200$ samples drawn from the mixture distribution plotted on the underlying 
    distribution of the model $Y$ for the case study (5). To approximate the mixture weights and
    optimal mixture components $M = 20,000$ pivot samples were drawn.}
    \label{fig:SampleDist}
\end{figure}

\subsubsection{Efficiency of the Estimations}

In this section, we discuss the efficiency of the algorithm for case studies (1)-(6). The ML approximation used in importance sampling is fixed, and we compare different hypothesis classes. The calibration of the ML regressions and the construction of the importance sampling measure change are based on $M = 2,000$ pivot samples. We choose $m = 20$ and $N = 20,000$ and run $R = 1,000$ replications of the simulation for estimating the RMSE. In Figure~\ref{fig:ErrorRatio} \& \ref{fig:ErrorRatio2}, the ratio of the RMSE between the crude estimate and the
proposed importance sampling method for the DRMs $\rho_{g_{\alpha, \gamma}}$ with $\gamma \in \{1/2, 1, 2\}$ and $\alpha \in [0.01, 0.3]$ for models (1) to (6).
The absolute estimated RMSE for the different estimation methods is shown in Figure~\ref{fig:LevelError} \& \ref{fig:LevelError3} in Appendix~\ref{app:AddPlots}.

\begin{figure}
    \centering
    \begin{minipage}[c]{0.32\textwidth}
        \centering
        \includegraphics[width=1.11\textwidth]{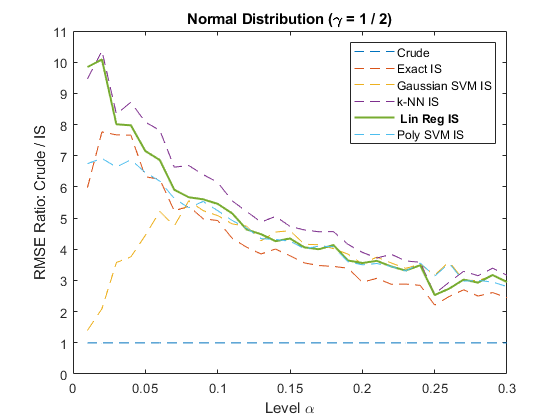}
    \end{minipage}
    \begin{minipage}[c]{0.32\textwidth}
        \centering
        \includegraphics[width=1.11\textwidth]{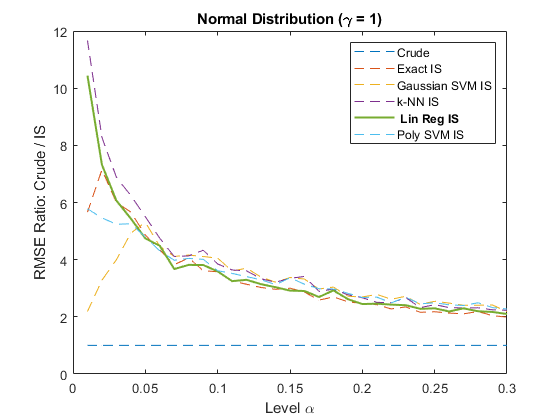}
    \end{minipage}
     \begin{minipage}[c]{0.32\textwidth}
        \centering
        \includegraphics[width=1.11\textwidth]{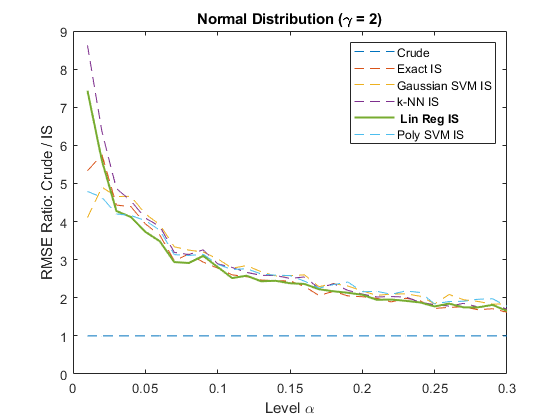}
    \end{minipage}
    \begin{minipage}[c]{0.32\textwidth}
        \centering
        \includegraphics[width=1.11\textwidth]{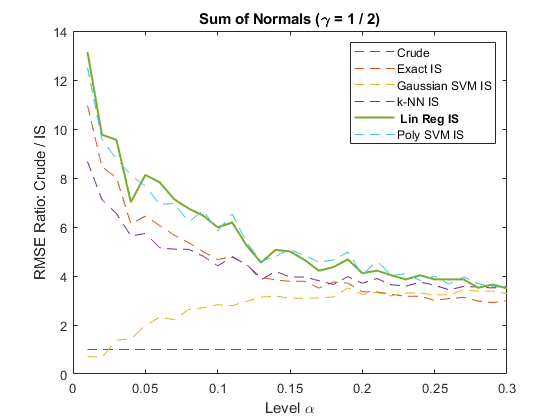}
    \end{minipage}
    \begin{minipage}[c]{0.32\textwidth}
        \centering
        \includegraphics[width=1.11\textwidth]{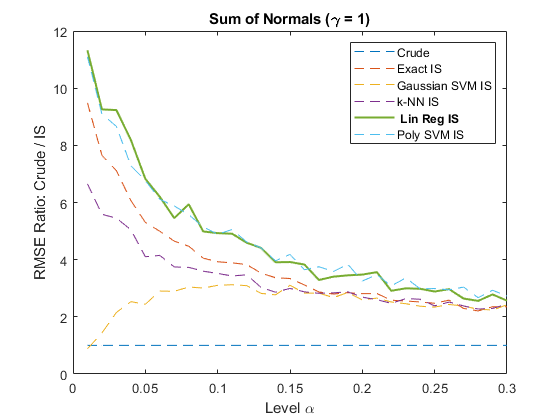}
    \end{minipage}
    \begin{minipage}[c]{0.32\textwidth}
        \centering
        \includegraphics[width=1.11\textwidth]{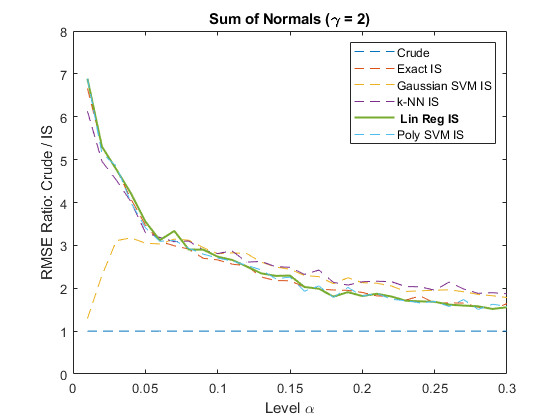}
    \end{minipage}
     \begin{minipage}[c]{0.32\textwidth}
        \centering
        \includegraphics[width=1.11\textwidth]{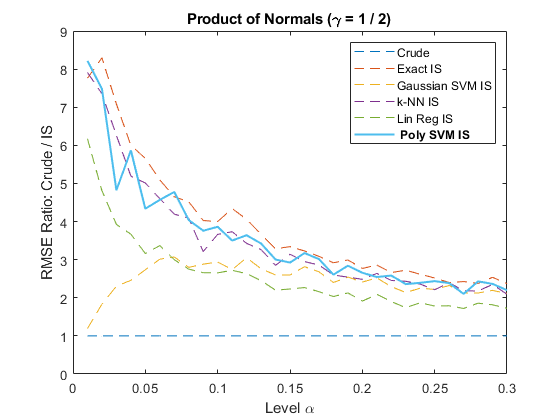}
    \end{minipage}
    \begin{minipage}[c]{0.32\textwidth}
        \centering
        \includegraphics[width=1.11\textwidth]{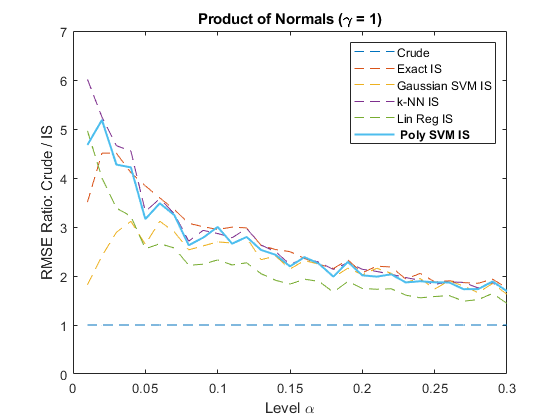}
    \end{minipage}
     \begin{minipage}[c]{0.32\textwidth}
        \centering
        \includegraphics[width=1.11\textwidth]{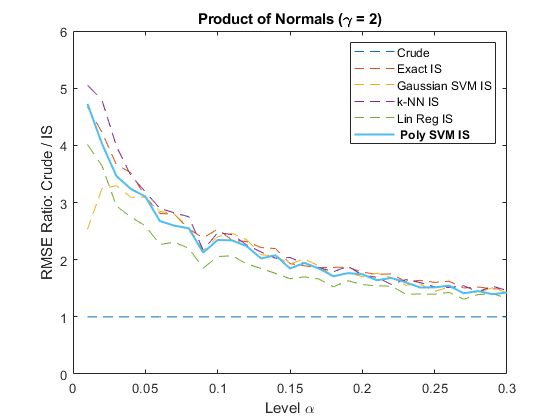}
    \end{minipage}
    \begin{minipage}[c]{0.32\textwidth}
        \centering
        \includegraphics[width=1.11\textwidth]{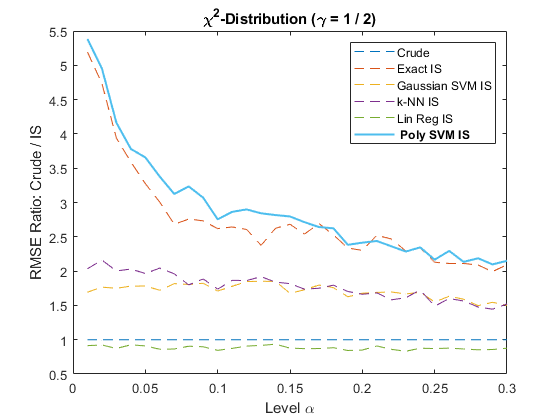}
    \end{minipage}
    \begin{minipage}[c]{0.32\textwidth}
        \centering
        \includegraphics[width=1.11\textwidth]{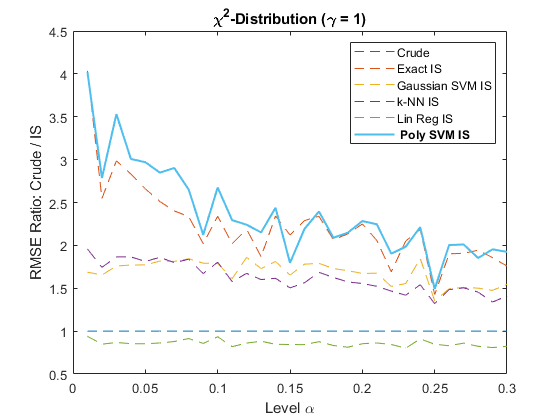}
    \end{minipage}
    \begin{minipage}[c]{0.32\textwidth}
        \centering
        \includegraphics[width=1.11\textwidth]{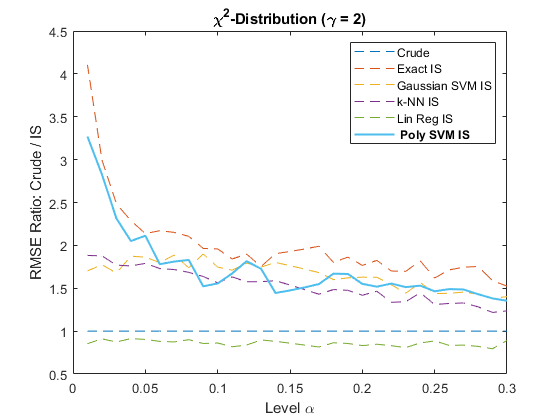}
    \end{minipage}
    \caption{Ratio of the RMSE arising between the crude method and the IS method
    when estimating $\rho_{g_{\gamma, \alpha}}$, with $\gamma \in \{1/2, 1, 2\}$, $\alpha \in [0.01, 0.3]$,
    for the models (1) to (6). The comparison is made between the crude method and the proposed 
    IS method using various approximations for the black box considered in the paper.}
    \label{fig:ErrorRatio}
\end{figure}

\begin{figure}
    \centering
    \begin{minipage}[c]{0.32\textwidth}
        \centering
        \includegraphics[width=1.11\textwidth]{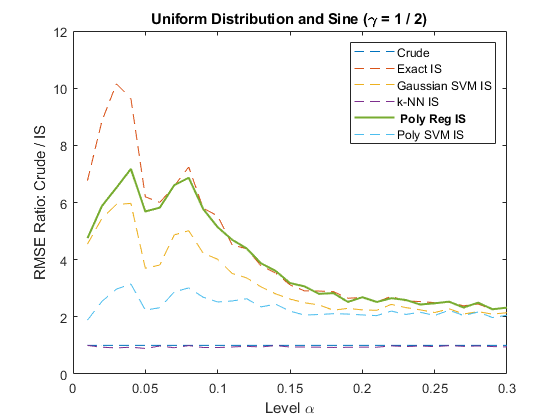}
    \end{minipage}
    \begin{minipage}[c]{0.32\textwidth}
        \centering
        \includegraphics[width=1.11\textwidth]{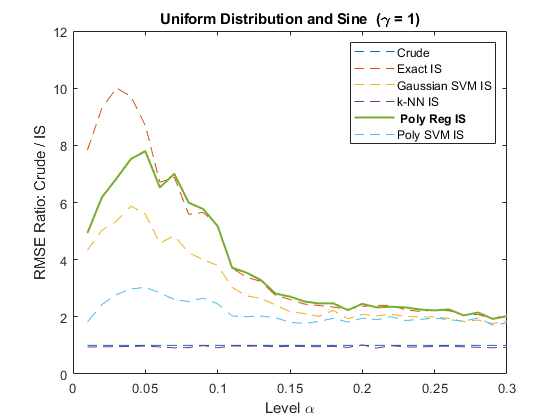}
    \end{minipage}
     \begin{minipage}[c]{0.32\textwidth}
        \centering
        \includegraphics[width=1.11\textwidth]{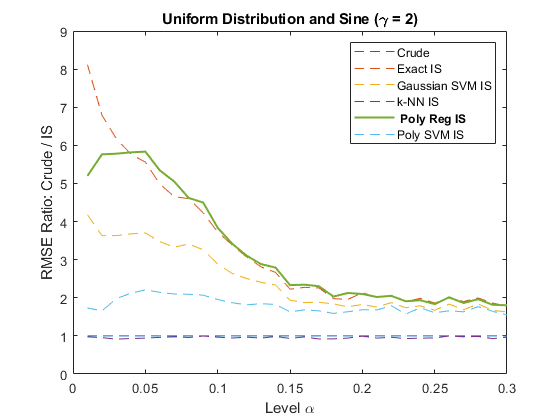}
    \end{minipage}
    \begin{minipage}[c]{0.32\textwidth}
        \centering
        \includegraphics[width=1.11\textwidth]{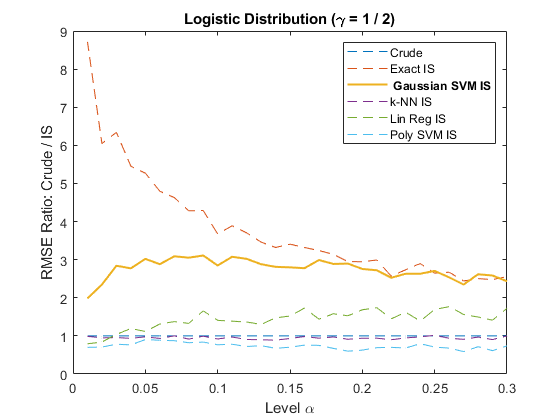}
    \end{minipage}
    \begin{minipage}[c]{0.32\textwidth}
        \centering
        \includegraphics[width=1.11\textwidth]{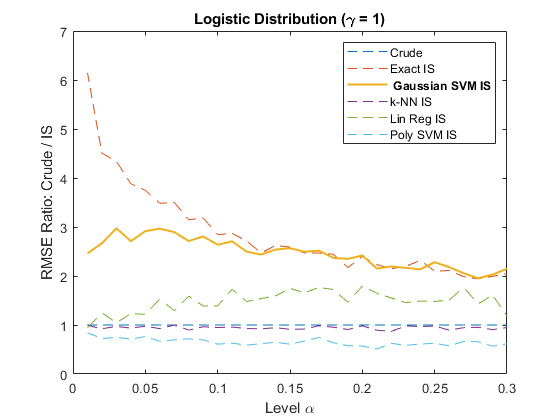}
    \end{minipage}
    \begin{minipage}[c]{0.32\textwidth}
        \centering
        \includegraphics[width=1.11\textwidth]{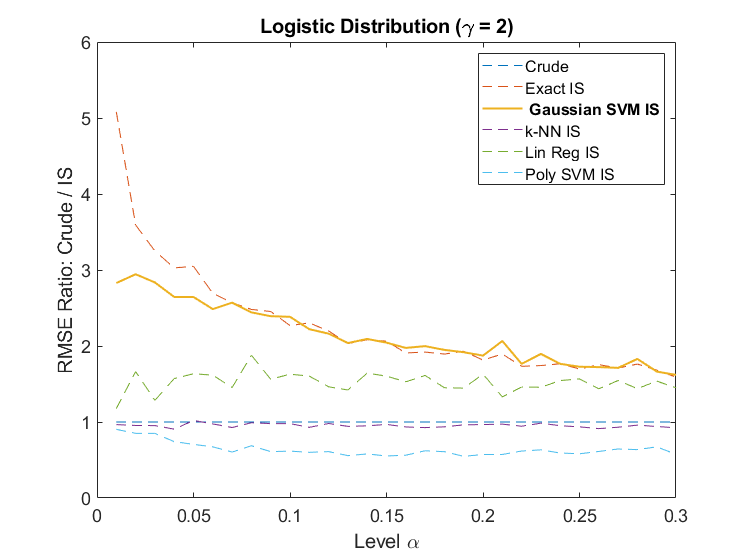}
    \end{minipage}
    \caption{Continuation of Figure \ref{fig:ErrorRatio}.}
    \label{fig:ErrorRatio2}
\end{figure}

The plots confirm that the proposed importance sampling algorithm can successfully reduce the RMSE in all cases. The efficiency of the algorithm depends significantly on the chosen ML regression model. A poorly chosen approximation can even lead to a higher error than the crude method. Interestingly, the choice based on $k$-fold validation and pivot calibration works reasonably well in all cases, despite the fact that the ML objective function does not focus on the tail. The smaller $\alpha$, the more the DRM zooms in on the tail risk due to rare events. As expected, the variance reduction becomes better the smaller $\alpha$ is. Similarly, variance reduction is also better the smaller $\gamma$, since DRMs with smaller $\gamma$ put more emphasis on tail risk. 

\subsection{Iterative Exploration of the Extreme Tail}\label{sec:itexext}

The discussed algorithm consists of two simulation steps. First, pivot samples are drawn that are used for both the choice of the ML approximation and the determination of an IS measure change. Second, samples are generated under the IS distribution and used for the estimation of the DRMs. However, if DRMs are considered that focus on particularly extreme tail events, this approach might not yet be sufficient. A possible extension to the suggested approach is the following: The samples from the IS distribution are not directly used for DRM estimation, but serve as additional data for calibrating the ML approximation a second time and the construction of a further measure change on this basis. 
By using samples from the IS distribution for the ML approximation, our goal is to obtain more samples in the right tail, thereby enhancing the ML approximation's accuracy in that region. This results in a more suitable IS distribution for the subsequent measure change in DRM estimation.
In this section, we provide a case study that takes this approach -- the iterative exploration of the extreme tail.

\subsubsection*{Simulation Desgin}

We consider again the case studies (1) - (4) outlined in Section \ref{subsec:StudiesDesign} with the same 
distortion functions $g_{\alpha, \gamma } (u)$, $\gamma \in \{1/2, 1, 2\}$
and corresponding DRMs. However, we focus on more extreme tail events by choosing $\alpha = 0.002$. 
In addition, we choose a finer partition by setting $m = 50$. In the experiment, we repeat all simulation runs $R = 2,000$ times to estimate the RMSE. We compare three simulation approaches for the estimation of the DRMs. In all cases, $27,500$ samples are used, respectively.

The first approach is a crude simulation with $27,500$ samples. The second approach is the algorithm suggested in the previous sections with $7,500$ pivot samples and $20,000$ 
samples from the IS distribution. The third approach is an iterative exploration: $5000$ pivot samples are used to calculate the IS distribution for level $\alpha^\prime = 0.01$. Then we draw from this IS distribution $2,500$ additional pivot samples. The IS distribution for $\alpha = 0.002$ is computed from the total $7,500$ pivot samples. In the last step, $20,000$ are drawn from this distribution to estimate the DRMs.

\subsection*{Results}

\begin{table}[]
    \renewcommand{\arraystretch}{1.5}
    \scriptsize
    \centering
    \begin{tabular}{| l || c | c | c || c | c | c || c | c | c || c | c | c |}
        \hline
        & \multicolumn{3}{|c|}{(1) Id. of Normals} & \multicolumn{3}{|c|}{(2) Sum of Normals} & \multicolumn{3}{|c|}{(3) Prod. of Normals} & \multicolumn{3}{|c|}{(4) Sum of Sq. Normals} \\
        \hline 
        $\gamma$ & $1/2$ & $1$ & $2$ & $1/2$ & $1$ & $2$ & $1/2$ & $1$ & $2$ & $1/2$ & $1$ & $2$ \\
        \hline
        Exact & $3.35$ & $3.16$ & $3.02$ & $5.40$ & $5.09$ & $4.88$ & $4.08$ & $3.60$ & $3.30$ & $20.58$ & $19.01$ & $17.99$  \\
        \hline
        Mean CRUDE & $3.33$ & $3.15$ & $3.03$ & $5.38$ & $5.08$ & $4.88$ & $4.02$ & $3.59$ & $3.30$& $20.45$  & $18.99$ & $18.00$  \\
        Mean IS & $3.35$ & $3.16$ & $3.02$ & $5.40$ & $5.09$ & $4.88$ & $4.07$ & $3.60$ & $3.30$ & $20.48$ & $18.91$ & $17.89$  \\
        Mean ITER IS & $3.35$ & $3.16$ & $3.02$ & $5.40$ & $5.09$ & $4.88$ & $4.06$ & $3.59$ & $3.29$ & $20.55$ & $18.98$ & $17.97$  \\
        \hline
        RMSE $\frac{\text{CRUDE}}{\text{IS}}$ & $35.61$ & $20.63$ & $14.87$ & $16.73$ & $13.09$ & $10.34$ & $10.90$ & $9.40$ & $7.48$ & $5.34$ & $3.58$ &  $2.97$ \\
        RMSE $\frac{\text{CRUDE}}{\text{ITER IS}}$  & $35.42$ & $20.81$ & $14.41$ & $35.02$ & $21.70$ & $15.04$ & $9.61$ & $8.39$ & $6.71$ & $13.55$ & $8.89$  & $7.17$ \\
        \hline
    \end{tabular}
    \caption{Results of the DRM estimation in the extreme tail.}
    \label{tab:TailStudy}
\end{table}

The results of the case study are displayed in Table \ref{tab:TailStudy}. The
exact values of the DRMs, the means over $R = 2,000$ simulation runs and the corresponding ratios of the RMSE of the two IS methods and the crude method are documented. 
Overall the iterative method typically provides the most substantial RMSE reduction,
outperforming the direct IS approach substantially in experiments (2) and (4). The direct IS approach is still more efficient than the crude method. 
Especially in (1) and (2) the reduction of the RMSE is significant in contrast to the crude method,
while in (3) and (4) the IS methods are not as efficient. When considering the mean over all simulation runs, we observe that the IS methods also reduce estimation bias. 

\section{Application to ALM} \label{sec:ALM}

We apply Algorithm~\ref{algo:ISDRMEsti} to the estimation of solvency capital in a simple asset-liability management (ALM) model of an insurance firm. Instead of the risk measure $V@R$ which forms the basis of Solvency II, we use the same DRMs that we considered in Section~\ref{sec:CaseStudies}. The suggested method could also be applied in highly complex ALM models such as those applied by major insurance groups.

\subsection{Model Description} \label{subsec:ALMModel}

Our ALM model, inspired by \cite{WeberHamm2014} and \cite{HammKnispelWeber2019}, describes a snapshot in time of an ongoing insurance business. The focus is on a one-year time horizon with dates $t=0,1$, as in Solvency II. The values of assets and liabilities are denoted by $A_t, L_t$, $t=0,1$, respectively. At each point in time, their difference is the book value of equity $E_t = A_t-L_t$, $t=0,1$, which is used for the solvency capital calculation.

The evolution of balance sheet is driven by market and insurance risks. For simplicity, we assume that reserves are constant, i.e., $L_t=v$, $\forall t$. Any changes in value are thus seen on the asset side. We assume that insurance claims are modeled by a collective model where the number of claims are given by the counting process $N$ and their severities by independent, identically distributed losses $\xi_k$. Annual total premium payments $\pi$ are received at the beginning of the year. We set 
$$ C \;=\; \sum_{k=1}^N \xi_k. $$
The random annual return of assets between dates $t=0$ and $t=1$ is denoted by $R_A$, i.e., we obtain that
$$A_1 \;=\; R_A \cdot A_0 - C + \pi .$$
In order to model the random return of assets we assume that 
$$ A_0 \;=\;  \eta^S S_0 + \eta^B B_0,$$
where $B = (B_t)_{t \in \{0,1\}}$ and
$S = (S_t)_{t \in \{0,1\}}$ are the prices of a bond and a stock and $\eta^S$ and $\eta^B$ the respective holdings. This implies that $$R_A \;=\; \frac{\eta^S S_1 + \eta^B B_1}{A_0} \;=\; \frac{\eta^S S_0}{A_0} \cdot \frac{S_1}{S_0} + \frac{\eta^B B_0}{A_0}\cdot \frac{B_1}{B_0} \;= \;b\cdot \frac{S_1}{S_0} + (1-b) \cdot\frac{B_1}{B_0},$$ where $b$ is the fraction of initial wealth invested in the stock.
Setting $B_0 = 1$, $B_1 = 1 + R_B$ for some random interest rate $R_B> -1$, and  $S_0 = 1$, 
$$ S_1 \;= \;\exp \left( \left( \mu - \frac{\sigma^2}{2} \right) \Delta t + \sigma 
\sqrt{\Delta t} Z \right),$$
we derive that
$$R_A \;=\; b \cdot  \exp \left( \left( \mu - \frac{\sigma^2}{2} \right) \Delta t + \sigma 
\sqrt{\Delta t} Z \right)  + (1-b)  \cdot (1+R_B) .$$
Solvency capital is determined in terms of risk measure applied to the change in net asset values over the time period of one year, i.e.,
$$E_1-E_0 \;=\; A_1 - \underbrace{L_1}_{=v} - A_0 + \underbrace{L_1}_{=v}\; =\; (R_A -1 ) \cdot A_0 -C + \pi .$$

\subsection{Simulation Overview} \label{subsec:ALMSimOverview}

As in Section~\ref{sec:CaseStudies}, we apply the proposed 
importance sampling method to the DRMs  $\rho_{g_{\alpha, \gamma}}$ with
distortion function $g_{\alpha, \gamma}(u) = 
\mathds{1}_{[0, \alpha ] } (u) \left( \frac{u}{\alpha} \right)^{\gamma } 
+ \mathds{1}_{ ( \alpha, 1 ] } (u)$ and $\gamma \in \{1/2, 1, 2\}$. 
In terms of the DRMs, solvency capital is $\rho_{g_{\alpha, \gamma }} ( E_0 - E_1)$. The underlying random factors 
are $R_B, Z, N, \xi_1, \xi_2, \dots$. 
We set $E_0 = 1000$ and $R_B = (V - 1 /2) / 10$, where $V$ is beta 
distributed with parameters $(2,2)$, i.e., $V \sim B(2,2)$. 
The parameters of the stock are $\mu = 0.02$, $\sigma =  0.2$, $\Delta t = 1$, and we assume that 
half of the available capital is invested into the stock, i.e., $b = 0.5$. 
For the collective model, we assume that $N$ is a Poisson random variable with parameter
$\lambda = 5$, and $(\xi_k)_{k\geq 1}$  are independent exponentially distributed random variables with
parameter $\vartheta^\prime = 10$.
The premium and reserve are $103 \%$ resp. $105 \%$ of the expected claims such that
$\pi = 1.03 \lambda \vartheta^\prime$ and $v = 1.05 \lambda \vartheta^\prime$. 

As in Section~\ref{sec:CaseStudies}, the importance sampling estimates with the different 
ML approximations used in the measure changes are performed with $M = 2,000$ pivot samples for calibration of the approximation and determination of the importance sampling mixture distribution, and $N = 20,000$ samples of the mixture distribution for DRM estimation. As comparison a crude estimation with $M + N $ 
samples is implemented. We always use a discretization with $m = 20$. The `exact value" of benchmarking is determined 
with a crude estimation with $1,000,000$ samples and used to calculate the RMSE over 
$R = 1,000$ simulation runs.

\subsection{Results} \label{subsec:ALMREs}

\begin{figure}
    \centering
    \begin{minipage}[c]{0.32\textwidth}
        \centering
        \includegraphics[width=1.11\textwidth]{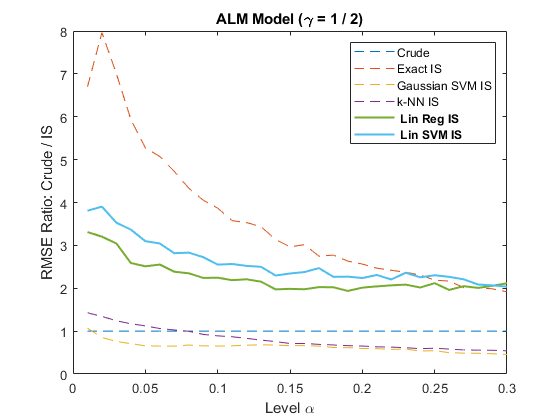}
    \end{minipage}
    \begin{minipage}[c]{0.32\textwidth}
        \centering
        \includegraphics[width=1.11\textwidth]{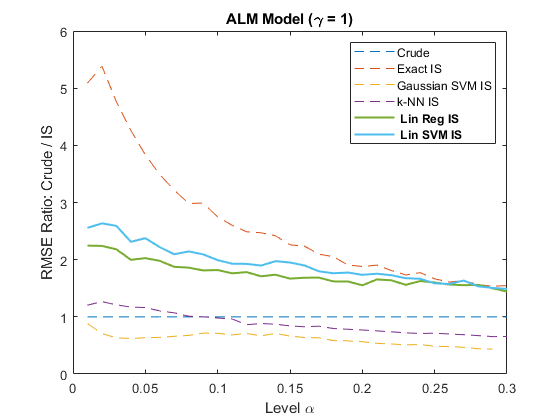}
    \end{minipage}
    \begin{minipage}[c]{0.32\textwidth}
        \centering
        \includegraphics[width=1.11\textwidth]{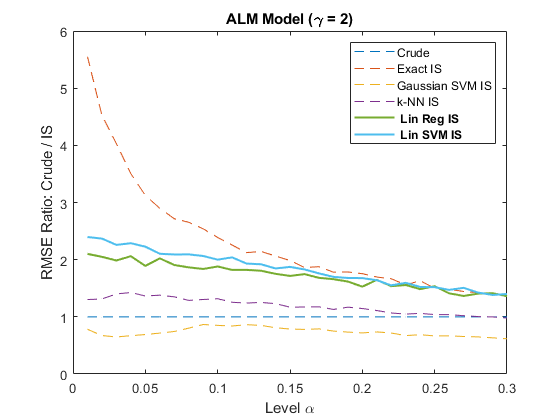}
    \end{minipage}

    \caption{Ratio of the RMSE between the crude method and importance sampling method for 
    the estimation of the considered DRMs for the evolution of the net 
    asset value in the ALM model. The importance sampling method is implemented with 
    the different approximation techniques considered in the paper. }
    \label{fig:ALM_ErrorRatio}
\end{figure}

The results of the simulations are shown in Figure~\ref{fig:ALM_ErrorRatio}, which presents the ratio of the RMSEs of the crude method and the studied importance sampling methods. For further reference, the absolute RMSE of the estimates are provided in 
Figure~\ref{fig:ALM_LevelError} in Appendix~\ref{app:AddPlots}. 
For all DRMs, importance sampling with exact knowledge of the model leads to a significant reduction in the RMSE, especially for small $\alpha$. The variance reduction becomes less pronounced as $\alpha$ increases. For $\gamma = 1$, the exact method achieves the highest observed RMSE ratio of $8.8$ for $\alpha = 0.01$. Importance sampling with linear regression leads to a maximum reduction of $5.8$ for $\alpha = 0.01$.
For the DRMs with concave ($\gamma = 1/2$) and on $[0, \alpha]$ convex ($\gamma = 2$) distortion functions, the linear SVM gives the best reduction in RMSE. For $\gamma = 2$, the best ratio obtained with full knowledge of the model is $5.55$ for $\alpha = 0.01$, and with linear SVM, the maximum reduction is $2.39$ for $\alpha = 0.01$.  For $\gamma = 1 /2$, the exact importance sampling method has the best ratio of $7.96$ for $\alpha = 0.02$ and with the linear SVM approximation of $3.9$ for $\alpha = 0.02$. 
Across all the different estimated DRMs, we see that the importance sampling methods with
Gaussian SVM and $k$-NN regression can lead to a worse RMSE than the crude method.
The worst ratio is observed in all cases with the Gaussian SVM 
with $0.51$ as $\alpha = 0.28$ for $\gamma = 1$, $0.62$ as $\alpha = 0.3$ for $\gamma = 2$ and
and $0.45$ for $\gamma = 1 / 2$ with $\alpha = 0.3$. 

In summary, the proposed method provides a good path to variance reduction. However, the ML approximation in the measure change needs to be chosen carefully, but $k$-fold validation seems to work quite well for this type of analysis. The variance reduction becomes better the more the risk measure depends on extreme tail events. In the ALM case study, the most extreme parameter $\alpha$ was $0.01$. We expect that the iterative procedure outlined in Section~\ref{sec:itexext} would also lead to further improvements in variance reduction when the very extreme tail is considered. 

\onehalfspacing
\printbibliography

@inproceedings{Glynn1996,
  author       = {Glynn, Peter W.},
  booktitle    = {Mathematical Methods in Stochastic Simulation and Experimental Design: Proceedings of the 2nd St. Petersburg Workshop on Simulation},
  title        = {Importance Sampling for Monte Carlo Estimation of Quantiles},
  year         = {1996},
  pages        = {180--185},
}

@article{AhnShy2011,
  author    = {Ahn, Jae Youn and Shyamalkumar, Nariankadu D.},
  journal   = {North American Actuarial Journal},
  title     = {Large Sample Behavior of the {CTE} and {VaR} Estimators under Importance Sampling},
  year      = {2011},
  volume    = {15},
  number    = {3},
  pages     = {393--416},
  doi       = {10.1080/10920277.2011.10597627},
  publisher = {Informa {UK} Limited},
}

@inproceedings{SunHong2009,
  author       = {Sun, Lihua and Hong, L. Jeff},
  booktitle    = {Proceedings of the 2009 Winter Simulation Conference (WSC)},
  title        = {A General Framework of Importance Sampling for Value-at-Risk and Conditional Value-at-Risk},
  year         = {2009},
  pages        = {415--422},
}

@article{Artzner1999,
  author    = {Artzner, Philippe and Delbaen, Freddy and Eber, Jean-Marc and Heath, David},
  journal   = {Mathematical Finance},
  title     = {Coherent Measures of Risk},
  year      = {1999},
  volume    = {9},
  number    = {3},
  pages     = {203--228},
  doi       = {10.1111/1467-9965.00068},
  publisher = {John Wiley \& Sons},
}

@article{Wang1996,
  author    = {Wang, Shaun},
  journal   = {ASTIN Bulletin: The Journal of the IAA},
  title     = {Premium Calculation by Transforming the Layer Premium Density},
  year      = {1996},
  volume    = {26},
  number    = {1},
  pages     = {71--92},
  publisher = {Cambridge University Press},
}

@article{Acerbi2002,
  author    = {Acerbi, Carlo},
  journal   = {Journal of Banking \& Finance},
  title     = {Spectral Measures of Risk: A Coherent Representation of Subjective Risk Aversion},
  year      = {2002},
  volume    = {26},
  number    = {7},
  pages     = {1505--1518},
  publisher = {Elsevier},
}

@book{FoellmerSchied2016,
  author    = {Föllmer, Hans and Schied, Alexander},
  publisher = {De Gruyter},
  title     = {Stochastic Finance},
  year      = {2016},
  doi       = {10.1515/9783110463453},
  edition    = {4},
}

@article{Dhaene2012,
  author    = {Dhaene, Jan and Kukush, Alexander and Linders, Daniël and Tang, Qihe},
  journal   = {European Actuarial Journal},
  title     = {Remarks on Quantiles and Distortion Risk Measures},
  year      = {2012},
  number    = {2},
  pages     = {319--328},
  volume    = {2},
  doi       = {10.1007/s13385-012-0058-0},
  publisher = {Springer Science and Business Media {LLC}},
}

@article{ChernyMadan2008,
  author    = {Cherny, Alexander and Madan, Dilip},
  journal   = {Review of Financial Studies},
  title     = {New Measures for Performance Evaluation},
  year      = {2008},
  number    = {7},
  pages     = {2571--2606},
  volume    = {22},
  doi       = {10.1093/rfs/hhn081},
  publisher = {Oxford University Press ({OUP})},
}

@article{Weber2018,
  author    = {Weber, Stefan},
  journal   = {Insurance: Mathematics and Economics},
  title     = {Solvency {II}, or How to Sweep the Downside Risk under the Carpet},
  year      = {2018},
  pages     = {191--200},
  volume    = {82},
  doi       = {10.1016/j.insmatheco.2017.11.010},
  publisher = {Elsevier {BV}},
}

@article{SongYan2009,
  author    = {Song, Yongsheng and Yan, Jia-An},
  journal   = {Insurance: Mathematics and Economics},
  title     = {Risk Measures with Comonotonic Subadditivity or Convexity and Respecting Stochastic Orders},
  year      = {2009},
  number    = {3},
  pages     = {459--465},
  volume    = {45},
  doi       = {10.1016/j.insmatheco.2009.09.011},
  publisher = {Elsevier {BV}},
}

@article{BignozziTsanakas2015,
  author    = {Bignozzi, Valeria and Tsanakas, Andreas},
  journal   = {Journal of Risk and Insurance},
  title     = {Parameter Uncertainty and Residual Estimation Risk},
  year      = {2015},
  number    = {4},
  pages     = {949--978},
  volume    = {83},
  doi       = {10.1111/jori.12075},
  publisher = {John Wiley \& Sons},
}

@article{LiEtAl2018,
  author    = {Li, Lujun and Shao, Hui and Wang, Ruodu and Yang, Jingping},
  journal   = {{SIAM} Journal on Financial Mathematics},
  title     = {Worst-Case Range Value-at-Risk with Partial Information},
  year      = {2018},
  number    = {1},
  pages     = {190--218},
  volume    = {9},
  doi       = {10.1137/17m1126138},
  publisher = {Society for Industrial {\&} Applied Mathematics ({SIAM})},
}

@article{Wang1995,
  author    = {Wang, Shaun},
  journal   = {Insurance: Mathematics and Economics},
  title     = {Insurance Pricing and Increased Limits Ratemaking by Proportional Hazards Transforms},
  year      = {1995},
  number    = {1},
  pages     = {43--54},
  volume    = {17},
  doi       = {10.1016/0167-6687(95)00010-p},
  publisher = {Elsevier {BV}},
}

@article{GuillenEtAl2018,
  author = {Guillen, Montserrat and Sarabia, Jose Maria and Belles-Sampera, Jaume and Prieto, Faustino},
  title = {Distortion Risk Measures for Nonnegative Multivariate Risks},
  journal = {Journal of Operational Risk},
  volume = {13},
  number = {2},
  pages = {35--57},
  year = {2018},
  doi = {10.21314/jop.2018.206},
  publisher = {Infopro Digital Services Limited},
}

@article{WirchHardy1999,
  author = {Wirch, Julia Lynn and Hardy, Mary R.},
  title = {A Synthesis of Risk Measures for Capital Adequacy},
  journal = {Insurance: Mathematics and Economics},
  volume = {25},
  number = {3},
  pages = {337--347},
  year = {1999},
  doi = {10.1016/s0167-6687(99)00036-0},
  publisher = {Elsevier {BV}},
}

@article{Yitzhaki1982,
  author = {Yitzhaki, Shlomo},
  title = {Stochastic Dominance, Mean Variance, and Gini's Mean Difference},
  journal = {The American Economic Review},
  volume = {72},
  number = {1},
  pages = {178--185},
  year = {1982},
  publisher = {JSTOR},
}

@article{Wozabal2014,
  author = {Wozabal, David},
  title = {Robustifying Convex Risk Measures for Linear Portfolios: A Nonparametric Approach},
  journal = {Operations Research},
  volume = {62},
  number = {6},
  pages = {1302--1315},
  year = {2014},
  publisher = {INFORMS},
}

@article{Dowd2008,
  author = {Dowd, Kevin and Cotter, John and Sorwar, Ghulam},
  title = {Spectral Risk Measures: Properties and Limitations},
  journal = {Journal of Financial Services Research},
  volume = {34},
  number = {1},
  pages = {61--75},
  year = {2008},
  doi = {10.1007/s10693-008-0035-6},
  publisher = {Springer Science and Business Media {LLC}},
}

@incollection{GueganHassani2014,
  author = {Gu{\'{e}}gan, Dominique and Hassani, Bertrand},
  booktitle = {Future Perspectives in Risk Models and Finance},
  title = {Distortion Risk Measure or the Transformation of Unimodal Distributions into Multimodal Functions},
  year = {2014},
  pages = {71--88},
  publisher = {Springer International Publishing},
}

@article{SamanthiSepanski2018,
  author = {Samanthi, Ranadeera G.M. and Sepanski, Jungsywan},
  title = {Methods for Generating Coherent Distortion Risk Measures},
  journal = {Annals of Actuarial Science},
  volume = {13},
  number = {2},
  pages = {400--416},
  year = {2018},
  doi = {10.1017/s1748499518000258},
  publisher = {Cambridge University Press ({CUP})},
}

@article{Wang2000,
  author = {Wang, Shaun},
  title = {A Class of Distortion Operators for Pricing Financial and Insurance Risks},
  journal = {Journal of Risk and Insurance},
  pages = {15--36},
  year = {2000},
  publisher = {JSTOR},
}

@article{Wang2001,
  author    = {Wang, Shaun},
  title     = {A Risk Measure that Goes Beyond Coherence},
  year      = {2001},
  journal = {12th AFIR International Colloquium},
}

@article{MethniStupfler2017,
  author    = {El Methni, Jonathan and Stupfler, Gilles},
  title     = {Extreme Versions of Wang Risk Measures and Their Estimation for Heavy-Tailed Distributions},
  journal   = {Statistica Sinica},
  year      = {2017},
  pages     = {907--930},
  publisher = {JSTOR},
  volume    = {27},
  number    = {2}
}

@article{BettelsKimWeber2022,
  author    = {Bettels, Sören and Kim, Sojung and Weber, Stefan},
  title     = {Multinomial Backtesting of Distortion Risk Measures},
  journal   = {Insurance: Mathematics and Economics},
  volume    = {119},
  pages     = {130--145},
  year      = {2024},
  publisher = {Elsevier}
}

@book{Folland1999,
  author    = {Folland, Gerald B.},
  title     = {Real Analysis: Modern Techniques and Their Applications},
  publisher={John Wiley \& Sons},
  year = {1999},
  edition = {2},
}

@book{AsmussenGlynn2007,
  author    = {Asmussen, Søren and Glynn, Peter W.},
  title     = {Stochastic Simulation: Algorithms and Analysis},
  publisher = {Springer},
  year      = {2007},
  edition   = {1},
}

@article{FoellmerSchied2002,
  author    = {Föllmer, Hans and Schied, Alexander},
  title     = {Convex Measures of Risk and Trading Constraints},
  journal   = {Finance and Stochastics},
  year      = {2002},
  volume    = {6},
  number    = {4},
  pages     = {429--447},
  doi       = {10.1007/s007800200072},
  publisher = {Springer Science and Business Media LLC},
}

@article{FrittelliGianin2002,
  author    = {Frittelli, Marco and Gianin, Emanuela Rosazza},
  title     = {Putting Order in Risk Measures},
  journal   = {Journal of Banking \& Finance},
  year      = {2002},
  volume    = {26},
  number    = {7},
  pages     = {1473--1486},
  doi       = {10.1016/s0378-4266(02)00270-4},
  publisher = {Elsevier BV},
}

@book{Glasserman2003,
  author    = {Glasserman, Paul},
  title     = {Monte Carlo Methods in Financial Engineering},
  publisher = {Springer New York},
  year      = {2003},
  edition   = {1},
  doi       = {10.1007/978-0-387-21617-1},
}

@article{FoellmerWeber2015,
  author    = {Föllmer, Hans and Weber, Stefan},
  journal   = {Annual Review of Financial Economics},
  title     = {The Axiomatic Approach to Risk Measures for Capital Determination},
  year      = {2015},
  number    = {1},
  pages     = {301--337},
  volume    = {7},
  doi       = {10.1146/annurev-financial-111914-042031},
  publisher = {Annual Reviews},
}

@article{Choquet1954,
  author    = {Choquet, Gustave},
  journal   = {Annales de l'institut Fourier},
  title     = {Theory of Capacities},
  year      = {1954},
  pages     = {131--295},
  volume    = {5},
  doi       = {10.5802/aif.53},
  publisher = {Cellule {MathDoc}/{CEDRAM}},
}

@book{Denneberg1994,
  author    = {Denneberg, Dieter},
  publisher = {Springer Netherlands},
  title     = {Non-Additive Measure and Integral},
  year      = {1994},
  doi       = {10.1007/978-94-017-2434-0},
  edition   = {1}
}

@article{DhaeneEtAl2006,
  author    = {Dhaene, Jan and Vanduffel, Steven and Goovaerts, Marc J. and Kaas, Rob and Tang, Qihe and Vyncke, David},
  journal   = {Stochastic Models},
  title     = {Risk Measures and Comonotonicity: A Review},
  year      = {2006},
  number    = {4},
  pages     = {573--606},
  volume    = {22},
  doi       = {10.1080/15326340600878016},
  publisher = {Informa {UK} Limited},
}

@article{SongYan2009_2,
  author    = {Song, Yongsheng and Yan, Jia-An},
  journal   = {Science in China Series A: Mathematics},
  title     = {An Overview of Representation Theorems for Static Risk Measures},
  year      = {2009},
  number    = {7},
  pages     = {1412--1422},
  volume    = {52},
  doi       = {10.1007/s11425-009-0122-7},
  publisher = {Springer Science and Business Media {LLC}},
}

@article{SongYan2006,
  author    = {Song, Yongsheng and Yan, Jia'an},
  journal   = {Science in China Series A: Mathematics},
  title     = {The Representation of Two Types Functionals on $L^\infty(\Omega, \mathcal{F})$ and $L^\infty(\Omega, \mathcal{F}, P)$},
  year      = {2006},
  number    = {10},
  pages     = {1376--1382},
  volume    = {49},
  doi       = {10.1007/s11425-006-2010-8},
  publisher = {Springer Science and Business Media {LLC}},
  url       = {https://doi.org/10.1007%2Fs11425-006-2010-8},
}

@article{KimWeber2022,
  author    = {Kim, Sojung and Weber, Stefan},
  journal   = {European Journal of Operational Research},
  title     = {Simulation Methods for Robust Risk Assessment and the Distorted Mix Approach},
  year      = {2022},
  number    = {1},
  pages     = {380--398},
  volume    = {298},
  doi       = {10.1016/j.ejor.2021.07.005},
  publisher = {Elsevier {BV}},
}

@incollection{Kusuoka2001,
  author    = {Kusuoka, Shigeo},
  booktitle = {Advances in Mathematical Economics},
  publisher = {Springer Japan},
  title     = {On Law Invariant Coherent Risk Measures},
  year      = {2001},
  pages     = {83--95},
  volume = {3},
  doi       = {10.1007/978-4-431-67891-5_4},
}

@book{RubinoTuffin2009,
  author    = {Rubino, Gerardo and Tuffin, Bruno},
  publisher = {John Wiley \& Sons},
  title     = {Rare Event Simulation using Monte Carlo Methods},
  year      = {2009},
  isbn      = {9780470772690},
  edition = {1},
}

@book{Bucklew2004,
  author    = {Bucklew, James A.},
  publisher = {Springer New York},
  title     = {Introduction to Rare Event Simulation},
  year      = {2004},
  edition = {5},
}

@article{BlanchetGlynn2008,
  author    = {Blanchet, Jose and Glynn, Peter},
  journal   = {The Annals of Applied Probability},
  title     = {Efficient Rare-Event Simulation for the Maximum of Heavy-Tailed Random Walks},
  year      = {2008},
  number    = {4},
  volume    = {18},
  doi       = {10.1214/07-aap485},
  publisher = {Institute of Mathematical Statistics},
  pages = {1351--1378},
}

@article{DupuisWang2002,
  author    = {Dupuis, Paul and Wang, Hui},
  title     = {Importance Sampling, Large Deviations, and Differential Games},
  journal={Stochastics: An International Journal of Probability and Stochastic Processes},
  volume={76},
  number={6},
  pages={481--508},
  year={2004},
  publisher={Taylor \& Francis}
}

@article{HultNyquist2016,
  author    = {Hult, Henrik and Nyquist, Pierre},
  journal   = {Stochastic Processes and their Applications},
  title     = {Large Deviations for Weighted Empirical Measures Arising in Importance Sampling},
  year      = {2016},
  number    = {1},
  pages     = {138--170},
  volume    = {126},
  doi       = {10.1016/j.spa.2015.08.002},
  publisher = {Elsevier {BV}},
}

@article{AsmussenEtAl2000,
  author    = {Asmussen, S{\o}ren and Binswanger, Klemens and H{\o}jgaard, Bjarne},
  journal   = {Bernoulli},
  title     = {Rare Events Simulation for Heavy-Tailed Distributions},
  year      = {2000},
  number    = {2},
  pages     = {303--322},
  volume    = {6},
  doi       = {10.2307/3318578},
  publisher = {{JSTOR}},
}

@incollection{JunejaShahabuddin2006,
  author    = {Juneja, Sandeep and Shahabuddin, Perwez},
  publisher = {Elsevier},
  title     = {Rare-Event Simulation Techniques: An Introduction and Recent Advances},
  booktitle = {Handbooks in Operations Research and Management Science},
  volume    = {13},
  year      = {2006},
  pages     = {291--350},
  doi       = {10.1016/s0927-0507(06)13011-x},
}

@article{GlassermanHeidelbergerShahabuddin2002,
  author    = {Glasserman, Paul and Heidelberger, Philip and Shahabuddin, Perwez},
  journal   = {Mathematical Finance},
  title     = {Portfolio Value-at-Risk with Heavy-Tailed Risk Factors},
  year      = {2002},
  number    = {3},
  pages     = {239--269},
  volume    = {12},
  doi       = {10.1111/1467-9965.00141},
  publisher = {John Wiley \& Sons},
}

@article{DunkelWeber2007,
  author    = {Dunkel, Jörn and Weber, Stefan},
  journal   = {Operations Research},
  title     = {Efficient Monte Carlo Methods for Convex Risk Measures in Portfolio Credit Risk Models},
  year      = {2007},
  volume = {58},
  number = {5},
  pages={1505 -- 1521},
}

@article{PandeyPrashanthSanjay2021,
  author    = {Pandey, Ajay Kumar and L.A., Prashanth and Bhat, Sanjay P.},
  journal   = {Proceedings of the {AAAI} Conference on Artificial Intelligence},
  title     = {Estimation of Spectral Risk Measures},
  year      = {2021},
  number    = {13},
  pages     = {12166--12173},
  volume    = {35},
  doi       = {10.1609/aaai.v35i13.17444},
  publisher = {Association for the Advancement of Artificial Intelligence ({AAAI})},
}

@article{Stigler1974,
  author    = {Stigler, Stephen M.},
  journal   = {The Annals of Statistics},
  title     = {Linear Functions of Order Statistics with Smooth Weight Functions},
  year      = {1974},
  number    = {4},
  volume    = {2},
  doi       = {10.1214/aos/1176342756},
  publisher = {Institute of Mathematical Statistics},
}

@book{Serfling1980,
  author   = {Serfling, Robert J.},
  publisher = {John Wiley {\&} Sons, Inc.},
  title     = {Approximation Theorems of Mathematical Statistics},
  year      = {1980},
  doi       = {10.1002/9780470316481},
  edition = {1},
}

@article{BrazauskasJonesPuriZitikis2008,
  author    = {Brazauskas, Vytaras and Jones, Bruce L. and Puri, Madan L. and Zitikis, Ri{\v{c}}ardas},
  journal   = {Journal of Statistical Planning and Inference},
  title     = {Estimating Conditional Tail Expectation with Actuarial Applications in View},
  year      = {2008},
  number    = {11},
  pages     = {3590--3604},
  volume    = {138},
  doi       = {10.1016/j.jspi.2005.11.011},
  publisher = {Elsevier {BV}},
}

@article{BeutnerZaehle2010,
  author    = {Beutner, Eric and Zähle, Henryk},
  journal   = {Journal of Multivariate Analysis},
  title     = {A Modified Functional Delta Method and Its Application to the Estimation of Risk Functionals},
  year      = {2010},
  number    = {10},
  pages     = {2452--2463},
  volume    = {101},
  doi       = {10.1016/j.jmva.2010.06.015},
  publisher = {Elsevier {BV}},
}

@inproceedings{AriefEtAl2021,
  author    = {Arief, Mansur and Bai, Yuanlu and Ding, Wenhao and He, Shengyi and Huang, Zhiyuan and Lam, Henry and Zhao, Ding},
  title     = {Certifiable Deep Importance Sampling for Rare-Event Simulation of Black-Box Systems},
  year      = {2021},
  booktitle = {Proceedings of the 24-th International Conference on Artificial Intelligence and Statistic},
  volume = {130},
  pages =  {595--603},
}

@article{Altman1992,
  author    = {Altman, Naomi S.},
  journal   = {The American Statistician},
  title     = {An Introduction to Kernel and Nearest-Neighbor Nonparametric Regression},
  year      = {1992},
  number    = {3},
  pages     = {175--185},
  volume    = {46},
  doi       = {10.2307/2685209},
  publisher = {{JSTOR}},
}

@book{ShalevSchwartzBenDavid2014,
  title     = {Understanding Machine Learning: From Theory to Algorithms},
  author    = {Shalev-Shwartz, Shai and Ben-David, Shai},
  year      = {2014},
  publisher = {Cambridge University Press},
  edition = {1},
}

@book{Vapnik1999,
  title     = {The Nature of Statistical Learning Theory},
  author    = {Vapnik, Vladimir},
  year      = {1999},
  publisher = {Springer Science \& Business Media},
edition = {2},
}

@article{DruckerEtAl1996,
  title     = {Support Vector Regression Machines},
  author    = {Drucker, Harris and Burges, Christopher J. and Kaufman, Linda and Smola, Alex and Vapnik, Vladimir},
  journal   = {Advances in Neural Information Processing Systems},
  volume    = {9},
  year      = {1996},
  pages     = {155 -- 161},
}

@article{FanPaiChihJoachims2005,
  title     = {Working Set Selection Using Second Order Information for Training Support Vector Machines},
  author    = {Fan, Rong-En and Chen, Pai-Hsuen and Lin, Chih-Jen and Joachims, Thorsten},
  journal   = {Journal of Machine Learning Research},
  volume    = {6},
  number    = {12},
  year      = {2005},
  pages     = {1889--1918},
}

@article{BannScherer2014,
  author = {Bann{\"o}r, Karl F. and Scherer, Matthias},
  title = {On the Calibration of Distortion Risk Measures to Bid-Ask Prices},
  journal = {Quantitative Finance},
  volume = {14},
  number = {7},
  pages = {1217--1228},
  year = {2014},
  publisher = {Routledge}
}

@article{Platt1998,
  title = {Sequential Minimal Optimization: A Fast Algorithm for Training Support Vector Machines},
  author = {Platt, John},
  year = {1998},
  insitution = {Microsoft Research},
  publisher = {Microsoft Research},
  journal = {Microsoft Research},
}

@article{HuangChaiCho2020,
  author = {Huang, Jian and Chai, Junyi and Cho, Stella},
  title = {Deep Learning in Finance and Banking: A Literature Review and Classification},
  journal = {Frontiers of Business Research in China},
  volume = {14},
  number = {1},
  pages = {1--24},
  year = {2020},
  publisher = {SpringerOpen}
}

@book{MohriRostamizadehTalwalkar2018,
  title = {Foundations of Machine Learning},
  author = {Mohri, Mehryar and Rostamizadeh, Afshin and Talwalkar, Ameet},
  year = {2018},
  publisher = {MIT Press},
 edition = {2},
}

@article{WeberHamm2014,
  author = {Weber, Stefan and Hamm, Anna-Maria and Becker, Torsten and Cottin, Claudia and Fahrenwaldt, Matthias and N{\"o}rtemann, Stefan},
  title = {Market Consistent Embedded Value -- Eine Praxisorientierte {E}inf{\"u}hrung},
  journal = {Der Aktuar},
  volume = {1},
  pages = {4--8},
  year = {2014}
}

@article{HammKnispelWeber2019,
  author = {Hamm, Anna-Maria and Knispel, Thomas and Weber, Stefan},
  title = {Optimal Risk Sharing in Insurance Networks},
  journal = {European Actuarial Journal},
  volume = {10},
  number = {1},
  pages = {203--234},
  year = {2019}
}

@article{Shampine2008,
  title = {MATLAB Program for Quadrature in 2D},
  author = {Shampine, Lawrence F.},
  journal = {Applied Mathematics and Computation},
  volume = {202},
  number = {1},
  pages = {266--274},
  year = {2008},
  publisher = {Elsevier}
}

@book{HuangKecManKopriva2006,
  title={Kernel Based Algorithms for Mining Huge Data Sets},
  author={Huang, Te-Ming and Kecman, Vojislav and Kopriva, Ivica},
  edition = {1},
  year={2006},
  publisher={Springer Berlin}
}

@article{Rosenblatt1958,
  title = {The perceptron: A probabilistic model for information storage and organization in the brain},
  author = {Rosenblatt, Frank},
  journal = {Psychological Review},
  volume = {65},
  number = {6},
  pages = {386--408},
  year = {1958}
}

@article{SSch2004,
	author = {Smola, Alex J. and Sch{\"o}lkopf, Bernhard},
	doi = {10.1023/B:STCO.0000035301.49549.88},
	journal = {Statistics and Computing},
	number = {3},
	pages = {199--222},
	title = {A tutorial on Support Vector Regression},
	url = {https://doi.org/10.1023/B:STCO.0000035301.49549.88},
	volume = {14},
	year = {2004}
}

@article{RU02,
    author  = "R. Tyrrell  Rockafellar and Stanislav Uryasev",
    title   = "Conditional value-at-risk for general loss distributions",
    year    = "2002",
    journal = "Journal of Banking $\&$ Finance",
    volume  = "26",
    number  = "",
    pages   = "1443-1471"
}

@article{RU00,
    author  = "R. Tyrrell  Rockafellar and Stanislav Uryasev",
    title   = "Optimization of Conditional Value-at-Risk",
    year    = "2000",
    journal = "Journal of Risk",
    volume  = "2",
    number  = "3",
    pages   = "21-41"
}

@article{BTT2007,
author = {Ben-Tal, Aharon and Teboulle, Marc},
title = {An Old-New Concept of Convex Risk Measures: The Optimized Certainty Equivalent},
journal = {Mathematical Finance},
volume = {17},
number = {3},
pages = {449-476},
doi = {https://doi.org/10.1111/j.1467-9965.2007.00311.x},
year = {2007}
}

\noindent {\bf Declaration of Generative AI and AI-Assisted Technologies in the Writing Process}

\noindent This paper, including its appendix, is original research conducted by the authors. Tools from OpenAI were employed solely for language refinement and clarity enhancement. The substantive content, analyses, and conclusions remain entirely the work of the authors.

\newpage    
\appendix
\doublespacing

\section{Online Appendix}\label{app}

This is an \underline{\bf online appendix} that is provided as an electronic supplement to the paper.

\subsection{Distortion Risk Measures} \label{app:RM}
We review some facts related to risk measures and the special case of DRMs. 
Let $\mathcal{X}$ denote a set of suitable (e.g., all bounded) measurable functions on the measurable space
$(\Omega, \mathcal{F})$. Elements in $\mathcal{X}$ model financial positions or insurance losses. 
We use the sign convention to interpret positive values as losses and negative values as gains. Precise technical conditions for the results below are stated in the references that we mention. 

The axiomatic definition of risk measures goes back to \cite{Artzner1999}; the notion of DRMs was developed by \cite{Wang1996} and \cite{Acerbi2002}. 
DRMs are a subclass of comonotonic risk measures. The link of comontonic risk measures and
DRMs is briefly discussed in Appendix \ref{app:DRM}.
For an excellent overview on risk measures and DRMs we refer 
to \cite{FoellmerSchied2016}. \\
A risk measure $\rho: \mathcal{X} \to \mathbb{R}$ is a functional that quantifies the risk of elements of $\mathcal{X}$:
\begin{defi}
	A mapping $\rho: \mathcal{X} \to \mathbb{R}$ is called monetary risk measure if the following properties hold:
	\begin{itemize}
		\item[(i)] \textit{Monotonicity:} \quad\quad If $X \leq Y$, $X,Y \in \mathcal{X}$, then
		$\rho(X) \leq \rho(Y)$.
		\item[(ii)] \textit{Cash-Invariance:}\quad If $X \in \mathcal{X}$ and $m \in \mathbb{R}$,
		then $\rho(X+m) = \rho(X) + m$.
	\end{itemize}
\end{defi}
Risk measures may exhibit additional properties such as quasi-convexity, which in economic terms means that diversification of positions does not increase the measurements. This property can be shown to be equivalent to convexity:
\begin{defi}
	A risk measure $\rho: \mathcal{X} \mapsto \mathbb{R}$ is called convex, 
    if for $X, Y \in \mathcal{X}$, $\lambda \in [0,1]$ 
	$$ \rho( \lambda X + (1 - \lambda)Y ) \leq \lambda \rho(X) + (1 - \lambda) \rho(Y). $$
\end{defi}
A DRM is defined as follows:
\begin{defi}
	\begin{itemize}
		\item[(i)] A non decreasing function $g: [0,1] \to [0,1]$ with $g(0) = 0$ and $g(1) = 1$ is
		called distortion function.
		\item[(ii)] Let $\mathsf{P}$ be a probability measure on $(\Omega, \mathcal{F})$ and $g$ 
		be a distortion function. The monetary risk measure $\rho_g : \mathcal{X} \to \mathbb{R}$ 
		defined by
		$$ \rho_g(X) = \int_{- \infty}^0 \left[ g \left( \mathsf{P}( \{ X > x \})  \right) - 1\right] dx + 
		\int_0^\infty g\left( \mathsf{P}( \{ X > x \}) \right) dx $$
		is called a DRM with respect to $g$. 
	\end{itemize}
\end{defi}
If the distortion function $g$ is concave we obtain a convex risk measure (see \cite{FoellmerSchied2016}):
\begin{theo} \label{theo:ConcDist}
	Consider the distortion function $g$ and the corresponding DRM $\rho_g$. If $g$ is concave, then the 
	DRM $\rho_g$ is a convex risk measure. If the underlying probability space 
	is atomless, the converse implication is also true. 
\end{theo}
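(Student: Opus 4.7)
My approach proves the two implications with different tools. The forward direction (concave $g$ implies convex $\rho_g$) is handled structurally via the mixture-of-quantiles representation \eqref{form:DRMDecomp} combined with the classical decomposition of concave distortions into Average Value at Risk operators. The converse, which crucially uses atomlessness, is handled concretely by probing the convexity inequality on indicator variables of carefully chosen events.

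For the forward direction, I rewrite $\rho_g(X) = \int_{[0,1]} q_X(1-u)\, dg(u)$ via \eqref{form:DRMDecomp}; the distinction between $q_Y$ and $q_Y^+$ is immaterial here, since the integrands differ on at most a countable set. Concavity of $g$ together with $g(0) = 0$ and $g(1) = 1$ yields the classical decomposition $g(u) = \int_{(0,1]} g_\alpha(u)\,\mu(d\alpha)$, where $g_\alpha(u) = \min(u/\alpha, 1)$ is the distortion associated with $\text{AV@R}_\alpha$ and $\mu$ is a probability measure on $(0,1]$ built from the decreasing right-derivative of $g$. Applying Fubini produces $\rho_g(X) = \int_{(0,1]} \text{AV@R}_\alpha(X)\,\mu(d\alpha)$. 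Since each $\text{AV@R}_\alpha$ is coherent and hence convex, the positive mixture $\rho_g$ is convex.

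For the converse, fix $0 \le s \le t \le 1$, set $u = (s+t)/2$, and use atomlessness to choose pairwise disjoint events $C_1, C_2, C_3$ with $\mathsf{P}(C_1) = s$ and $\mathsf{P}(C_2) = \mathsf{P}(C_3) = (t-s)/2$. Define $A = C_1 \cup C_2$ and $B = C_1 \cup C_3$, so that $\mathsf{P}(A) = \mathsf{P}(B) = u$, $\mathsf{P}(A \cap B) = s$, and $\mathsf{P}(A \cup B) = t$. With $X = \mathds{1}_A$ and $Y = \mathds{1}_B$, the DRM-defining integral evaluates to $\rho_g(X) = \rho_g(Y) = g(u)$, while $(X+Y)/2$ takes values $1$, $1/2$, $0$ with probabilities $s$, $t-s$, $1-t$, giving $\rho_g((X+Y)/2) = \tfrac{1}{2} g(s) + \tfrac{1}{2} g(t)$. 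Convexity of $\rho_g$ applied with $\lambda = 1/2$ then yields the midpoint concavity $\tfrac{1}{2} g(s) + \tfrac{1}{2} g(t) \le g(\tfrac{s+t}{2})$.

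The main obstacle is promoting midpoint concavity to full concavity. Monotonicity makes $g$ Lebesgue-measurable with at most countably many jumps, so a Sierpi\'{n}ski-type argument upgrades measurable midpoint concavity to concavity on the interior of $[0,1]$; the boundary values $g(0) = 0$ and $g(1) = 1$ combined with monotonicity handle the endpoints. A secondary bookkeeping item on the forward side is justifying the mixture decomposition of $g$, which follows by viewing the decreasing right-derivative of $g$ as a constant multiple of a survival function on $(0,1]$ and applying Fubini; this is classical and can be cited rather than rederived.
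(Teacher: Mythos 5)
The paper does not actually prove this statement: it is quoted from \cite{FoellmerSchied2016} with a citation only, so there is no in-paper argument to compare against. Judged on its own terms, your proof is essentially correct but follows a different route from the textbook one. Föllmer--Schied prove the forward direction by showing that a concave $g$ makes the distorted capacity $c^g = g\circ \mathsf{P}$ submodular and that Choquet integrals with respect to submodular capacities are subadditive (hence, with positive homogeneity and monotonicity, convex); the converse is obtained by showing that convexity forces submodularity of $c^g$ and then, via atomlessness, concavity of $g$. Your forward direction instead goes through the Kusuoka-type mixture of $\mathrm{AV@R}_\alpha$'s, which is heavier machinery but makes the convexity completely transparent; your converse is a clean, self-contained probing argument with indicators that is close in spirit to the textbook one and correctly isolates where atomlessness enters (constructing $C_1,C_2,C_3$ with prescribed probabilities). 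The Sierpi\'nski upgrade from measurable midpoint concavity to concavity is legitimate and the endpoint discussion is adequate.

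One boundary case deserves more care in your forward direction. A concave distortion with $g(0)=0$ may jump at $0$ (e.g.\ $g=\mathds{1}_{(0,1]}$, which gives $\rho_g = \operatorname{ess\,sup}$), and then no probability measure $\mu$ on $(0,1]$ satisfies $g(u)=\int_{(0,1]}\min(u/\alpha,1)\,\mu(d\alpha)$; you need to allow an atom at $\alpha=0$ contributing a $g(0+)\cdot\operatorname{ess\,sup}X$ term, which is still convex, so the conclusion survives. Relatedly, your remark that the distinction between $q_X$ and $q_X^+$ is ``immaterial'' because they differ on a countable set is not quite right when $dg$ has an atom at $u=0$ (there $q_X^+(1)=+\infty \neq q_X(1)$); the correct justification is that a concave distortion with $g(0)=0$ is automatically left-continuous, so Theorem~\ref{theo:DRMQuantMix}(ii) with $q_X$ applies directly, which is in fact the representation you use.
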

DRMs are expressible as mixture of quantiles. One must focus on the details of the continuity properties of the distortion function to obtain the correct representation, as shown in \cite{Dhaene2012}.
\begin{theo} \label{theo:DRMQuantMix}
\begin{itemize}
	\item[(i)] Let $g$ be a right continuous distortion function. Then the DRM $\rho_g(X)$ is
	given by
	$$ \rho_g(X) = \int_{[0,1]} q_X^+ ( 1 - u) dg(u), $$
	where $q_X^+(u) = \sup \{x \vert F_X(x) \leq u \}$. 
	\item[(ii)] Let $g$ be a left continuous distortion function. Then the DRM $\rho_g(X)$ is given by
	$$ \rho_g(X) = \int_{[0,1]} q_X(1 -u) dg(u) = \int_{[0,1]} q_X(u) d \overline{g}(u), $$
	where $q_X(u) = \inf \{x \vert F(x) \geq u\}$ and $\overline{g}(u) = 1 - g(1 -u)$, $u \in [0,1]$. 
\end{itemize}
\end{theo}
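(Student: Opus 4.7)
The plan is to derive both representations from the ``layer cake'' definition of $\rho_g$ by writing the distortion $g$ itself as an integral against $dg$ and then swapping the order of integration via Fubini. Because $\mathcal{X}$ consists of bounded random variables, all iterated integrals in sight are finite and Fubini applies without issue.

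For part (i), since $g$ is right-continuous with $g(0)=0$, the associated Lebesgue--Stieltjes measure satisfies $g(t) = \int_{(0,1]} \mathds{1}_{\{u\le t\}}\,dg(u)$ and, using $g(1)=1$, $1-g(t) = \int_{(0,1]} \mathds{1}_{\{u>t\}}\,dg(u)$. Applying these to $t=\mathsf{P}(X>x)$ and substituting into the defining formula for $\rho_g(X)$ gives
\begin{align*}
\rho_g(X) &= -\int_{-\infty}^{0}\!\!\int_{(0,1]} \mathds{1}_{\{u>\mathsf{P}(X>x)\}}\,dg(u)\,dx + \int_{0}^{\infty}\!\!\int_{(0,1]} \mathds{1}_{\{u\le \mathsf{P}(X>x)\}}\,dg(u)\,dx.
\end{align*}
Fubini then lets me swap the order. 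The key identification is that $u\le \mathsf{P}(X>x) \iff F_X(x)\le 1-u$, and since $F_X$ is right-continuous, the set $\{x:F_X(x)\le 1-u\}$ equals $(-\infty, q_X^+(1-u)]$. Consequently the inner $x$-integrals evaluate to $(q_X^+(1-u))^+$ on the positive side and to $(-q_X^+(1-u))^+$ on the negative side, which combine to give exactly $q_X^+(1-u)$. Thus $\rho_g(X) = \int_{(0,1]} q_X^+(1-u)\,dg(u)$, and the point $\{0\}$ contributes nothing to $dg$ so the integral can be written over $[0,1]$.

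For part (ii), when $g$ is left-continuous the natural decomposition is $g(t) = \int_{[0,1)} \mathds{1}_{\{u<t\}}\,dg(u)$ with strict inequalities; the set $\{x:F_X(x)<1-u\}$ is then $(-\infty, q_X(1-u))$, and by the same Fubini argument I get the inner integrals $(q_X(1-u))^+$ and $(-q_X(1-u))^+$, yielding $\rho_g(X) = \int_{[0,1]} q_X(1-u)\,dg(u)$. The second equality in (ii) then follows by the deterministic change of variable $v=1-u$: the function $\overline g(v):=1-g(1-v)$ is right-continuous with $\overline g(0)=0$, $\overline g(1)=1$, and the pushforward of $dg$ under $u\mapsto 1-u$ is $d\overline g$, so $\int_{[0,1]} q_X(1-u)\,dg(u) = \int_{[0,1]} q_X(v)\,d\overline g(v)$.

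The main obstacle I anticipate is purely bookkeeping: getting the endpoints $\{0\}$ and $\{1\}$ right and correctly pairing the right-continuous versus left-continuous versions of $g$ with the corresponding $q_X^+$ versus $q_X$. Concretely, in part (i) I must justify that the jump of $g$ at $0$ (if any) does not matter, using $g(0)=0$ and right-continuity, and in part (ii) I must use the fact that $q_X(1-u)=q_X^+(1-u)$ for $dg$-almost every $u$ when $g$ is left-continuous, so that the two representations actually coincide on the set where the two quantile variants differ (which is at most countable). Once these conventions are pinned down, the proof is essentially a single application of Fubini plus identification of indicator-sets of level crossings of $F_X$.
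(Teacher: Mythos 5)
The paper does not prove this theorem itself; it is quoted from \cite{Dhaene2012} without proof, so there is no in-paper argument to compare against. Your layer-cake-plus-Fubini derivation is the standard route (and essentially the one in the cited reference): writing $g(t)=\int \mathds{1}_{\{u\le t\}}\,dg(u)$ (resp.\ with strict inequality for the left-continuous case), swapping the order of integration, and identifying the level sets $\{x: F_X(x)\le 1-u\}=(-\infty,q_X^+(1-u)]$ or $(-\infty,q_X^+(1-u))$ and $\{x:F_X(x)<1-u\}=(-\infty,q_X(1-u))$ is exactly what makes the right-/left-continuity of $g$ pair up with $q_X^+$ versus $q_X$, and boundedness of $X$ makes Fubini unproblematic. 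The substance of your proof is correct.

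One remark in your final paragraph should be deleted rather than ``pinned down'': the claim that for left-continuous $g$ one has $q_X(1-u)=q_X^+(1-u)$ for $dg$-almost every $u$ is false in general --- take $g=\mathds{1}_{(\alpha,1]}$, which is left-continuous with $dg=\delta_\alpha$, giving $\rho_g=V@R_\alpha$; if $F_X$ is flat at level $1-\alpha$ the two quantile variants differ exactly where $dg$ puts all its mass. Fortunately you do not need this claim anywhere: your strict-inequality decomposition in part (ii) produces $q_X$ directly, and the two parts of the theorem are separate statements for separate classes of $g$, not two representations of the same integral that must be reconciled. With that sentence removed, the proof stands.
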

Many important risk measures fall into the class of DRMs; for examples see \cite{ChernyMadan2008},
\cite{FoellmerSchied2016}, \cite{Weber2018}, and the Appendix~\ref{app:ex_DRM}. 
Particularly important examples will be discussed here:

\begin{exam} \label{ex:DRM}
\begin{itemize}
	\item[(i)] Let $g(u) = \mathds{1}_{(\alpha, 1]} (u)$, then $\rho_g$ is  the 
	Value at Risk at level $\alpha$, so that
	$$ \rho_g(X) = V@R_\alpha(X) = \inf \{x \vert F(x) \geq 1 - \alpha \}. $$
 	\item[(ii)] The distortion function $g(u) = \frac{u}{\alpha} \mathds{1}_{[0, \alpha]}(u) + 
 	\mathds{1}_{( \alpha, 1]}(u)$ yields the Average Value at Risk at level $\alpha$, i.e.,
	$$ \rho_g(X) = AV@R_\alpha(X) = \frac{1}{\alpha} \int_0^\alpha V@R_\lambda (X) d\lambda. $$
    \item[(iii)] The distortion function $g(u) = \left( \frac{u}{\alpha} \right)^\gamma
    \mathds{1}_{[0, \alpha]}(u) + \mathds{1}_{(\alpha, 1]}(u)$ with $\alpha \in (0,1]$ 
    and $\gamma \in \mathbb{R}_{>0}$ generalizes the distortion function of the AV@R ($\gamma = 1$) 
    and other special cases such as the hazard transform ($\gamma \geq 1$, $\alpha = 1$) and MAXV@R 
    ($\gamma \in \mathbb{N}, \alpha = 1$). \\
    If $\gamma \leq 1$, the distortion function $g_{\alpha, \gamma}$ is concave such that 
    the corresponding DRM is convex. If $\gamma > 1$ the distortion function is convex on the interval $[0,\alpha]$. 
    In this case, the resulting DRM is not convex.   
\end{itemize}
\end{exam}

\begin{rema} \label{rema:DRMDecomp}
	Every distortion function $g$ can be decomposed in the convex combination
	of a left and right continuous distortion function (see \cite{Dhaene2012}), such that 
	$g(u) = c_1 g_1(u) + c_2 g_2(u)$ with $c_1 + c_2 = 1$ and $c_1, c_2 \geq 0$.
	As a consequence, any distortion risk measure $\rho_g$ with general
	distortion function can be expressed as convex combination $\rho_g(X) = c_1 \rho_{g_1}(X) + c_2 
	\rho_{g_2}(X)$. 
	The decompositions of $g$ and $\rho_g$ is not unique, unless $g$ is a step function. 
	 \cite{BettelsKimWeber2022} point out that a decomposition of $g$ into a left and a right continuous step
	function and a continuous function is unique. 
\end{rema}	

\subsection{Comonotonic Risk Measures} \label{app:DRM}

We review the connections between comonotonic risk measures, Choquet integrals and DRMs. More details can be found in \cite{FoellmerSchied2016}.
 $(\Omega, \mathcal{F})$ is a measurable space on which the financial positions in $\mathcal{X}$
are defined.

\begin{defi}\label{def:comonotonicity} 
    \begin{itemize}
	    \item[(i)] Two measurable functions $X,Y$ on $(\Omega, \mathcal{F})$ are called comonotonic if 
	    \[ (X(\omega) - X(\omega^\prime))(Y(\omega) - Y(\omega^\prime))
	    \geq 0 \hspace{5mm} \forall (\omega, \omega^\prime) 
	    \in \Omega \times \Omega. \]
	    \item[(ii)] A risk measure $\rho: \mathcal{X} \rightarrow \mathbb{R}$ is called comonotonic if 
	    \[ \rho(X +Y) = \rho(X) + \rho(Y), \]
	    for comonotonic $X, Y \in \mathcal{X}$.
    \end{itemize}
\end{defi}

Comonotonic risk measures are expressible as Choquet integrals with respect to capacities.

\begin{defi}
\begin{itemize}
    \item[(i)] A map $c : \mathcal{F} \to  [0, \infty)$ is called monotonic set function 
    if it satisfies the following properties:
	\begin{itemize}
	    \item[a)] $c(\emptyset) = 0$. 
	    \item[b)] $A,B \in \mathcal{F}$, $A \subseteq B$ $\Rightarrow$ $c(A) \leq c(B)$.
	\end{itemize}
    If, in addition, $c(\Omega) = 1$, i.e., $c$ is normalized, then $c$ is called a 
    capacity. 
	\item[(ii)] For $X \in \mathcal{X}$ the Choquet integral of $X$ with respect to the monotone set function 
	$c$ is defined by 
	\[ \int X dc = \int_{- \infty}^0 [ c(\{X > x\}) - c(\Omega)] dx + \int_0^\infty c(\{X > x\}) dx . \]
\end{itemize}
\end{defi}

The Choquet integral coincides with the Lebesgue integral if $c$ is a $\sigma$-additive probability measure. 
The following characterization theorem can, for example, be found in Chapter 4 of \cite{FoellmerSchied2016}.

\begin{theo} \label{theo:DRMComonotonic}
    A monetary risk measure $\rho: \mathcal{X} \to \mathbb{R}$ is comonotonic, if and only if there 
    exists a capacity $c$ on $(\Omega, \mathcal{F})$ such that 
    \[ \rho(X) = \int X dc. \]
\end{theo}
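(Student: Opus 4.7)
The plan is to prove both implications separately. For the ``if'' direction, assume $\rho(X) = \int X\, dc$ for some capacity $c$. Monotonicity follows because $X \leq Y$ implies $\{X > x\} \subseteq \{Y > x\}$ for all $x$, and by monotonicity of $c$ the integrands in the two pieces of the Choquet integral are then pointwise ordered. Cash-invariance follows from the substitution $y = x - m$ in the definition of the Choquet integral together with $c(\Omega) = 1$, which contributes the extra term $m \cdot c(\Omega) = m$. Comonotonic additivity requires more care: for comonotonic $X, Y$ the family of their superlevel sets forms a single nested chain, and this permits the layer-cake decomposition of $X + Y$ to split additively into the layer-cakes of $X$ and $Y$.

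For the ``only if'' direction, I would define $c(A) := \rho(\mathds{1}_A)$ for $A \in \mathcal{F}$. Comonotonic additivity applied to the pair $X = Y = 0$ (trivially comonotonic in the sense of Definition~\ref{def:comonotonicity}) yields $\rho(0) = 0$, hence $c(\emptyset) = 0$; cash-invariance then gives $c(\Omega) = \rho(\mathds{1}_\Omega) = \rho(0) + 1 = 1$. Combined with monotonicity of $\rho$, this shows $c$ is a capacity. Next I would establish $\rho(X) = \int X\, dc$ first on non-negative simple functions admitting a nested representation $X = \sum_{i=1}^n a_i \mathds{1}_{A_i}$ with $a_i > 0$ and $A_1 \supseteq A_2 \supseteq \cdots \supseteq A_n$. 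The indicators $\mathds{1}_{A_i}$ are pairwise comonotonic, and by induction on $n$, comonotonic additivity of $\rho$ delivers $\rho(X) = \sum_{i=1}^n a_i c(A_i)$, which coincides with the Choquet integral of $X$ by a direct calculation from the definition.

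Extension to arbitrary $X \in \mathcal{X}$ proceeds in two steps. For non-negative bounded $X$, approximate uniformly from below and above by dyadic nested-indicator simple functions of the form $X_n = \sum_{k=1}^{\lceil 2^n \Vert X\Vert_\infty\rceil} 2^{-n} \mathds{1}_{\{X \geq k/2^n\}}$; then monotonicity of $\rho$ together with cash-invariance (used to absorb the uniform error into an additive shift) sandwiches $\rho(X)$ between the Choquet integrals of the approximants, which converge to $\int X\, dc$. For general $X \in \mathcal{X}$, boundedness lets one pick $m$ with $X + m \geq 0$, and the cash-invariance of both $\rho$ and the Choquet integral reduces the identity to the non-negative case already handled.

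The main obstacle is the comonotonic-additivity step for finite sums. Pairwise comonotonicity of $\mathds{1}_{A_1}, \dots, \mathds{1}_{A_n}$ does not a priori propagate through the induction: one must verify that each partial sum $\sum_{i=1}^k a_i \mathds{1}_{A_i}$ remains comonotonic with the next indicator $\mathds{1}_{A_{k+1}}$. The clean route is to exploit the chain $A_1 \supseteq \cdots \supseteq A_n$ by representing every summand as an increasing function of the integer-valued ``depth'' variable $Z = \sum_{i=1}^n \mathds{1}_{A_i}$; the fact that any two increasing functions of a common real-valued random variable are comonotonic then makes the inductive step transparent and underpins the entire argument.
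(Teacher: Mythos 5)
The paper does not actually prove Theorem~\ref{theo:DRMComonotonic}; it quotes it as a known characterization from Chapter 4 of \cite{FoellmerSchied2016}, so there is no internal proof to compare against. Your sketch reconstructs the standard argument from that literature (Schmeidler/Denneberg): define $c(A)=\rho(\mathds{1}_A)$, derive $\rho(0)=0$ from comonotonic additivity so that $c$ is a normalized capacity, establish the representation on nested simple functions, and pass to general bounded $X$ by dyadic sandwiching together with cash-invariance. The sketch is sound, and you correctly identify and resolve the one genuinely delicate point in the ``only if'' direction: pairwise comonotonicity of the indicators does not by itself license the inductive use of comonotonic additivity, and writing each partial sum and the next indicator as increasing functions of the depth variable $Z=\sum_{i}\mathds{1}_{A_i}$ is the right repair. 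The thinnest spot is the ``if'' direction: comonotonic additivity of the Choquet integral is itself a nontrivial claim, and ``the superlevel sets form a nested chain'' only settles it for comonotonic \emph{simple} functions rewritten over a common chain with nonnegative coefficients. For general comonotonic bounded $X,Y$ you still need to discretize both through increasing transformations (which preserve comonotonicity) and invoke the fact that the Choquet integral is Lipschitz with respect to the supremum norm — itself a consequence of monotonicity and cash-invariance — to pass to the limit. With that approximation step made explicit, the argument is complete and coincides with the textbook proof the paper points to.
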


DRMs are an important special case of the comonotonic risk measures. 
In this case, the capacity is defined in terms of a distorted probability measure $\mathsf{P}$. 
The resulting capacity is absolutely continuous with respect to $\mathsf{P}$, but typically not additive. 
	   
\begin{defi}    
    \begin{itemize}
	    \item[(i)] If $\mathsf{P}$ is a probability measure on $(\Omega, \mathcal{F})$ and $g$ is a distortion 
	    function, then
	    \[ c^g(A) := g(\mathsf{P}(A)) , \hspace{5mm} A \in \mathcal{F}, \]
	    is called a distorted probability.
	    \item[(ii)] A comonotonic risk measure
	   $\rho(X) = \int X dc $
	    is called a DRM, if the capacity $c$ can
	    be expressed as a distorted probability. 
	\end{itemize}
\end{defi} 

\begin{table}
 	\small
 	\renewcommand{\arraystretch}{1.4}
	\begin{tabularx}{1.0\textwidth}{|l|c|X|l|}
				\hline 
				Name & Distortion & Closed form &  Reference \\
				\hline 
				\hline
				$\mathrm{MINV@R}$ & $1 - (1 - u)^n$ & \scriptsize $- \mathsf{E}[\min \{ - X_1, \dots, - X_n \}]$ &  \scriptsize \cite{ChernyMadan2008} \\
				&& \scriptsize $= \mathsf{E}[ \max \{ X_1, \dots, X_n\}]$ & \scriptsize\cite{FoellmerSchied2016} \\
				& & & \scriptsize \cite{BannScherer2014} \\
				\hline
				$\mathrm{MAXV@R} $ & $u^{ 1 / n}$ & \scriptsize  $- \mathsf{E}[ Y_1 ]$ &\scriptsize \cite{ChernyMadan2008} \\
				&&\scriptsize \mbox{such that~}  & \scriptsize \cite{FoellmerSchied2016} \\
				& & \scriptsize $\max \{Y_1, \dots, Y_n\} \sim - X$ & \scriptsize \cite{BannScherer2014} \\
				\hline 
				$\mathrm{MINMAXV@R}$ & $1 - ( 1- u^{1 / n})^n$ 
				& \scriptsize $- \mathsf{E}[ \min \{ Y_1, \dots, Y_n\}]
				$  & \scriptsize \cite{ChernyMadan2008} \\
				& & \scriptsize \mbox{such that}& \scriptsize \cite{FoellmerSchied2016} \\
				& & \scriptsize$\max\{Y_1, \dots, Y_n \} \sim - X$& \scriptsize \cite{BannScherer2014} \\
				\hline
				$\mathrm{MAXMINV@R}$ &  $(1 - (1 - u)^n)^{1 / n}$ &\scriptsize $- \mathsf{E} [Y_1]$ & 
				\scriptsize \cite{ChernyMadan2008} \\
				& & \scriptsize\mbox{such that } & \scriptsize \cite{FoellmerSchied2016} \\
				& & \scriptsize $\max\{Y_1, \dots, Y_n\} $ & \scriptsize \cite{BannScherer2014} \\
				& & \scriptsize $\sim \min \{ - X_1, \dots, -X_n\} $ & \\
				\hline
				$RV@R$ & $\frac{u - \beta}{\alpha - \beta} \mathds{1}_{\{\beta < u \leq \alpha \}} 
				+ \mathds{1}_{\{ u > \alpha \}}$ & $ \frac{1}{\alpha - \beta} \int_\beta^{\alpha} V@R_\lambda(X) d\lambda$ & \scriptsize
				\cite{BignozziTsanakas2015} \\
				(Range $V@R$) &$0 < \beta < \alpha < 1$  & &  \scriptsize \cite{Weber2018}, \cite{LiEtAl2018} \\
				\hline
				\footnotesize Proportional   &  $u^{1 / \gamma}$ & $\int_0^\infty (1 - F(x))^{1 / \gamma} dx,$ &
				\scriptsize \cite{Wang1995,Wang1996}  \\
				\footnotesize hazard transform &  $\gamma > 1$ &if $X \geq 0$ a.s. &  \scriptsize \cite{GuillenEtAl2018}\\
				\hline
				Dual power & $1 - (1 - u)^\gamma$ & $\int_0^\infty 1 - F(x)^\gamma dx$, & \scriptsize
				\cite{WirchHardy1999} \\
				transform & $\gamma > 1$ &if $X \geq 0$ a.s. & \scriptsize \cite{GuillenEtAl2018}\\
				\hline 
				Gini's principle & $(1-\vartheta) u + \vartheta u^2$ & $\mathsf{E}[X] + \frac{\vartheta}{2} \mathsf{E}[|X-X_1|]$ & \scriptsize \cite{Yitzhaki1982},\cite{Wozabal2014}\\
				& $0 < \vartheta < 1$ & & \scriptsize \cite{GuillenEtAl2018}\\
				\hline
				Exponential & $\frac{1 - \exp(-ru)}{1-\exp(-r)}$ if $r>0$ &\centering - & \scriptsize \cite{MethniStupfler2017}\\
				transform & $u$ if $r=0$ & & \scriptsize \cite{Dowd2008}\\
				\hline
				Inverse S-shaped & $a \left[ \frac{u^3}{6} - \frac{\delta u^2}{2} + \left(\frac{\delta^2}{2}+\beta \right) u\right]$& & \scriptsize \cite{GueganHassani2014}\\
				polynomial & $a = \left( \frac{1}{6} - \frac{\delta}{2} + \frac{\delta^2}{2} + \beta \right)^{-1}$& \centering - & \scriptsize \cite{MethniStupfler2017}\\
				of degree 3 & $0 < \delta < 1, \beta \in \mathbb{R}$& & \\
				\hline
				Beta family &  $\int_0^u  \frac{t^{a-1}(1-t)^{b-1}}{B(a,b)}  dt$ & \centering - & \scriptsize \cite{SamanthiSepanski2018} \\
				& $a,b >0$ & &\scriptsize \cite{WirchHardy1999}  \\
				\hline
				Wang transform & \scriptsize $\Phi(\Phi^{-1}(u) - \Phi^{-1}(q) )$ & \centering - & \scriptsize \cite{Wang2000, Wang2001}\\
				& $0 < q < 1$ & & \scriptsize \cite{Wozabal2014}\\
				\hline
	\end{tabularx}
	\caption{Further examples of distortion risk measures of a random variable $X$.  Table 1 of the online appendix of \cite{MethniStupfler2017} provides these examples of distortion functions; we include this table of examples as a convenient reference for the reader. In the third column, $X_1, \dots, X_n$ denote independent copies of $X$, $n \in \mathbb{N}$;  $Y_1, \dots, Y_n$ are suitable iid random variables satisfying the conditions given in the third column of the table. $B$ denotes the beta function, $\Phi, \Phi^{-1}$ the distribution and quantile function of the standard normal distribution, respectively.}
	\label{tab:DRMExam}
\end{table}

\subsection{Examples of DRMs}\label{app:ex_DRM}

For further reference, we include a list of examples of distortion risk measure in Table~\ref{tab:DRMExam} which was compiled in the online appendix of \cite{MethniStupfler2017}.

\subsection{Asymptotics of Quantile Estimators in Importance Sampling} \label{app:QuantEstCLT}     

The importance sampling estimator in Section~\ref{sec:SetScene}, eq.~\eqref{form:QuantEstDefi} 
is studied, along with other alternatives, in \cite{Glynn1996}. We can rewrite the estimator in \eqref{form:QuantEstDefi} as 
    $$ \hat{q}_{F^\ast, N}(u) = \inf \left\{ x \in \mathbb{R} \; \bigg\vert \; \frac{1}{N} \sum_{i=1}^N
    \frac{dF}{dF^\ast}(X_i) \mathds{1}_{\{h(X_i) \leq x \}} \geq u \right\}.$$
 Setting $F^\ast = F$, the estimator coincides with the crude Monte Carlo estimator of quantiles, the 
    empirical quantile. 
     We analyze under which conditions the estimator in  eq.~\eqref{form:QuantEstDefi} is finite. For this purpose, we first consider a deterministic problem. Let $\xi_1, \dots, \xi_N \in \bbr$, $\gamma_i \geq 0$ for $i=1,2, \dots, N$ and $z>0$. Then
$q:= \inf \left\{x\in \bbr\,|\, \sum_{\xi_i>x} \gamma_i \leq z\right\} \in \bbr \; \Longleftrightarrow \; \sum \gamma_i > z .$
To see this, we observe that $q = - \infty$ is equivalent to $\sum_{\xi_i>x}  \gamma_i \leq z$ for all $x$, which simply means that $\sum \gamma_i \leq z$. This proves the claim, since $q=\infty$ is equivalent to $\sum_{\xi_i>x} \gamma_i >z $ for all $x$, but for large enough $x$ the sum is empty and equal to $0$, contradicting $z>0$.

The simple characterization implies for $u \in (0,1)$ that
\begin{equation}\label{prop:finiteQuantEst}
\hat{q}_{F^\ast, N}(u) \in \bbr \quad \Longleftrightarrow \quad \frac{1}{N} \sum_{i=1}^N \frac{dF}{dF^\ast} (X_i) > 1- u 
\end{equation}
Assuming the samples $(X_1, h(X_1)), \dots, (X_N, h(X_N))$ from $F^\ast$ are independent and identically distributed, we obtain by a law of large numbers that
$\frac{1}{N} \sum_{i=1}^N \frac{dF}{dF^\ast} (X_i)  \longrightarrow  \mathsf{E}_{F^\ast} \left[   \frac{dF}{dF^\ast} (X)        \right]  = 1$, thus eq. \eqref{prop:finiteQuantEst} is satisfied for $N$ large enough.

The asymptotic normality of the estimator $\hat{q}_{F^\ast, N}(u)$ can be shown, if the following assumptions hold. This is stated in Theorem~\ref{theo:CLT} of \cite{AhnShy2011}, generalizing \cite{Glynn1996}. 

\begin{ass} \label{ass:ISDist}
    Let $G$, $G^\ast$ be the distribution functions of $h(X)$, if $X$ is distributed according to  $F$, $F^\ast$, respectively. \\
    Assume that for $u \in (0,1)$ the following properties hold: 
    \begin{itemize}
        \item[(A1)] $G$ is absolutely continuous with respect to $G^\ast$. 
        \item[(A2)] $G^\ast$ is continuous at $q_Y(u)$.
        \item[(A3)] $G$ has a strictly positive first derivative at $q_Y(u)$.
        \item[(A4)] $\frac{dG}{dG^\ast}(\cdot)$ is a function of finite variation on compacts and 
        has finite negative variation on $(y, \infty)$ for all $y \in \mathbb{R}$.
        \item[(A5)] $\frac{dG}{dG^\ast}(\cdot)$ is right continuous.
        \item[(A6)] There exists a $\lambda \in (0, 1/2]$ such that 
        $$ \int_y^\infty (1 - G^\ast (x-))^{1 / 2 - \lambda} d \left\vert \frac{dG}{dG^\ast}(x) \right\vert < 
        \infty \hspace{10mm} \forall y \in \mathbb{R}. $$
    \end{itemize}
\end{ass}

\begin{rema}\label{rema:ass}
	These assumptions of \cite{AhnShy2011} are weaker than the assumptions
	of \cite{Glynn1996} to obtain the implications in Theorem
	\ref{theo:CLT}. \cite{Glynn1996} assumes that (A1) to (A3) hold and $\mathsf{E}_{G^\ast}\left[ \frac{dG}{dG^\ast}(X)^3\right] < \infty$. 
	The latter is replaced by assumption (A6), together with the technical conditions (A4) and (A5); here, 
    $\vert \cdot \vert$ denotes total variation.\footnote{If $f : \mathbb{R} \mapsto \mathbb{R}$ is a function of bounded variation, there
	exist two increasing functions $f^+, f^- : \mathbb{R} \mapsto \mathbb{R}$ with $f = f^+ - f^-$,
	and $\vert f \vert := f^+ + f^-$. 
	The former is called the Jordan decomposition of $f$ and is closely related to the Hahn decomposition of
	signed measures. For details we refer to \cite{Folland1999}.} \cite{AhnShy2011} show that 
	if $\mathsf{E}_{G^\ast} \left[  \frac{dG}{dG^\ast}^{2 + \delta}\right] < \infty$ holds for some $\delta > 0$, then 
	(A6) is satisfied for $\lambda \in ( 0, \delta / ( 4 + 2 \delta))$.
\end{rema}

\begin{prop} \label{prop:ineq}
    If Assumption \ref{ass:ISDist} holds, we obtain for $u \in (0,1)$: 
    \begin{itemize}
        \item[(i)]  $\mathsf{E}_{F^\ast} \left[ \frac{dF}{dF^\ast}(X)^2 \mathds{1}_{\{h(X) \in (q_Y(u), \infty)\}} 
        \right] \geq (1 - u)^2$ 
        \item[(ii)]  Suppose that $F$ and $F^\ast$ are equivalent. Then $$\mathsf{E}_{F^\ast} \left[ \frac{dF}{dF^\ast}(X)^2 \mathds{1}_{\{h(X) \in (q_Y(u) , \infty) \}} \right] 
        \geq (1 - G^\ast(q_Y(u))^{-1}$$
    \end{itemize}
\end{prop}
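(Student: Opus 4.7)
The plan is to obtain both inequalities from a single application of the Cauchy--Schwarz inequality, with the preparatory observation that Assumption~\ref{ass:ISDist}(A3) forces $G$ to be continuous at $q_Y(u)$, so that $G(q_Y(u))=u$ exactly. This identity is the bridge between the probability $\mathsf{P}_F(h(X)>q_Y(u))$ that naturally appears after a change of measure and the level $1-u$ that appears on the right-hand side of the two bounds.

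First, I would rewrite the expectation of the likelihood ratio on the upper tail as a probability under $F$:
\begin{equation*}
\mathsf{E}_{F^\ast}\!\left[\tfrac{dF}{dF^\ast}(X)\,\mathds{1}_{\{h(X)>q_Y(u)\}}\right]
\;=\;\mathsf{P}_F(h(X)>q_Y(u))\;=\;1-G(q_Y(u))\;=\;1-u,
\end{equation*}
where the last equality uses Assumption~(A3): a strictly positive derivative at $q_Y(u)$ implies continuity of $G$ there, hence $G(q_Y(u))=u$. Next I would apply the Cauchy--Schwarz inequality under $\mathsf{P}_{F^\ast}$ to the factorization $\tfrac{dF}{dF^\ast}(X)\,\mathds{1}_{\{h(X)>q_Y(u)\}} = \bigl(\tfrac{dF}{dF^\ast}(X)\,\mathds{1}_{\{h(X)>q_Y(u)\}}\bigr)\cdot \mathds{1}_{\{h(X)>q_Y(u)\}}$, yielding
\begin{equation*}
(1-u)^2 \;\leq\; \mathsf{E}_{F^\ast}\!\left[\tfrac{dF}{dF^\ast}(X)^2\,\mathds{1}_{\{h(X)>q_Y(u)\}}\right]\cdot\bigl(1-G^\ast(q_Y(u))\bigr).
\end{equation*}

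For part (i) I would then simply note that $1-G^\ast(q_Y(u))\in[0,1]$, so dropping this factor gives $\mathsf{E}_{F^\ast}[\tfrac{dF}{dF^\ast}(X)^2\mathds{1}_{\{h(X)>q_Y(u)\}}]\geq (1-u)^2$. For part (ii), under the assumption that $F$ and $F^\ast$ are equivalent, one has $\mathsf{P}_{F^\ast}(h(X)>q_Y(u))>0$ (otherwise the same set would have $F$-probability zero, contradicting $1-u>0$), so $1-G^\ast(q_Y(u))>0$ and dividing yields the stated bound (interpreting the right-hand side as $(1-u)^2\bigl(1-G^\ast(q_Y(u))\bigr)^{-1}$, which is sharper than (i)).

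There is no real obstacle here; the only point requiring care is the justification $G(q_Y(u))=u$, which must be explicitly tied to Assumption~(A3), and the positivity of $1-G^\ast(q_Y(u))$ in (ii), which follows from the equivalence of $F$ and $F^\ast$ together with the nontriviality $1-u>0$. Everything else is a direct Cauchy--Schwarz argument.
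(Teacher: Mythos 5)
Your argument is correct, and it takes a cleaner, unified route than the paper: you derive both bounds from the single Cauchy--Schwarz inequality $(1-u)^2=\bigl(\mathsf{E}_{F^\ast}[\tfrac{dF}{dF^\ast}(X)\mathds{1}_A]\bigr)^2\leq \mathsf{E}_{F^\ast}[\tfrac{dF}{dF^\ast}(X)^2\mathds{1}_A]\,(1-G^\ast(q_Y(u)))$ with $A=\{h(X)>q_Y(u)\}$, whereas the paper proves (i) by Jensen applied to $x\mapsto x^2$ (which gives $(1-u)^2$ directly, without the intermediate factor $1-G^\ast(q_Y(u))$) and proves (ii) separately by rewriting the second moment as $\mathsf{E}_F[(\tfrac{dF^\ast}{dF}(X))^{-1}\mathds{1}_A]$ and invoking Jensen for $x\mapsto 1/x$. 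Your justification of $G(q_Y(u))=u$ via (A3) and of $1-G^\ast(q_Y(u))>0$ via absolute continuity is exactly the care the paper's proof of (i) also takes implicitly.

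The more important point is the one you flag in passing: your computation yields $\mathsf{E}_{F^\ast}[\tfrac{dF}{dF^\ast}(X)^2\mathds{1}_A]\geq (1-u)^2\,(1-G^\ast(q_Y(u)))^{-1}$, \emph{not} the bound $(1-G^\ast(q_Y(u)))^{-1}$ that the proposition literally asserts in (ii). The literal claim is false: take $F^\ast=F$, so the left-hand side equals $1-u<1$ while the asserted right-hand side is $(1-u)^{-1}>1$. The paper's own proof of (ii) is where this goes wrong --- Jensen's inequality for $x\mapsto 1/x$ cannot be applied to $\mathsf{E}_F[(\tfrac{dF^\ast}{dF}(X))^{-1}\mathds{1}_A]$ without conditioning on the event $A$; doing the conditioning correctly reintroduces the factor $\mathsf{P}_F(A)^2=(1-u)^2$ and lands exactly on your bound. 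So your ``reinterpretation'' of the right-hand side is not a charitable reading of an ambiguous parenthesis but the correct statement of (ii), and your proof of it is sound; it is also genuinely sharper than (i), as you note, since $1-G^\ast(q_Y(u))\leq 1$.
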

\begin{proof}
    \begin{itemize}
        \item[(i)] By assumption $G$ is continuous at $q_Y( u)$, implying $G(q_Y(u)) = u$. The inequality is thus a simple consequence of Jensen's inequality.

        \item[(ii)] The function $f(x) = 1 / x$ is convex 
        function on $\mathbb{R}_{>0}$. Hence, by Jensen's inequality we have
       $$
			\mathsf{E}_{F^\ast} \left[ \frac{dF}{dF^\ast}(X)^2 \mathds{1}_{\{ h(X) > q_Y(u) \}} \right] 
			= \mathsf{E} \left[ \left(\frac{dF^\ast}{dF}(X) \right)^{-1} \mathds{1}_{\{ h(X) > q_Y(u) \}} \right] 
			\geq \left( 1 - G^\ast( q_Y(u)) \right)^{-1}. 
		$$
    \end{itemize}
\end{proof}

We now consider the estimation of the quantile $q_Y(1 - \alpha)$, $\alpha \in (0,1)$, by 
$\hat{q}_{F_\vartheta, N}(1 - \alpha)$, the estimator defined in Section~\ref{sec:SetScene}, eq.~\eqref{form:QuantEstDefi}. 
According to Theorem~\ref{theo:CLT} we should choose $F_\vartheta$ such that 
\begin{align}
    \mathsf{E}_{F_\vartheta} \left[ \frac{dF}{dF_\vartheta} (X)^2 \mathds{1}_{\{h(X) > q_Y(1 - \alpha)\}} \right]
    \label{form:SecMom}
\end{align}   
is small. We consider exponential twists given in eq.~\eqref{form:SampleDist}. The following standard result for the cumulant generating function is useful:
\begin{lem} \label{lem:CumuDeri}
    Letting $F: \mathbb{R}^d \rightarrow [0,1]$ be the distribution function of $X$,
    $h: \mathbb{R}^d \rightarrow \mathbb{R}$ 
    a measurable function. If $\psi(\vartheta+t) = \log( \mathsf{E} [ \exp((\vartheta+t) h(X) ) ]) < \infty$ for all $t$ in some neighborhood of $0$, then 
 $ \psi^\prime(\vartheta) = \mathsf{E}_{F_\vartheta}[h(X)], $
    where $F_\vartheta$ is the family of distributions defined in (\ref{form:SampleDist}). 
\end{lem}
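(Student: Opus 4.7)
The plan is to prove this via the standard identity $\psi'(\vartheta) = M'(\vartheta)/M(\vartheta)$ where $M(\vartheta) := \mathsf{E}[\exp(\vartheta h(X))]$ is the moment generating function, and then to identify $M'(\vartheta)/M(\vartheta)$ with the expectation of $h(X)$ under the tilted measure $F_\vartheta$ by unwinding the definition. The only real work is justifying differentiation under the expectation in the formula $M'(\vartheta) = \mathsf{E}[h(X) \exp(\vartheta h(X))]$.

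First I would fix the neighborhood: by hypothesis there exists $\delta > 0$ such that $M(\vartheta + t) < \infty$ for all $t \in (-\delta, \delta)$. Writing the difference quotient
\begin{equation*}
    \frac{M(\vartheta + t) - M(\vartheta)}{t} = \mathsf{E}\!\left[ h(X) \exp(\vartheta h(X)) \cdot \frac{\exp(t h(X)) - 1}{t\, h(X)} \right],
\end{equation*}
the integrand converges pointwise to $h(X)\exp(\vartheta h(X))$ as $t \to 0$. To interchange limit and expectation, I would use the elementary bound $|(e^{ts}-1)/(ts)| \le \tfrac{1}{2}(e^{ts}+e^{-ts})$ for all $s \in \mathbb{R}$ and $t \neq 0$, which gives for $|t|<\delta/2$ the domination
\begin{equation*}
    \left| h(X) \exp(\vartheta h(X)) \cdot \frac{\exp(t h(X)) - 1}{t\, h(X)} \right|
    \le \tfrac{1}{2}\bigl(\exp((\vartheta+\delta/2) h(X)) + \exp((\vartheta - \delta/2) h(X))\bigr) \cdot \bigl|h(X)\bigr| \cdot e^{-(\delta/2)|h(X)|} \cdot e^{(\delta/2)|h(X)|}.
\end{equation*}
A cleaner route is to absorb the linear factor by observing that for $|h(X)| \cdot (\delta/2) \le e^{(\delta/2)|h(X)|}$ (since $u \le e^u$), so $|h(X)|$ is dominated by a constant times $e^{(\delta/2)|h(X)|} \le e^{(\delta/2)h(X)} + e^{-(\delta/2)h(X)}$. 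Putting the bounds together yields an integrable dominating function, and then the dominated convergence theorem justifies
\begin{equation*}
    M'(\vartheta) = \mathsf{E}\bigl[ h(X) \exp(\vartheta h(X)) \bigr].
\end{equation*}

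Given this, the conclusion is a one-line computation: by the chain rule,
\begin{equation*}
    \psi'(\vartheta) = \frac{M'(\vartheta)}{M(\vartheta)} = \frac{\mathsf{E}[h(X) \exp(\vartheta h(X))]}{\mathsf{E}[\exp(\vartheta h(X))]} = \mathsf{E}\bigl[h(X) \exp(\vartheta h(X) - \psi(\vartheta))\bigr] = \mathsf{E}_{F_\vartheta}[h(X)],
\end{equation*}
where the last equality is just the change-of-measure formula encoded in the definition $dF_\vartheta = \exp(\vartheta h - \psi(\vartheta))\, dF$.

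The main obstacle is the domination argument for interchanging differentiation and expectation; everything else is formal. The key conceptual point is that finiteness of the MGF on an \emph{open} neighborhood (as opposed to just a single point) gives enough room on both sides of $\vartheta$ to dominate the linear factor $h(X)$ by a shifted exponential, which is exactly what the hypothesis provides.
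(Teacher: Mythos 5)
The paper itself gives no proof of Lemma~\ref{lem:CumuDeri}; it is introduced as a ``standard result for the cumulant generating function'' and left unproven, so there is nothing to compare against except the standard argument, which is exactly what you supply: differentiate $M(\vartheta)=\mathsf{E}[\exp(\vartheta h(X))]$ under the expectation via dominated convergence, then apply the chain rule and the change of measure $dF_\vartheta=\exp(\vartheta h-\psi(\vartheta))\,dF$. Your overall strategy is correct, and the final identification $\psi'(\vartheta)=M'(\vartheta)/M(\vartheta)=\mathsf{E}_{F_\vartheta}[h(X)]$ is fine.

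Two details in the domination step need repair. First, the ``elementary bound'' $\left|\frac{e^{u}-1}{u}\right|\le \tfrac12\left(e^{u}+e^{-u}\right)$ is false: at $u=1$ the left side is $e-1\approx 1.718$ while $\cosh(1)\approx 1.543$. The correct elementary bound is $\left|\frac{e^{u}-1}{u}\right|\le e^{|u|}\le e^{u}+e^{-u}$, which follows from $\frac{e^{u}-1}{u}=\sum_{k\ge 0}\frac{u^{k}}{(k+1)!}$; this costs only a factor of $2$ and leaves the argument intact. Second, your displayed domination inequality contains the factor $e^{-(\delta/2)|h(X)|}\cdot e^{(\delta/2)|h(X)|}=1$ and so does not by itself remove the linear factor $|h(X)|$; the absorption via $u\le e^{u}$ that you then invoke is the right fix, but you must budget the exponents so that the \emph{total} tilt stays strictly inside the neighborhood of finiteness. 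Concretely, restricting to $|t|\le\delta/4$ and using $|h(X)|\le \tfrac{4}{\delta}\left(e^{(\delta/4)h(X)}+e^{-(\delta/4)h(X)}\right)$ produces a dominating function that is a finite linear combination of terms $\exp((\vartheta+c)h(X))$ with $|c|\le 3\delta/4<\delta$, all integrable by hypothesis. With these corrections the proof is complete and is the expected one.
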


To find an appropriate parameter $\vartheta$ to make (\ref{form:SecMom}) small, we take an approach like in \cite{SunHong2009}. By the definition of $F_\vartheta$ 
samples in the right tail are more likely to occur when $\vartheta > 0$, which also indicates that to minimize (\ref{form:SecMom}) we should choose $\vartheta > 0$. We observe that{\small
\begin{align*}
 &  \mathsf{E}_{F_\vartheta} \left[ \frac{dF}{dF_\vartheta} (X)^2 \mathds{1}_{\{h(X) > q_Y(1 - \alpha)\}} 
    \right] = \mathsf{E} \left[ \frac{dF}{dF_\vartheta} (X) \mathds{1}_{\{h(X) > q_Y(1 - \alpha) \}}
    \right] 
    = \mathsf{E} \left[ \exp ( \psi(\vartheta) - \vartheta h(X)) \mathds{1}_{\{h(X) > q_Y(1 - \alpha)\}} 
    \right] \\ 
  & \leq \exp( \psi(\vartheta) - \vartheta q_Y(1 - \alpha)) \cdot  \mathsf{P}( h(X) > q_Y(1 - \alpha)). 
\end{align*}}
Minimizing the upper bound is then equivalent to minimizing $ \psi(\vartheta) - \vartheta q_Y(1 - \alpha)$, from which we obtain a first order condition using Lemma~\ref{lem:CumuDeri} 
\begin{align}
    q_Y(1 - \alpha) = \mathsf{E}_{F_\vartheta} [h(X)]. 
\end{align} 
In their paper, \cite{SunHong2009} show that this approach yields a strict reduction of the objective function \eqref{form:SecMom}:
\begin{theo}
	Consider the situation as described above. Assume there 
	exists $\varepsilon > 0$ such that $G_Y$ is differentiable with strictly positive derivative on 
	$(q_Y(1 - \alpha) - \varepsilon , q_Y(1-\alpha) + \varepsilon)$. 
	Further suppose that $q_Y(1 - \alpha) > \mathsf{E}[h(X)]$, $\frac{dF}{dF_{\vartheta^\ast}}(x) = 
	\exp (  \psi(\vartheta^\ast) - \vartheta^\ast h(x))$, and $\vartheta^\ast$ be chosen such that 
	$q_Y(1 - \alpha) = \mathsf{E}_{F_\vartheta^\ast} [h(X)]$. Then 
	$ \mathsf{E}_{F_{\vartheta^\ast}} \left[  \frac{dF}{dF_{\vartheta^\ast}}(X)^2
	\mathds{1}_{\{ h(X) > q_Y(1 - \alpha)\}} \right] < 
	\mathsf{E} \left[  \mathds{1}_{\{h(X) > q_Y(1 - \alpha ) \}} \right]. $
\end{theo}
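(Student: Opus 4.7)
The plan is to rewrite the left-hand side as an expectation under $F$ via the change of measure, then to exploit the convexity of the cumulant generating function $\psi$ together with the first-order condition $\psi'(\vartheta^\ast)=q_Y(1-\alpha)$ to produce a strict pointwise bound on the integrand.

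First, using $dF/dF_{\vartheta^\ast}(x)=\exp(\psi(\vartheta^\ast)-\vartheta^\ast h(x))$ and the identity $\mathsf{E}_{F_{\vartheta^\ast}}[(dF/dF_{\vartheta^\ast})^2 W]=\mathsf{E}[(dF/dF_{\vartheta^\ast})\,W]$ for any nonnegative $W$, I rewrite
\begin{equation*}
\mathsf{E}_{F_{\vartheta^\ast}}\!\left[\tfrac{dF}{dF_{\vartheta^\ast}}(X)^2\mathds{1}_{\{h(X)>q_Y(1-\alpha)\}}\right]
=\mathsf{E}\!\left[\exp\!\bigl(\psi(\vartheta^\ast)-\vartheta^\ast h(X)\bigr)\mathds{1}_{\{h(X)>q_Y(1-\alpha)\}}\right].
\end{equation*}
It therefore suffices to show that $\exp(\psi(\vartheta^\ast)-\vartheta^\ast h(X))<1$ on the event $\{h(X)>q_Y(1-\alpha)\}$, i.e., that $\vartheta^\ast h(X)>\psi(\vartheta^\ast)$ on that event.

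Next, I argue $\vartheta^\ast>0$. Since $\psi$ is the cumulant generating function of $h(X)$ it is convex with $\psi(0)=0$ and, by Lemma~\ref{lem:CumuDeri}, $\psi'(0)=\mathsf{E}[h(X)]$. The hypothesis $\mathsf{E}[h(X)]<q_Y(1-\alpha)=\psi'(\vartheta^\ast)$ together with monotonicity of $\psi'$ (convexity of $\psi$) forces $\vartheta^\ast>0$. Then convexity gives the tangent inequality at $\vartheta^\ast$,
\begin{equation*}
\psi(0)\;\geq\;\psi(\vartheta^\ast)+\psi'(\vartheta^\ast)(0-\vartheta^\ast),
\end{equation*}
which rearranges to $\psi(\vartheta^\ast)\leq \vartheta^\ast\psi'(\vartheta^\ast)=\vartheta^\ast q_Y(1-\alpha)$. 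Combining this with $h(X)>q_Y(1-\alpha)$ and $\vartheta^\ast>0$ yields, on $\{h(X)>q_Y(1-\alpha)\}$,
\begin{equation*}
\psi(\vartheta^\ast)-\vartheta^\ast h(X)\;\leq\;\vartheta^\ast\bigl(q_Y(1-\alpha)-h(X)\bigr)\;<\;0,
\end{equation*}
so the integrand is strictly less than $1$ pointwise on that event.

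Finally, I integrate. The differentiability of $G_Y$ with strictly positive derivative at $q_Y(1-\alpha)$ implies continuity there, so $\mathsf{P}(h(X)>q_Y(1-\alpha))=\alpha>0$; in particular the event $\{h(X)>q_Y(1-\alpha)\}$ has positive probability, turning the pointwise strict inequality into a strict inequality for the integrals:
\begin{equation*}
\mathsf{E}\!\left[\exp\!\bigl(\psi(\vartheta^\ast)-\vartheta^\ast h(X)\bigr)\mathds{1}_{\{h(X)>q_Y(1-\alpha)\}}\right]\;<\;\mathsf{E}\!\left[\mathds{1}_{\{h(X)>q_Y(1-\alpha)\}}\right].
\end{equation*}
The main obstacle is really just verifying that $\vartheta^\ast>0$ (which uses $q_Y(1-\alpha)>\mathsf{E}[h(X)]$ in an essential way) and applying the right supporting hyperplane inequality for $\psi$; the differentiability assumption on $G_Y$ is only needed to guarantee that the relevant event has positive probability so that the pointwise strict bound translates to a strict integrated bound.
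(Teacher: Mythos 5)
Your proof is correct. The paper does not actually supply a proof of this theorem (it defers to Sun \& Hong, 2009), but the displayed computation immediately preceding the statement is exactly your first step --- rewriting the second moment as $\mathsf{E}\left[\exp\left(\psi(\vartheta^\ast)-\vartheta^\ast h(X)\right)\mathds{1}_{\{h(X)>q_Y(1-\alpha)\}}\right]$ --- and your completion via $\vartheta^\ast>0$ and the supporting-line inequality $\psi(\vartheta^\ast)\leq\vartheta^\ast\psi'(\vartheta^\ast)=\vartheta^\ast q_Y(1-\alpha)$ is the natural way to finish it. A minor point in your favor: by keeping the indicator inside the expectation you obtain strictness from $h(X)>q_Y(1-\alpha)$ holding on an event of probability $\alpha>0$, so you do not need strict convexity of $\psi$ (i.e.\ nondegeneracy of $h(X)$), which the cruder bound $\exp\left(\psi(\vartheta^\ast)-\vartheta^\ast q_Y(1-\alpha)\right)\mathsf{P}(h(X)>q_Y(1-\alpha))$ would require for a strict conclusion.
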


\subsection{Refinements, Alternatives, and Clarifications of Algorithm 1} \label{app:Algo1}

Here, we aim to discuss various details, potential refinements, and alternatives applicable to Algorithm \ref{algo:ISDRMEsti}.

\begin{rema} \phantom{Mach neue Zeile!}
\begin{itemize}
    \item\underline{Line 12:} The quantity \( G'(x) \), required for defining \( c_i \), is assumed unknown. In the algorithm, we estimate it using a kernel density approximation based on pivot samples, which is then evaluated to obtain \( G'(\cdot) \). Other estimation methods, such as finite differences based on the empirical distribution function, could also yield a viable algorithm.
    \item \underline{Line 16:} The normalizing factor \(\hat{\psi}(\vartheta_i)\) is estimated as the sample mean over the pivot samples. Note that due to estimation errors, the resulting \(dF_i\) will not be perfectly normalized. Implementations of the MH-Algorithm are generally more efficient when using an approximated density rather than an unnormalized function. The normalization process in line 24 is discussed in Section \ref{subsec:ImpIssues}.
    \item \underline{Line 17:} Instead of randomly drawing \(\theta_i\) to select the mixture component, we could stratify the samples according to \(p_i\) by drawing \(p_i N\) samples from each \(F_i\). In our algorithm, we opted for the proposed sampling strategy.
    \item \underline{Line 18:} The quantile estimation can also be based on the evaluated approximation $\hat{h}$. While this reduces estimation accuracy, it helps avoid further evaluations of $h(X)$. In this paper, we opt not to use $\hat{h}$ in the third step, allowing us to compare the performance of the IS estimation against a crude method.
    \item \underline{Line 21:} The variance estimation, used to compare variances, relies on the sample mean of the two second moments from Section \ref{subsec:EffUse}. Conducting this comparison at the algorithm's end allows the use of both the $N$ samples from the importance sampling and the $M$ pivot samples for improved accuracy. In our algorithm, the estimation of $q_Y(1 - \alpha_i)$ reuses the estimate from line 7. Alternatively, one could employ quantile estimates based on samples from the importance sampling. These approaches all yield viable algorithms, with the details reflecting design choices in the estimation process
    to balance computation effort and accuracy of the method.
\end{itemize}
\end{rema}

\subsection{Tools from Machine Learning} \label{app:MLTools}

For convenience, we briefly review the considered ML regression techniques and the methodology of $k$-fold validation. 
An excellent introduction to ML is provided by \cite{ShalevSchwartzBenDavid2014}). In our simulation algorithm, pivot 
samples $$S =  (X_1, h(X_1)), \dots, (X_M, h(X_M)) = \left(X_i, h(X_i)\right)_{i=1, \dots, M}$$ are used as training data. 

\subsubsection{Linear Predictors}

We briefly review linear prediction; this is based on Section 9.2 of
\cite{ShalevSchwartzBenDavid2014}. For regressions, we consider the hypothesis class 
\begin{equation}\label{eq:linhyp}
    \mathcal{H}_{lin} = \left\{ x \mapsto \langle w, x\rangle + b \; \vert \;  w \in \mathbb{R}^d,\, b \in \mathbb{R}\right\} 
\end{equation} 
where $\langle \cdot, \cdot \rangle$ is the Euclidean inner product. One approach is to determine $w^\ast,b^\ast$ 
by empirical risk minimization (ERM) with quadratic loss function. 
The empirical risk is thus given by
$L_S(h) = \sum_{i=1}^M ( \langle w, X_i \rangle+ b- h(X_i))^2 $, where $h \in \mathcal{H}_{lin}$
is the predictor corresponding to $w, b$. Optimal $w^\ast, b^\ast$ are determined by
$ w^\ast, b^\ast = \argmin_{w,b} \sum_{i=1}^M( \langle w, X_i \rangle+ b- h(X_i))^2. $ As is well known, the first order conditions leads to linear problem. \cite{ShalevSchwartzBenDavid2014} discuss the application of linear programming and perceptrons (cf. \cite{Rosenblatt1958}).

\subsubsection{Polynomial Predictors}

Again, we refer for more details, \cite{ShalevSchwartzBenDavid2014}, Section 9.2.2. To illustrate the main idea, we assume that the dimension of the training patterns is $d = 1$. The hypothesis class of polynomial predictors with degree $k$ is given as
$$ \mathcal{H}_{poly}^k = \left\{x \mapsto p(x) \vert w \in \mathbb{R}^{k+1} \right\} $$ 
where 
$ p(x) = w_0 + w_1 x + w_2 x^2 + \dots + w_k x^k. $
Obviously, polynomial predictors can be seen as the application of linear hypotheses to features which are obtained as a transformations of the original input patterns, in this case leading to monomials as features. Namely, setting $\psi(x) = (1, x, x^2, \dots, x^k)$, we have $p(x) = \langle w, \psi(x) \rangle$. We 
can thus apply the same methods on the transformed sample $S^\prime = \left(\psi(X_i), h(X_i) \right)_{i=1, \dots, M}$ as in the case of linear 
predictors with ERM specified by
$ w^\ast = \argmin_{w} \sum_{i=1}^M ( \langle w_i , \psi(X_i) \rangle + h(X_i))^2 . $

\subsubsection{Support Vector Machines}

Support vector machines can be used for classification and regression purposes. A good overview on classification is 
Chapter 15 of \cite{ShalevSchwartzBenDavid2014}. An early extension to regression tasks is \cite{DruckerEtAl1996}. For more details see and Chapter 6 in \cite{Vapnik1999}, Chapter 2 in \cite{HuangKecManKopriva2006}, or the tutorial article \cite{SSch2004} which form the basis for our brief review.

\paragraph{Linear Support Vector Machine Regression}

First, we consider again the linear predictor hypothesis class \eqref{eq:linhyp}.
A support vector machine regression considers the optimization problem 
\begin{align*}
    & (w^\ast, b^\ast) = \argmin_{w,b} \frac{1}{2} \Vert w \Vert^2 \\
    \text{subject to } & \vert h(X_i) - \langle w, X_i \rangle - b \vert \leq \varepsilon,
\end{align*}
where $\varepsilon > 0$ is a parameter controlling the tolerated distance of the samples 
to the predictor; within the tolerance bound, the flatness of the solution is minimized. As the solution to the optimization problem above may not exist, the soft margin concept introduces the slack variables $\xi, \Bar{\xi} \in \mathbb{R}^M$ and considers instead
\begin{align*}
    &(w^\ast, b^\ast, \xi^\ast, \Bar{\xi}^\ast )\; = \; \argmin_{w, b, \xi, \Bar{\xi}} \; \left\{ \frac{1}{2} \Vert w \Vert^2 \, + \, C \cdot \sum_{i=1}^M 
    (\xi_i + \Bar{\xi}_i)\right\}, \\
    \text{subject to } & h(X_i) - (\langle w, X_i \rangle - b) \leq \varepsilon + \xi_i, \\
    &  (\langle w, X_i \rangle + b ) - h(X_i) \leq \varepsilon + \Bar{\xi}_i, \\
    & \xi_i ,\Bar{\xi}_i \geq 0.
\end{align*}
To approximate the solution of the soft margin optimization we use in this paper a sequential optimization described in \cite{Platt1998} and \cite{FanPaiChihJoachims2005}.

\paragraph{The Kernel Trick}

The summary in this section is based on
chapter 16 of \cite{ShalevSchwartzBenDavid2014}, Section 6.3 of
\cite{Vapnik1999}, Section 2.2 of \cite{HuangKecManKopriva2006}, and \cite{SSch2004}.
When generalizing support vector machines to nonlinear predictors, the same approach as outlined for polynomial predictors can be taken. Instead of considering linear hypotheses on the input space, one considers instead the concatination of a unknown linear function and a known mapping from the input space to a feature space. Machine learning then determines a suitable linear predictor on the feature space. Good feature spaces can be very high-dimensional and the algorithm might become infeasible. 

In the case of support vector machines, a computationally cheaper way is available which relies on the following observation. If linear predictors are learnt on an Euclidian space using support vector optimization, the solution can be determined if scalar products of all elements of the domain of the linear predictors can be computed. Consider, for example, the input space $\mathbb{R}$ and the transformation to features $\psi : \mathbb{R} \mapsto \mathbb{R}^m$. Replacing the original training samples $S = \left(X_i, h(X_i) \right)_{i=1\dots, M}$ by $\hat{S} = \left( \psi(X_i), h(X_i)\right)_{i=1\dots, M}$,  we seek a support vector linear predictor
computed from $\hat{S}$. Since the solution can be computed from the knowledge of scalar products of features  $\langle \psi(x), \psi(x^\prime)\rangle =  K(x, x^\prime) $ which are labeled by inputs, it suffices to specify the corresponding kernel $K: \mathbb{R} \times \mathbb{R} \rightarrow \mathbb{R}$, but explicit knowledge of $\psi$ is not required. In this article, 
we again use the sequential optimization described, e.g., in \cite{FanPaiChihJoachims2005} with two commonly used kernel functions:

\begin{exam}
\begin{itemize}
    \item[(i)] \textit{Polynomial kernel:}
    The kernel $K(x, x^\prime) = (1 + \langle x, x^\prime \rangle )^k$ corresponds to 
    $$ K(x, x^\prime) = \langle \psi(x), \psi(x^\prime) \rangle = 
    \sum_{J \in \{0,1\}^k} \prod_{i=1}^k x_{J_i}\prod_{i=1}^k x^\prime_{J_i}, $$
    where we define $x_0 = x_0^\prime = 1$. Then $\psi(x)$ has as components monomials up to 
    degree $k$, and the SVM will learn a polynomial predictor. 
    
    \item[(ii)] \textit{Gaussian kernels:} 
    The kernel
    $$ K(x, x^\prime ) = \exp \left( - \frac{\Vert x- x^\prime \Vert^2  }{2 \sigma}  \right), $$
    for $\sigma > 0$, is called Gaussian kernel. The Gaussian kernel corresponds to the embedding 
    $\psi(x)$ with the components 
    $$ \psi(x)_i = \frac{1}{\sqrt{i!}} \exp \left( \frac{x^2}{2} \right) x^i. $$
\end{itemize}
\end{exam}

\subsubsection{$k$-fold Cross Validation}

Based on Section 11.2 of \cite{ShalevSchwartzBenDavid2014}, we briefly describe k-fold cross validation. The training of different methods was already discussed, now we need a strategy to select among these methods. For this purpose, the training sample $S$ is partitioned into subsets $S_1, \dots, S_k$, each of size $M/k$ (where $k$ divides $M$ which can easily be realized in the implementation), such that $S_j := \left(X_i, h(X_i) \right)_{i = j \cdot \frac{M}{k} + 1, \dots, (j+1) \cdot \frac{M}{k}}$. Assume that $r \in \{1, \dots, R\}$ enumerates the different methods considered and/or parameters of these methods, and let $A_r(S)$ be the output of the algorithm trained on the training data $S$ resulting in the predictor $h_r$. For each $r$ the algorithm can alternatively be trained on the training data $S \setminus S_j$, 
$j \in \{1, \dots, k\}$ with output hypothesis $h_{r,j}$. The individual predictors $h_{r,j}$ are validated on the remaining fold of 
training data, i.e., 
$$ \text{error}(r) = \frac{1}{k} \sum_{i=1}^k L_{S_i} ( h_{r,i}) = \frac{1}{k} \sum_{i=1}^k \sum_{(x,h(x)) \in S_i}
l(h(x), h_{r,i}(x)), $$
where $l(\cdot, \cdot)$ is the considered loss function. In our implementation, we use $l(x, y) = (x-y)^2$ for the purpose of error measurement, although this loss function is not used in SVMs or $k$-NN. From the estimated errors of the predictors $h_r$ we can then choose the one which is performing best. The hypothesis classes from Sections \ref{sec:CaseStudies} \& 
\ref{sec:ALM} are displayed in Table \ref{tab:HypothesisClasses}.

\begin{table}[]
    \centering
    \begin{tabular}{|l|c | c|}
        \hline 
        Hypothesis Class & Hyperparameter &  Stop Criterion  \\
        \hline 
        \hline
        Linear Predictors & - & - \\
        \hline
        Polynomial Predictors & ordered increasing in & Overfitting observed 
        \\
         of degree $q_1$ & $q_1 \in \{2,3, \dots \}$ & \\
        \hline
        Linear SVM & - & - \\
        \hline
        Polynomial SVM  &  ordered increasing in & Overfitting observed 
        \\
       of degree $q_2$ & $q_2 \in \{2,3,\dots \}$  & Fitting computational unfeasible \\
        \hline
        Gaussian SVM & - & - \\
        \hline
        $k$-NN Regression & ordered increasing in & Overfitting observed \\
        &$k \in \{1,2,3, \dots\}$ & \\
        \hline
    \end{tabular}
    \caption{Overview of the hypothesis classes and order of the hypothesis classes
    considered in the $k$-fold validation. The stop criterion determines the largest 
    hyperparameter considered for the hypothesis classes.}
    \label{tab:HypothesisClasses}
\end{table}

\subsection{Proofs and Calculations} \label{app:ProofAnd Calc}

\subsubsection{Appendix to Section \ref{sec:saq}} \label{app:OptAllo}

\begin{proof}[Auxiliary computations] 
Suppose that Assumption \ref{ass:ISDist} holds. For large enough $N_i$ we use the approximation from Theorem \ref{theo:CLT} for all $i$, i.e., 
$$ \hat{q}_{F_i, N_i}(1 - \alpha_i) \quad \sim \quad  \mathcal{N} \left( q_Y(1 - \alpha_i), \frac{\mathsf{E}_{F_i} \left[ \frac{dF}{dF_i}
(X)^2 \mathds{1}_{\{h(X) > q_Y(1 - \alpha_i) \}} \right] - \alpha_i^2}{ N G^\prime(q_Y(1- \alpha_i))^2} \right) .$$
For $u \in [\alpha_i, \alpha_{i+1})$ we have {\small
\begin{align*}
	&\mathsf{E}\left[  (q_Y(1 -u) - \hat{q}_Y(1 -u))^2 \right] 
	\; = \;
	\mathsf{E} \left[ (q_Y(1 -u) - \hat{q}_{F_i, N_i} ( 1 - \alpha_i) )^2 \right] \\
	\approx\quad & \mathsf{E} \left[ \left(q_Y(1 -u) -  q_Y( 1- \alpha_i) - 
	\sqrt{\frac{\mathsf{E}_{F_i} 
	\left[ \frac{dF}{dF_i} (X)^2  \mathds{1}_{\{h(X) > q_Y(1 - \alpha_i)\}} \right] - \alpha_i^2}{N_i G^\prime( q_Y(1 - 
	\alpha_i))^2}} Z_i \right)^2 \right] \\
	=\quad& (q_Y(1 -u) - q_Y(1 -\alpha_i))^2 - 2 ( q_Y(1 -u) - q_Y(1 - \alpha_i)) 
	\sqrt{ \frac{\mathsf{E}_{F_i} 
	\left[ \frac{dF}{dF_i} (X)^2 \mathds{1}_{\{ h(X) > q_Y(1 - \alpha_i)\}} \right] - \alpha_i^2}{N_i G^\prime( q_Y(1 - 
	\alpha_i))^2}} \mathsf{E}[Z_i]  \\
	+\quad& \frac{\mathsf{E}_{F_i} 
	\left[ \frac{dF}{dF_i} (X)^2  \mathds{1}_{\{ h(X) > q_Y(1 - \alpha_i)\}} \right] - \alpha_i^2}{N_i G^\prime( q_Y(1 - 
	\alpha_i))^2} \mathsf{E}[Z_i^2] \\
	= \quad &  ( q_Y(1 -u) - q_Y(1 - \alpha_i))^2 + \frac{\mathsf{E}_{F_i} 
	\left[ \frac{dF}{dF_i} (X)^2 \mathds{1}_{\{ h(X) > q_Y(1 - \alpha_i)\}} \right] - \alpha_i^2}{N_i G^\prime( q_Y(1 - 
	\alpha_i))^2}
\end{align*}}
where $Z_i$, $i \in \{0,1, \dots, m \}$ are i.i.d. standard normals. With this we obtain{\small
\begin{align*}
	& \int_0^1 \mathsf{E} [(q_Y(1 -u) - \hat{q}_Y(1 -u))^2 ] dg(u) \; = \;  \sum_{i=0}^m \int_{\alpha_i}^{\alpha_{i+1}} 
	\mathsf{E} [(q_Y(1 -u) - \hat{q}_{F_i, N_i}(1 - \alpha_i))^2] dg(u)   \\
	\approx\quad &  \sum_{i=0}^m \int_{\alpha_i}^{\alpha_{i+1}} ( q_Y(1 -u) - q_Y(1 - \alpha_i))^2 dg(u) +
	\frac{\mathsf{E}_{F_i} \left[ \frac{dF}{dF_i}(X)^2 \mathds{1}_{\{h(X) > q_Y(1 - \alpha_i)\}} 
	\right] - \alpha_i^2}{N_i G^\prime(q_Y(1 - \alpha_i))^2}
    (g(\alpha_{i+1}) - g(\alpha_i) ) .
\end{align*} }
\end{proof}

\begin{proof}[Proof of Equation \eqref{theo:OptAllo}]
	Let 
	$$ c_i := \frac{\mathsf{E}_{F_i} \left[ \frac{dF}{dF_i} (X)^2 \mathds{1}_{\{h(X) > q_Y(1 - \alpha_i)\}} \right]
	- \alpha_i^2}{G^\prime(q_Y(1 - \alpha_i))}(g(\alpha_{i+1}) - g(\alpha_i)) . $$
	The optimization problem becomes to minimize $\sum_{i=0}^m \frac{c_i}{N_i}$ under the constraint $\left( \sum_{i=0}^m N_i\right) - N  = 0$
	with  $c_i \geq 0$, since $g$ is increasing and according to Proposition \ref{prop:ineq}. 
	The Lagrangian for this optimization problem is
	$ \mathcal{L}(N_0, N_1, \dots, N_m; \lambda) = \sum_{i=0}^m \frac{c_i}{N_i} + \lambda 
	\left( \sum_{i=0}^m N_i - N \right) $
	with gradient
	$$ \nabla \mathcal{L}(N_0, N_1, \dots, N_m; \lambda)  = \begin{pmatrix}
	 - \frac{c_0}{N_0^2} + \lambda &
	 - \frac{c_1}{N_1^2} + \lambda &
	 \dots &
	 - \frac{c_m}{N_m^2} + \lambda  &
	 \sum_{i=0}^m N_i - N 
	\end{pmatrix}^T \overset{!}{=} 0. $$
	We rewrite the first $m+1$ equations as
	$ \sqrt{\frac{c_i}{\lambda}} = N_i$, $i = 0,1, \dots, m$,
	and plug this into the last equation to obtain
	$  \sum_{i=0}^m \sqrt{\frac{c_i}{\lambda}} = N$ which is equivalent to $\lambda = \left( \frac{1}{N} \sum_{i=0}^m \sqrt{c_i} \right)^2. $
	This yields  the critical point 
	$ N_i = N \frac{\sqrt{c_i}}{\sum_{i=0}^m \sqrt{c_i}}$, $i = 0,1, \dots, m$.
	To show that this is the minimum under the constraint it suffices to verify that
	$\mathcal{L}(N_0, N_1, \dots, N_m; \lambda)$ is a convex function in $(N_0, N_1, \dots, N_m)$. 
	Rewriting 
	$ \mathcal{L}(N_0, N_1, \dots, N_M ; \lambda) = 
	\sum_{i=0}^m \frac{c_i}{N_i} + \lambda N_i - \frac{N}{m+1}, $
	we observe that the functions
	$ \mathcal{L}_i( N^\prime) = \frac{c_i}{N^\prime} + \lambda N^\prime - \frac{N}{m+1} $
	are each the sum of two convex functions and therefore itself convex if $N^\prime \in \mathbb{R}_+$. 
	It follows that $\mathcal{L}(N_0, N_1, \dots, N_M ; \lambda)$
	is a convex function, implying that the critical point is a minimum. 
\end{proof}

\subsection{Conditional Sampling}

In view of the error bound in (\ref{eq:ErrorBound}), note that the second moments are affected only by the change of measure in the tail. This raises the question whether the estimation error can be reduced by conditioning the sampling distributions on the tail.
To address this question, we define the conditional sampling distribution as follows:
$$ dF_{\vartheta,v}(x) = \exp(\vartheta h(x) - \psi(\vartheta, v)) \mathds{1}_{\{h(x) \geq q_Y(v)\}}
dF(x),$$
where $\psi(\vartheta, v) = \log \left( \mathsf{E}_F \left[ \exp(h(X)) \mathds{1}_{\{h(X) 
\geq q_Y(v)\}} \right] \right)$ and $v \in [0,1)$. \\
To define a conditional mixture sampling distribution, let $\alpha_{m'}$ be the largest element of the partition such that 
$g(\alpha_{m'+1}) - g(\alpha_{m'}) > 0$. Assume further that $\sum p_i = 1$, with $p_i = 0$ for $i < m'$ and $p_i > 0$ for $i \ge m'$. 
For $v \le 1 - \alpha_{m'}$, the conditional mixture sampling distribution is given by
\[
dF^\ast_{v}(x) = \sum_{i=m'}^m p_i \, dF_{\vartheta_i, v}(x).
\]
For both individual and mixture IS estimators, conditioning on the tail reduces the estimation error, as shown in the following proposition.
\begin{prop}
    \begin{itemize}
        \item[(i)] Let $v_i \leq w_i \leq 1 - \alpha_i$, for all $i \in \{0,1,\dots, m \}$. Then: 
        \begin{align*}
           & \mathcal{E}(F_{\vartheta_0, v_0}, F_{\vartheta_1, v_1}, \dots, F_{\vartheta_m, v_m}, \tilde{N}_{Ind}) 
            - \mathcal{E}(F_{\vartheta_0, w_0}, F_{\vartheta_1, w_1}, \dots, F_{\vartheta_m, w_m}, \tilde{N}_{Ind})  \\
            =& \sum_{i=0}^m \frac{g(\alpha_{i+1}) - g(\alpha_i)}{N_i G^\prime(q_Y(1 - \alpha_i))^2} 
            \mathsf{E}_F\left[ \exp(\vartheta_i h(X)) \mathds{1}_{\{h(X) \in (q_Y(v_i), q_Y(w_i)] \} } \right]
            \\
            \cdot & \mathsf{E}_F \left[ \exp ( - \vartheta_i h(X) \mathds{1}_{\{h(X) > q_Y(1 - \alpha_i) \}} \right] 
            \geq 0,
        \end{align*}
        where $\tilde{N}_{Ind} = (N_0, N_1, \dots, N_m)^T$ and $\sum N_i = N$. 
        \item[(ii)] Let $v \leq w \leq 1 - \alpha_{m^\prime}$, then
        \begin{align*}
             &\mathcal{E}(F^\ast_{v}, F^\ast_{v}, \dots, F^\ast_{v}, \tilde{N}_{Mix}) - 
             \mathcal{E}(F^\ast_{w}, F^\ast_{w}, \dots, F^\ast_{w}, \tilde{N}_{Mix}) \\
             =& \sum_{i=m^\prime}^{m} \frac{g(\alpha_{i+1}) - g(\alpha_i)}{
             N  G^\prime(q_Y(1 - \alpha_i))^2 } \\
             & \mathsf{E}_F \left[  \frac{1}{\sum_{i=m^\prime}^m p_i \exp(
             \vartheta_i h(X) - \psi(\vartheta_i, v))} 
             - \frac{1}{\sum_{i=m^\prime}^m p_i \exp( \vartheta_i h(X) - \psi(\vartheta_i,  w)) } \right] \\
             &\geq 0,
        \end{align*}
         where $\tilde{N}_{Mix} = (N, \dots, N)^T$.
    \end{itemize}
\end{prop}
\begin{proof}
    \begin{itemize}
        \item[(i)]For $v_i \leq w_i \leq 1-\alpha_i$, $i=0,\dots,m$, we compute
        {\small
        \begin{align*}
            & \mathcal{E}(F_{\vartheta_0, v_0}, F_{\vartheta_1, v_1}, \dots, F_{\vartheta_m, 
            v_m}, \tilde{N}) 
            - \mathcal{E}(F_{\vartheta_0, w_0}, F_{\vartheta_1, w_1}, \dots, F_{\vartheta_m, 
            w_m}, \tilde{N})  
            = \sum_{i=0}^m \frac{g(\alpha_{i+1}) - g(\alpha_i)}{N_i G^\prime(q_Y(1 - 
            \alpha_i))^2} \\
            &\cdot \left( \underbrace{\mathsf{E}_F \left[ \frac{dF}{dF_{\vartheta_i, v_i}}
            (X) \mathds{1}_{\{h(X) > q_Y(1 - \alpha_i)\}}\right] -  
            \mathsf{E}_F \left[ \frac{dF}{dF_{\vartheta_i, w_i}}
            (X) \mathds{1}_{\{h(X) > q_Y(1 - \alpha_i)\}}\right]}_{(1)}
            \right).
        \end{align*}
        }
        By Lemma \ref{lem:CondSampIneq}, $(1)\geq 0$, and since the prefactor is nonnegative, the inequality follows. Moreover, with
        $$ \psi(\vartheta_i, v) = \log ( \mathsf{E}_F [\exp(h(X)) \mathds{1}_{\{h(X) \geq q_Y(v) \}} ] ),$$
        we can rewrite
        \begin{align*}
            (1) &= \mathsf{E}_F \left[ \exp\left( - \vartheta_i h(X) \right)  
            \left( \exp ( \psi(\vartheta_i, v_i)) - \exp( \psi( \vartheta_i, w_i)) \right) 
            \mathds{1}_{\{h(X) > q_Y(1 - \alpha_i) \}} \right]\\
            &= \left( \exp ( \psi(\vartheta_i, v_i)) - \exp( \psi( \vartheta_i, w_i)) \right)  \mathsf{E}_F \left[ \exp( - \vartheta_i h(X) )
            \mathds{1}_{\{h(X) > q_Y(1 - \alpha_i) \}} \right] \\
            &= \mathsf{E}_F \left[ \exp( \vartheta_i h(X)) \mathds{1}_{\{h(X) \in (q_Y(v_i) 
            , q_Y(w_i) ] \}} \right] \mathsf{E}_F \left[ \exp( - \vartheta_i h(X) )
            \mathds{1}_{\{h(X) > q_Y(1 - \alpha_i) \}} \right].
        \end{align*}
        \item[(ii)] The argument for part (ii) is analogous.
    \end{itemize}
\end{proof}
\begin{lem} \label{lem:CondSampIneq}
    Let $w \leq v \leq 1 - \alpha$. Then:
    $$ \mathsf E_F \left[  \frac{dF}{dF_{\vartheta, v}} (X) \mathds{1}_{\{h(X) > q_Y(1 - \alpha) \}} \right]
    \geq \mathsf E_F \left[  \frac{dF}{dF_{\vartheta, w}} (X) \mathds{1}_{\{h(X) > q_Y(1 - \alpha) \}} \right] $$
    and 
    $$ \mathsf E_F \left[ \frac{dF}{dF^\ast_v}(X) \mathds{1}_{\{h(X) > q_Y(1 - \alpha) \}} \right] 
    \geq \mathsf E_F \left[ \frac{dF}{dF^\ast_w} (X) \mathds{1}_{\{h(X) > q_Y(1 - \alpha) \}}  \right].$$
\end{lem}
\begin{proof}
    To show the first inequality, we have $v \leq 1 - \alpha$ and therefore
    $$ \frac{dF}{dF_{\vartheta, v}} (X) \mathds{1}_{\{h(X) > q_Y(1 - \alpha)\}} 
    = \exp \left( \psi(\vartheta, v) - \vartheta h(X) \right) \mathds{1}_{\{h(X) > q_Y(1 - \alpha)\}},
    $$
    where we set $0 / 0 = 0$. As $\psi(\vartheta, v) = \log \left( \mathsf E_F \left[ \exp( \vartheta h(X)) 
    \mathds{1}_{\{ h(X) > q_Y(v)\}} \right] \right)$ we have for $v \leq w$ that 
    $$ \psi(\vartheta, v) \geq \psi(\vartheta, w).$$
    With the monotonicity of the integral the first inequality of the lemma follows.
    The second inequality is calculated analogously.
    To show the first inequality, note that since $v \leq 1 - \alpha$, we have
    $$ \frac{dF}{dF_{\vartheta, v}} (X) \mathds{1}_{\{h(X) > q_Y(1 - \alpha)\}} 
    = \exp \left( \psi(\vartheta, v) - \vartheta h(X) \right) \mathds{1}_{\{h(X) > q_Y(1 - \alpha)\}},
    $$
    where we define $0 / 0 = 0$.
    Given that $\psi(\vartheta, v) = \log \left( \mathsf E_F \left[ \exp( \vartheta h(X)) 
    \mathds{1}_{\{ h(X) > q_Y(v)\}} \right] \right)$, it follows for $v \leq w$ that 
    $$ \psi(\vartheta, v) \geq \psi(\vartheta, w).$$
    Utilizing the monotonicity of the integral, the first inequality of the lemma follows directly.
    The second inequality can is shown following the same reasoning.
\end{proof}
\begin{rema}
    \item[(i)] 
  The effectiveness of an IS estimation of a DRM with conditioned sampling 
distributions depends critically on the feasibility of efficient simulation 
methods. In particular, acceptance–rejection based on $F_{\vartheta_i}$ as 
proposal may entail a substantial increase in computational cost, since it 
typically requires the generation of many candidate draws. Hence, any practical 
advantage of conditioning hinges on whether the induced sampling overhead is 
offset by the corresponding variance reduction.
    \item[(ii)] In practice, the quantiles of the conditioned sampling distributions are 
unknown and must be estimated from the pivot samples produced by the algorithm. 
Since these estimators are themselves random, instabilities may occur if 
$\hat{q}_{F,M}(v_i) > q_Y(1 - \alpha_i)$ for some $v_i \leq 1 - \alpha_i$. 
To mitigate this risk, the choice of $v_i$ should avoid values too close to 
$1 - \alpha_i$.
\end{rema}

\begin{figure}
    \centering
    \includegraphics[width=1\textwidth]{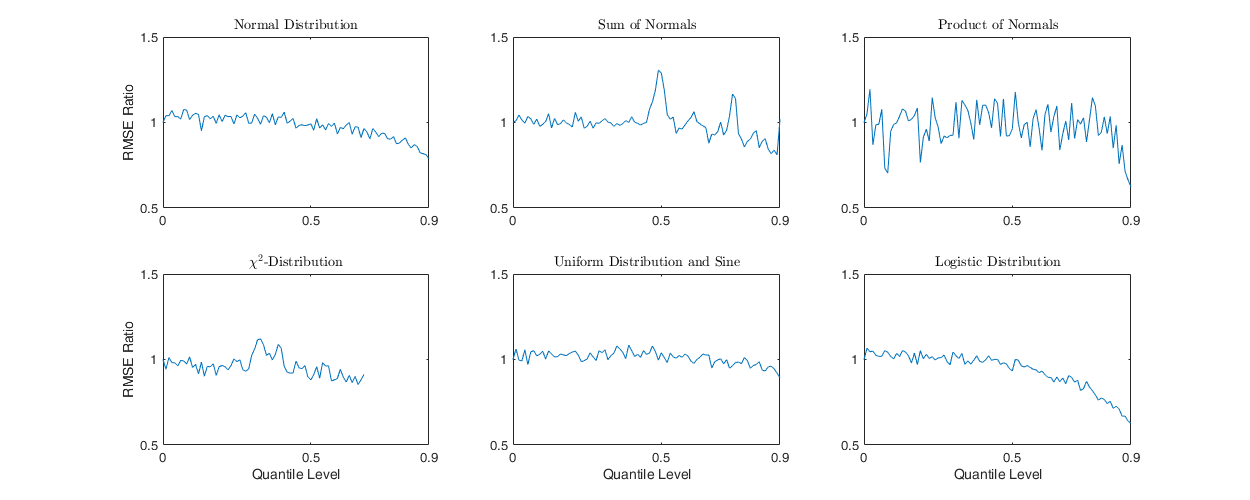}
    \caption{RMSE ratio between unconditional and conditional importance sampling 
    methods for estimating the distortion risk measure $\rho_{g_{\gamma, \alpha}}(Y)$ with 
    $\gamma = 1$ and $\alpha = 0.05$. A total of $N = 20,000$ samples are drawn 
    from the unconditional importance sampling distribution based on the exact function $h$, and only those samples exceeding the estimated quantiles are considered in the DRM estimation.
    The plots depict the RMSE ratio as a function of 
    conditioning levels across the six case studies from Section \ref{sec:CaseStudies}. 
    Some curves are truncated levels due to instability in the 
    estimations.
    }
    \label{fig:condExactMSE_1}
\end{figure}
\begin{figure}
    \centering
    \includegraphics[width=1\textwidth]{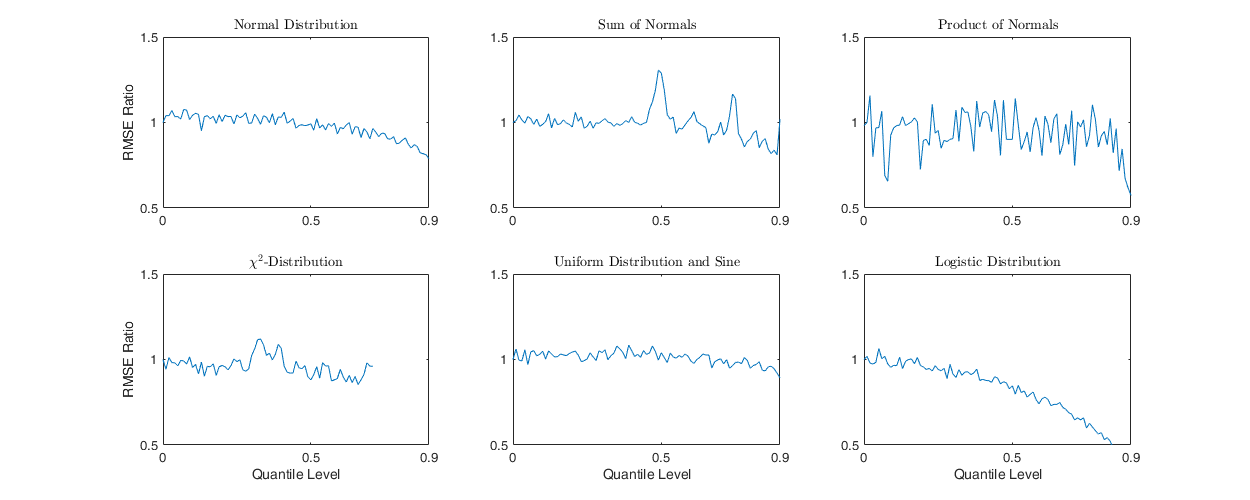}
    \caption{RMSE ratio between unconditional and conditional importance sampling 
    methods for estimating the distortion risk measure $\rho_{g_{\gamma, \alpha}}(Y)$ with 
    $\gamma = 1$ and $\alpha = 0.05$. Exactly $N = 20,000$ samples are drawn 
    from the conditional importance sampling distribution based on the exact function $h$.
    The plots depict the RMSE ratio as a function of 
    conditioning levels across the six case studies from Section \ref{sec:CaseStudies}. 
    Some curves are truncated levels due to instability in the 
    estimations.
    }
    \label{fig:condExactMSE_2}
\end{figure}
\begin{figure}
    \centering
    \includegraphics[width=1\textwidth]{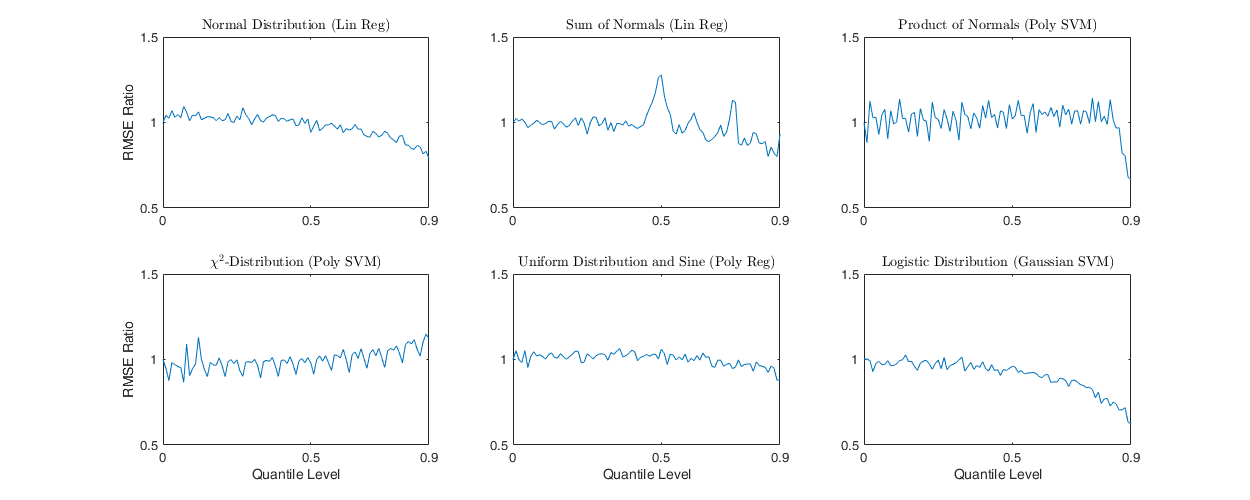}
    \caption{RMSE ratio between unconditional and conditional importance sampling 
    methods for estimating the distortion risk measure $\rho_{g_{\gamma, \alpha}}(Y)$ with 
    $\gamma = 1$ and $\alpha = 0.05$. A total of $N = 20,000$ samples are drawn 
    from the unconditional importance sampling distribution based on the best performing approximation $\hat{h}$ of $h$ from the case studies in Section 
    \ref{sec:CaseStudies}, and only those samples exceeding the estimated quantiles are considered in the DRM estimation.
    The plots depict the RMSE ratio as a function of 
    conditioning levels across the six case studies from Section \ref{sec:CaseStudies}. 
    Some curves are truncated levels due to instability in the 
    estimations.
    }
    \label{fig:condApprox_1}
\end{figure}
\begin{figure}
    \centering
    \includegraphics[width=1\textwidth]{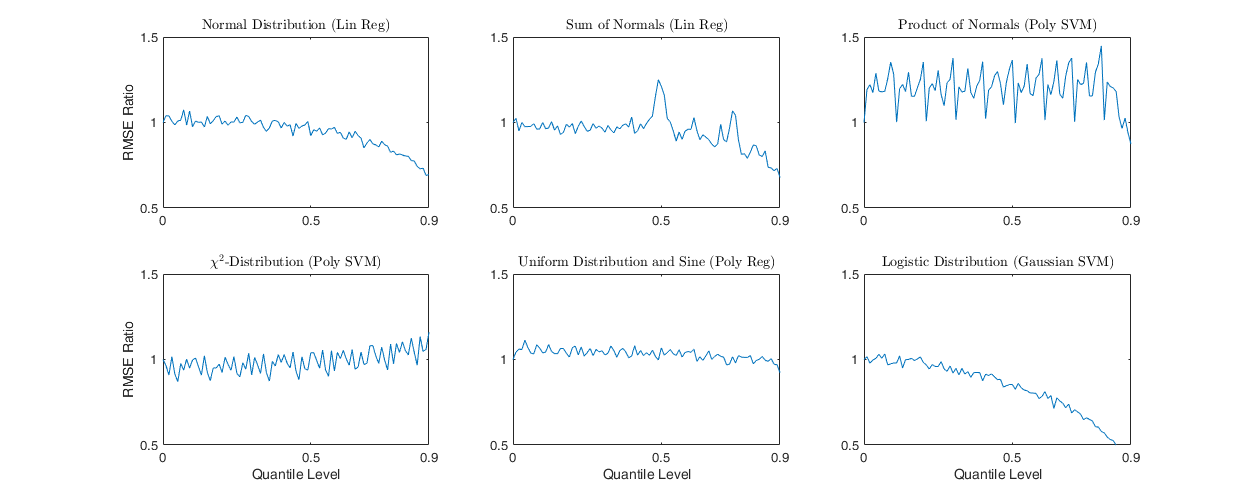}
    \caption{RMSE ratio between unconditional and conditional importance sampling 
    methods for estimating the distortion risk measure $\rho_{g_{\gamma, \alpha}}(Y)$ with 
    $\gamma = 1$ and $\alpha = 0.05$. Exactly $N = 20,000$ samples are drawn 
    from the conditional importance sampling distribution based on the best performing approximation $\hat{h}$ of $h$ from the case studies in Section \ref{sec:CaseStudies}.
    The plots depict the RMSE ratio as a function of 
    conditioning levels across the six case studies from Section \ref{sec:CaseStudies}. 
    Some curves are truncated levels due to instability in the 
    estimations.
    }
    \label{fig:condApprox_2}
\end{figure}

We now turn to numerical case studies of conditional importance sampling, adopting the framework of 
Section~\ref{sec:CaseStudies} for estimating the distortion risk measure 
$\rho_{g_{\gamma,\alpha}}(Y)$ with parameters $\gamma = 1$ and $\alpha = 0.05$ 
across the six case studies.
For each importance sampling method, we generate $M = 5{,}000$ pivot samples.
 These samples serve to estimate the quantile $q_Y(1-v)$ by 
$\hat{q}_{F,M}(1-v)$, which is then used for conditioning the sampling 
distribution:
\begin{itemize}
\item In the indirect conditional sampling method, the algorithm 
draws $N = 20{,}000$ samples from the unconditional importance sampling 
distribution and retains only those exceeding $\hat{q}_{F,M}(1-v)$ for the 
DRM estimation. 
\item In contrast, the direct conditional sampling method 
generates $N = 20{,}000$ samples that exceed $\hat{q}_{F,M}(v)$ 
for the DRM estimation. 
\end{itemize}
The remaining algorithms are implemented in 
accordance with Section~\ref{sec:CaseStudies}. 

\subsubsection{Indirect Conditional Importance Sampling: Results}

The results of the case studies are summarized in Figures \ref{fig:condExactMSE_1} and 
\ref{fig:condApprox_1}.
Figure \ref{fig:condExactMSE_1} presents the ratio of the RMSE for the indirect conditional importance sampling estimation compared to the unconditional approach, based on the exact knowledge of $h$.
In the normal distribution, sum of normals, uniform distribution and sine, and logistic distribution case studies, the RMSE ratio generally decreases with $v$, indicating that the RMSE reduction improves as the conditioning quantile increases.
Notably, in the logistic distribution case, the RMSE is reduced by nearly $40\%$ due to conditioning.
In contrast, for the product of normals, the RMSE ratio exhibits significant fluctuations, preventing a clear trend to be observed. For the $\chi^2$-distribution, the estimation becomes unstable at higher values of $v$, making it impossible to determine a reliable ratio.
Figure \ref{fig:condApprox_1} displays results where the importance sampling distributions are based on the best-performing approximations of $h$ from the case studies in Section \ref{sec:CaseStudies}. Here, the RMSE ratio decreases with $v$ in all but one case study, as the conditioning quantile increases. The sole exception is the $\chi^2$-distribution, where the RMSE ratio appears to increases with increasing $v$.

\subsubsection{Direct Conditional Importance Sampling: Results}

The results of the case studies are summarized in Figures \ref{fig:condExactMSE_2} and \ref{fig:condApprox_2}. Figure \ref{fig:condExactMSE_2} displays the RMSE ratio for the direct conditional importance sampling approach, based on the exact knowledge of $h$.
Similar to the indirect conditional importance sampling method, in the normal distribution, sum of normals, uniform distribution and sine, and logistic distribution case studies, the RMSE ratio decreases with $v$. This indicates a reduction in RMSE as the conditioning quantile becomes larger. For the product of normals, the ratio fluctuates without showing a clear trend. At higher $v$ values, estimating the DRM for the $\chi^2$-distribution becomes unstable.
Figure \ref{fig:condApprox_2} illustrates the RMSE ratio using importance sampling distributions based on the best-performing approximation of $j$ from the case studies in Section \ref{sec:CaseStudies}. Once again, the RMSE ratio decreases with $v$ in all but one case study, corresponding to an increase in the conditioning quantile. The $\chi^2$-distribution is the exception, with an increasing RMSE.

\subsection{Choosing the Size of the Partition}

The partition size \((\alpha_i)_{i \in \{0, 1, \dots, m \}}\) plays a crucial in estimating the DRM. A finer partition reduces discretization error in DRM estimation, yet enlarging the partition size may increase estimation error for individual quantiles. Thus, choosing \(m\) involves balancing these two error types.\\
This trade-off is clearly illustrated in the subsequent proposition:

\begin{prop} \label{prop:PartSize}
    Consider the partition $0 = \alpha_0 < \alpha_1 < \dots < \alpha_m = 1$, 
    with $\beta \in (\alpha_j,\alpha_{j+1})$, and sampling distributions $F_{\vartheta_0}, F_{\vartheta_1}, 
    \dots, F_{\vartheta_m}, 
    F_{\vartheta_\beta}$ satisfying Assumption \ref{ass:ISDist}. \\
    Define $\tilde{N} = (N_0, N_1, \dots, N_m)$, $\tilde{N}^\prime = (N_0, N_1, 
    \dots, N_{j-1}, N^\prime_j, N^\prime_{j+1}, N_{j+1}, \dots, N_m)$, where
    $N^\prime_j + N^\prime_{j+1} = N_j$. \\
    Then 
    \begin{align*}
        \mathcal{E}(F_{\vartheta_0}, F_{\vartheta_1}, \dots, F_{\vartheta_m}, \tilde{N}) 
        \geq \mathcal{E}(F_{\vartheta_0}, F_{\vartheta_1},\dots, F_{\vartheta_j}, F_{\vartheta_\beta}, F_{\vartheta_{j+1}} , \dots,  F_{\vartheta_m}, 
        \tilde{N}^\prime), 
    \end{align*}
    if 
    {\small 
    \begin{align*}
        &\int_{\beta}^{\alpha_{j+1}} (q_Y(1 - u) - q_Y(1 - \beta))^2 - (q_Y(1-u) - q_Y(1 - 
        \alpha_j))^2 dg(u) \\
        \geq &  \frac{\mathsf{E}_{F_{\vartheta_j}} \left[ \frac{dF}{dF_{\vartheta_j}}(X)^2 
        \mathds{1}_{\{h(X) > q_Y(1 - \alpha_j\}} \right] - \alpha_j^2 }{G^\prime(q_Y(1 - \alpha_j))^2 }\left( \frac{1}{N_j} ( g(\alpha_{j+1}) - g(\alpha_j))) 
        - \frac{1}{N_j^\prime} (g(\beta) - g(\alpha_j))\right) \\
        &-\frac{\mathsf{E}_{F_\beta} \left[ \frac{dF}{dF_\beta}(X)^2 
        \mathds{1}_{\{h(X) > q_Y(1 - \beta\}} \right] - \beta^2 }{G^\prime(q_Y(1 - \beta))^2 }
        \frac{g(\alpha_{j+1}) - g(\beta)}{N^\prime_{j+1}}
    \end{align*}
    } 
\end{prop}

\begin{proof} 
    With $q_Y(u)$ being monotone increasing and $N_j \geq N_j^\prime$, we have:
    {\small
    \begin{align*}
        &\mathcal{E}(F_{\vartheta_0}, F_{\vartheta_1}, \dots, F_{\vartheta_m}, \tilde{N}_{IS}) 
        - \mathcal{E}(F_{\vartheta_0}, F_{\vartheta_1}, \dots, F_{\vartheta_m}, F_{\beta}, 
        \tilde{N}^\prime_{IS}) \\
        =& \sum_{i=0}^m \int_{\alpha_i}^{\alpha_{i+1}} (q_Y(1 - u) - q_Y(1 - \alpha_i))^2 
        dg(u) + \frac{\mathsf{E}_{F_{\vartheta_i}} \left[ \frac{dF}{dF_{\vartheta_i}} (X)^2 
        \mathds{1}_{\{h(X) > q_Y(1 - \alpha_i)\}} \right] - \alpha_i^2 }{N_i 
        G^\prime(q_Y(1 - \alpha_i))^2 } (g(\alpha_{i+1}) - g(\alpha_i))  \\
        -& \sum_{\substack{i = 0 \\ i \neq j }}^m  \int_{\alpha_i}^{\alpha_{i+1}} (q_Y(1 - u) - q_Y(1 - \alpha_i))^2 
        dg(u) + \frac{\mathsf{E}_{F_{\vartheta_i}} \left[ \frac{dF}{dF_{\vartheta_i}} (X)^2 
        \mathds{1}_{\{h(X) > q_Y(1 - \alpha_i)\}} \right] - \alpha_i^2 }{N_i 
        G^\prime(q_Y(1 - \alpha_i))^2 } (g(\alpha_{i+1}) - g(\alpha_i)) \\
        -& \int_{\alpha_j}^\beta (q_Y(1 -u) - q_Y(1 - \alpha_j))^2 dg(u) - 
        \int_{\beta}^{\alpha_{j+1}} (q_Y(1 -u) -q_Y(1 - \beta))^2 dg(u) \\
        -& \frac{\mathsf{E}_{F_{\vartheta_j}} \left[ \frac{dF}{dF_{\vartheta_j}} (X)^2 
        \mathds{1}_{\{h(X) > q_Y(1 - \alpha_j)\}} \right] - \alpha_i^2 }{N^\prime_j 
        G^\prime(q_Y(1 - \alpha_j))^2 } (g(\beta) - g(\alpha_i)) \\
        -&\frac{\mathsf{E}_{F_{\vartheta_\beta}} \left[ \frac{dF}{dF_{\vartheta_\beta}} (X)^2 
        \mathds{1}_{\{h(X) > q_Y(1 -  \beta)\}} \right] - \beta^2 }{N^\prime_{j+1} 
        G^\prime(q_Y(1 - \alpha_j))^2 } (g(\alpha_{j+1}) - g(\beta))        \\
        =& \underbrace{\int_{\beta}^{\alpha_{j+1}} (q_Y(1-u) - q_Y(1 - \alpha_j))^2 dg(u) 
        - \int_{\beta}^{\alpha_{j+1}} (q_Y(1-u) - q_Y(1 - \beta))^2 dg(u)}_{\geq 0}   \\
        +& \frac{\mathsf{E}_{F_{\vartheta_j}} \left[ \frac{dF}{dF_{\vartheta_j}}(X)^2 
        \mathds{1}_{\{h(X) > q_Y(1 - \alpha_j\}} \right] - \alpha_j^2 }{G^\prime(q_Y(1 - \alpha_j))^2 }\left( \frac{1}{N_j} ( g(\alpha_{j+1}) - g(\alpha_j))) 
        - \frac{1}{N_j^\prime} (g(\beta) - g(\alpha_j))\right)\\
        -& \underbrace{ \frac{\mathsf{E}_{F_\beta} \left[ \frac{dF}{dF_\beta}(X)^2 
        \mathds{1}_{\{h(X) > q_Y(1 - \beta\}} \right] - \beta^2 }{G^\prime(q_Y(1 - \beta))^2 }
        \frac{g(\alpha_{j+1}) - g(\beta)}{N^\prime_\beta}}_{\geq 0}.
    \end{align*}
    }
\end{proof}
In studying the convergence rates of the MSE relative to partition size and sample size, we can assert:

\begin{prop} \label{prop:ConvIneq}
    Assume $Y \in L^\infty$ with a continuous quantile function $q_Y(\cdot)$ and 
    a continuous distortion function $g$. Define the partition 
    $\alpha_i$, $i \in \{0,1,\dots, m+1\}$, such that $g(\alpha_{i+1}) - g(\alpha_i) = 
    \frac{1}{m+1}$. 
    Let $\vartheta(\cdot) : [0,1] \mapsto \mathbb{R}$ and 
    $\tilde{N} = (N_0, N_1, \dots, N_m)$.
    Then the estimation error satisfies: 
   \begin{align*}
       \mathcal{E} (F_{\vartheta(\alpha_0)}, F_{\vartheta(\alpha_1)},
       \dots, F_{\vartheta(\alpha_m)}, \tilde{N}) \leq \frac{C_m }{m+1} + \frac{D}{m+1} \sum_{i=0}^m \frac{1}{N_i},
   \end{align*}
   where:
   \begin{itemize}
       \item $C_m \geq 0$ depends only on the partition and distribution of $Y$, with $\lim_{m \rightarrow \infty} C_m = 0$. 
       \item $D \geq 0$ is a constant depending solely on $\vartheta(\cdot)$, 
       independent of $m$ and $N$. 
   \end{itemize}
   If $N_i = \frac{N}{m+1}$, then:
   \begin{align*}
       \mathcal{E} (F_{\vartheta(\alpha_0)}, F_{\vartheta(\alpha_1)},
       \dots, F_{\vartheta(\alpha_m)},  \tilde{N})
       \leq \frac{C_m }{m+1} + \frac{(m+1)D}{N}.
   \end{align*}
\end{prop}
\begin{proof}
    Consider the estimation error: 
    {\footnotesize
    \begin{align*}
        & \mathcal{E}( F_{\vartheta(\alpha_0)}, F_{\vartheta(\alpha_1)}, 
        \dots, F_{\vartheta(\alpha_m)}, \tilde{N}) \\
        =& \underbrace{\sum_{i=0}^m \int_{\alpha_i}^{\alpha_{i+1}} 
        (q_Y(1 - u) - q_Y(1 - \alpha_i))^2 dg(u) }_{(i)}
        + \underbrace{\sum_{i=0}^m \frac{\mathsf{E}_{F_{\vartheta(\alpha_i)}} \left[ 
        \frac{dF}{dF_{\vartheta(\alpha_i))}} (X)^2 \mathds{1}_{\{h(X) > 
        q_Y(1 - \alpha_i) \}}  \right] - \alpha_i^2}{N_i G^\prime(q_Y(1 
        - \alpha_i))^2 } (g(\alpha_{i+1}) - g(\alpha_i))}_{(ii)}.
    \end{align*}}
    We analyze the discretization error $(i)$ and estimation error $(ii)$ as follows:
    \begin{itemize}
        \item[(i)]
            For the discretization error, since $q_Y(1 - u)$) is decreasing in $u$ and  $g(\alpha_{i+1})- g(\alpha_i) = \frac{1}{m+1}$, we find:
            \begin{align*}
                &\sum_{i=0}^m \int_{\alpha_i}^{\alpha_{i+1}} 
                (q_Y(1 - u) - q_Y(1 - \alpha_i))^2 dg(u) \\
                \leq & 
                \sum_{i=0}^m \int_{\alpha_i}^{\alpha_{i+1}} (
                q_Y(1 - \alpha_i) - q_Y(1 - \alpha_{i+1}))^2 dg(u) \\
                =& \sum_{i=0}^m (q_Y(1 - \alpha_i) - q_Y(1 - \alpha_{i+1}))^2 
                (g(\alpha_{i+1})- g(\alpha_i)) \\
                =& \frac{1}{m+1} \sum_{i=0}^m ( q_Y(1 - \alpha_{i}) - q_Y(1 -
                \alpha_{i+1}))^2 \\
                \leq & \frac{1}{m+1} \max_{i \in \{0,1, \dots, m\}} 
                ( q_Y(1 - \alpha_{i}) - q_Y(1 -
                \alpha_{i+1})) \sum_{i=0}^m ( q_Y(1 - \alpha_{i}) - q_Y(1 -
                \alpha_{i+1})) \\
                \leq & \frac{1}{m+1} C_m^\prime \sum_{i=0}^m q_Y(1 - \alpha_i) - 
                q_Y(1 - \alpha_{i+1}) \\
                \leq & \frac{1}{m+1} C_m^\prime 2 \Vert Y \Vert_\infty =: \frac{C_m}{m+1}.
            \end{align*}
            Here, $C^\prime_m$ is defined by:
            \begin{align*}
                C^\prime_m := \sup_{\substack{0 \leq v \leq u \leq 1 \\ 
                v -u \leq \max_i \alpha_{i+1} - \alpha_i  }} 
                q_Y(1 - v) - q_Y(1 -u).
            \end{align*}
            The continuity of $q_Y(1 - u)$ implies $\lim_{m 
            \rightarrow \infty} C_m^\prime = 0$ and hence $\lim_{m 
            \rightarrow \infty} C_m = 0$.  
        \item[(ii)] 
        For the estimation error, define:
        $$ D := \sup_{\alpha \in [0,1]} 
        \frac{\mathsf{E}_{F_{\vartheta(\alpha)}} \left[ 
            \frac{dF}{dF_{\vartheta(\alpha)}} (X)^2 \mathds{1}_{\{h(X) > 
            q_Y(1 - \alpha) \}}  \right] - \alpha}{G^\prime(q_Y(1 
            - \alpha))^2 }$$
        Given $g(\alpha_{i+1}) - g(\alpha_i) = \frac{1}{m+1}$, we get:
        \begin{align*}
            & \sum_{i=0}^m \frac{\mathsf{E}_{F_{\vartheta(\alpha_i)}} \left[ 
            \frac{dF}{dF_{\vartheta(\alpha_i)}} (X)^2 \mathds{1}_{\{h(X) > 
            q_Y(1 - \alpha_i) \}}  \right] - \alpha_i^2}{N_i G^\prime(q_Y(1 
            - \alpha_i))^2 }  (g(\alpha_{i+1}) - g(\alpha_i))\\
            =& \frac{1}{m+1}\sum_{i=0}^m \frac{\mathsf{E}_{F_{\vartheta(\alpha_i)}} \left[ 
            \frac{dF}{dF_{\vartheta(\alpha_i)}} (X)^2 \mathds{1}_{\{h(X) > 
            q_Y(1 - \alpha_i) \}}  \right] - \alpha_i^2}{N_i G^\prime(q_Y(1 
            - \alpha_i))^2 } \\
            \leq & \frac{D}{m+1} \sum_{i=0}^m \frac{1}{N_i}. 
        \end{align*}
        Assuming $N_i = \frac{N}{m+1}$, we find:
        $$ \frac{D}{m+1} \sum_{i=0}^m \frac{1}{N_i} = \frac{(m+1) D}{N}. $$
    \end{itemize}
    Combining results $(i)$ and $(ii)$, we deduce bounds 
    for $\mathcal{E}( F_{\vartheta(\alpha_0)}, 
    F_{\vartheta(\alpha_1)}, \dots, F_{\vartheta(\alpha_m)}, \tilde{N}) $
    as stated in the proposition. 
\end{proof}
\begin{rema}
\begin{itemize}
    \item[(i)]
    The inequality in Proposition \ref{prop:PartSize} is influenced by several factors, 
    , such as $G^\prime(\cdot)$, the density of the distribution of $h(X)$, the chosen distortion function $g(\cdot)$, and the reduction in the second moment of the individual sampling distributions. Generally, refining the partition improves estimation performance only if the reduction in discretization error outweighs the additional estimation error from individual quantiles.
    \item[(ii)] 
    The inequality in Proposition \ref{prop:ConvIneq} highlights a trade-off between discretization and estimation errors in choosing partition size $m$ and sample sizes $N_i$. Specifically:

    \begin{itemize}
        \item Increasing $m$ refines the partition, reducing discretization error. However, this raises estimation error unless $N_i$ are also increased.
        \item Enlarging $N_i$ decreases estimation error, improving quantile accuracy, but leaves discretization error unaffected since it's only dependents on $m$.
    \end{itemize} 
    Thus, minimizing the total error requires a balance between 
    $m$ and $N_i$. The aim is to ensure the reduced discretization error from a finer partition doesn't result in excessive estimation error due to small sample sizes.
    \item[(iii)]
    The convergence rate in Proposition \ref{prop:ConvIneq} hinges on how quickly $C_m$ approaches zero. Given that we have only assumed $L^\infty$ and the continuity of the quantile function, we cannot infer the rate at which $C_m$ vanishes. The term $C_m$ reflects the local variation of the quantile function, which is not solely dictated by its boundedness, continuity, and monotonicity. To determine the rate of convergence accurately, it is essential to consider the modulus of continuity of the quantile function.
\end{itemize}
\end{rema} 

\subsection{Estimation Error under Black Box Approximations}

To investigate the impact of errors from the use of ML approximations on the precision of IS estimations for DRMs, we 
present the following result. This result demonstrates that approximations of $h$ induce errors in the likelihood ratios used in IS estimation, which 
subsequently affect DRM estimation. 
We quantify this error below.
\begin{prop}
Let $\hat{h}$ be an approximation of $h$. Define
\begin{align*}
    d\hat{F}_{\vartheta_i}(x) &= \exp( \vartheta_i \hat{h}(x) - \hat{\psi}(\vartheta_i)) 
    dF(x) \\
    d \hat{F}^\ast &=  \sum_{i=0}^m p_i d\hat{F}_{\vartheta_i}(x),
\end{align*}
where $i \in \{0,1, \dots, m \}$ and $\hat{\psi}(\vartheta) = \log( \mathsf{E}_F [
\exp( h(X)) ])$. Under assumptions \ref{ass:ISDist}: 
\begin{itemize}
    \item[(i)]
        \begin{align*}
            & \mathcal{E}(F_{\vartheta_0}, F_{\vartheta_1}, \dots, F_{\vartheta_m}, 
            \bar{N}_{Ind}) - \mathcal{E}(\hat{F}_{\vartheta_0}, \hat{F}_{\vartheta_1}, \dots, 
            \hat{F}_{\vartheta_m}, \bar{N}_{Ind}) \\
            =& \sum_{i=0}^m \frac{g(\alpha_{i+1}) - g(\alpha_i)}{N_i G^\prime( q_Y(1 - \alpha_i))^2} \\ 
            &\mathsf{E}_F \left[  \exp( \psi(\vartheta_i) - \vartheta_i h(x) ) 
            \left( 1 - \exp( \hat{\psi}(\vartheta_i) - \psi(\vartheta_i) - \vartheta_i 
            ( h(x) - \hat{h}(x)))\right)  \mathds{1}_{\{h(X) > q_Y(1 - \alpha_i)} \right],
        \end{align*}
        where $\bar{N}_{Ind} = (N_0, N_1, \dots, N_m)^T$ and $\sum N_i = N$. 
    \item[(ii)]
    \begin{align*}
        & \mathcal{E}(F^\ast, F^\ast, \dots, F^\ast, \tilde{N}_{Mix}) - 
        \mathcal{E}(\hat{F}^\ast, \hat{F}^\ast, \dots, \hat{F}^\ast, \tilde{N}_{Mix})) \\
        =&  \sum_{i=0}^m \frac{g(\alpha_{i+1}) - g(\alpha_i)}{N G^\prime(q_Y(1 - \alpha_i))^2}
        \\
        &\cdot \mathsf{E}_F \left[ \left( \frac{1}{\sum_{j=0}^m 
        p_j \exp( \vartheta_j h(x) - \psi(\vartheta_i))} - 
        \frac{1}{\sum_{j=0}^m p_j \exp( \vartheta_j \hat{h}(x) - \hat{\psi}(\vartheta_i))}
        \right) \mathds{1}_{\{h(X) > q_Y(1 - \alpha_i)\}} \right],
    \end{align*}
    where $\bar{N}_{Mix} = (N, N, \dots, N)^T$. 
\end{itemize}
\end{prop}
\begin{proof}
    For arbitrary sampling distributions $F_i$, $F^\prime_i$, $i \in \{0,1, \dots, m \}$
    satisfying assumption \ref{ass:ISDist}, 
    and a sample allocation $\tilde{N} = (N_0, N_1, \dots, N_m)^T$, we have:
    \begin{align*}
        &\mathcal{E}(F_0, F_1, \dots, F_m, \tilde{N}) - \mathcal{E}(F_0^\prime, F_1^\prime, 
        \dots, F_m^\prime, \tilde{N}) \\
        =& \sum_{i=0}^m \frac{g(\alpha_{i+1}) - g(\alpha_i)}{N_i G^\prime(q_Y(1 - \alpha_i))^2}
        \mathsf{E}_F \left[ \left( \frac{dF}{dF_i}(X) - \frac{dF}{dF_i}(X)
        \right) \mathds{1}_{\{h(X) > q_Y(1 - \alpha_i)\}} \right].
    \end{align*}
    Furthermore, the differences in the Radon-Nikodym derivatives are given as:
    \begin{align*}
        \frac{dF}{dF_{\vartheta}}(X) - \frac{dF}{d\hat{F}_{\vartheta}}(X) 
        &= \exp( \psi(\vartheta) - \vartheta h(x)) - \exp( \hat{\psi}(\vartheta) 
        - \vartheta \hat{h}(x)) \\
        &=  \exp( \psi(\vartheta) - \vartheta_i h(x) ) 
            \left( 1 - \exp( \hat{\psi}(\vartheta) - \psi(\vartheta) - \vartheta 
            ( h(x) - \hat{h}(x)))\right) \\
       \frac{dF}{dF^\ast}(X) - \frac{dF}{d\hat{F}^\ast}(X) 
        &  = \frac{1}{\sum_{j=0}^m p_j \exp( \vartheta_j h(x) - \psi(\vartheta_j))}
        - \frac{1}{\sum_{j=0}^m p_j \exp( \vartheta_j \hat{h}(x) - \hat{\psi}(\vartheta_j))}.
    \end{align*}
    By substituting these  in $\mathcal{E}(F_0, F_1, \dots, F_m, \tilde{N}) - \mathcal{E}(F_0^\prime, 
    F_1^\prime, \dots, F_m^\prime, \tilde{N})$  we obtain the formulas from the proposition. 
\end{proof}

\subsection{Time Efficiency for Individual Quantile Estimations} \label{app:TimeEff}

In Section~\ref{sec:TimeEff}, we analyzed the time efficiency of the crude and mixture IS estimators.
We now turn to the time efficiency of individual importance sampling, as quantified by the error bound~(\ref{eq:IndErrorBound}).
We denote by $T_{CR}(N_{CR},m)$ and $T_{IS}(M,N_{IS},m)$ the computation 
times of the crude and the importance sampling estimators, respectively. 
The parameters have the following roles: $N_{CR}$ is the crude sample size, 
$M$ is the number of pivot samples, $N_{IS}$ the importance sampling size, 
and $m$ the partition size.
We determine the sample sizes under the condition that the crude and 
importance sampling estimators share the same approximate upper error bound 
according to (\ref{eq:IndErrorBound}).
\begin{lem} \label{lem:IndTimeEff}
    Let $F_0, F_1,\dots, F_m$ be distribution functions. 
    Assume that for all $i \in \{0,1,\dots, m \}$ $F$ is absolute continuous 
    with respect to $F_i$ and Assumptions \ref{ass:ISDist} hold. 
    Then, if the samples are drawn i.i.d. from the $F_i$ according to the allocation $\tilde{N}_{IS} = (N_0, N_1, \dots, N_m)$ we have: 
    $$
       \mathcal{E}(F, \dots, F, \tilde{N}_{CR}) = \mathcal{E}(F_0, F_1, 
        \dots, F_m, \tilde{N}_{IS}) 
        \quad \Longleftrightarrow \quad N_{CR} = \frac{\sum_{i=0}^m \tilde{V}(1 -\alpha_i, F) (g(\alpha_{i+1})- g(\alpha_i))^2  }{\sum_{i=0}^m \frac{\tilde{V}(1 -\alpha_i, F_{i})}{N_i} (g(\alpha_{i+1})- g(\alpha_i))^2}.
    $$
\end{lem}
\begin{proof}
    We have:
    \begin{align*}
        & \tilde{\mathcal{E}}(F, \dots, F, \tilde{N}_{CR}) = \tilde{\mathcal{E}}
        (F_0, F_1, \dots, F_m, \tilde{N}_{IS}) \\
        \Longleftrightarrow \quad & \sqrt{\frac{1}{N_{CR}}\sum_{i=0}^m \tilde{V}(1 -\alpha_i, F)(g(\alpha_{i+1})- g(\alpha_i))^2 } = \sqrt{\sum_{i=0}^m \frac{\tilde{V}(1 -\alpha_i, F_{i})}{N_i} (g(\alpha_{i+1})- g(\alpha_i))^2 } \\
        \Longleftrightarrow  \quad & N_{CR} =\frac{\sum_{i=0}^m \tilde{V}(1 -\alpha_i, F) (g(\alpha_{i+1})- g(\alpha_i))^2  }{\sum_{i=0}^m \frac{\tilde{V}(1 -\alpha_i, F_{i})}{N_i} (g(\alpha_{i+1})- g(\alpha_i))^2}.
    \end{align*}
\end{proof}
By accounting for the same computational components as in Section 
\ref{sec:TimeEff}, we obtain the following result:
\begin{prop}
Assume $N_{CR} > M + N_{IS}$, and let $N_{CR}$ be chosen according to Lemma~\ref{lem:IndTimeEff}. Then
\[
   T_{CR}(N_{CR},m) - T_{IS}(M,N_{IS},m) > 0
\]
whenever
\[
   t_h\big(N_{CR} - (N_{IS}+M)\big) > t_{Mix}(M,m) + t_{kFold}(M) 
   + t_{MH}(N_{IS}) + t_{Norm}(m).
\]
\end{prop}
\begin{proof}
    This is analogous to the proof of Proposition \ref{prop:TimeEff}.
\end{proof}

\subsection{Computational Resources}
All case studies were implemented in MATLAB and executed on the cluster system of Leibniz Universität Hannover, using nodes of varying specifications. In every case study, the dominant contribution to computation time arose from the evaluation of the normalizing constant.
\\
For the case studies of Section \ref{sec:CaseStudies} using the quadrature methods of Section \ref{subsec:ImpIssues}, the corresponding ranges of computation times are reported in Table \ref{tab:CompTimes}.
\begin{table}[h]
    \centering
\begin{tabularx}{\linewidth}{| X | X | X |}
    \hline
    Method & Fastest Calculation & Slowest Calculation\\
    \hline 
    Exact IS & $1$s - Normal Distribution & $15$s - Unif. Dist. and Sine \\
    Gaussian SVM IS & $67$s - Normal Distribution & $1463$s - $\chi^2$-Distribution \\
    $k$-NN IS & $7$s - Logistic Distribution & $540$s - Product of Normals \\
    Lin Reg IS & $3$s - Logistic Distribution &  $30$s - $\chi^2$-Distribution \\
    Poly SVM IS & $2$s - Normal Distribution & $750$s - $\chi^2$-Distribution. \\
    \hline
\end{tabularx} 
\caption{Slowest and fastest calculation times for the case studies in Section
\ref{sec:CaseStudies}.}
\label{tab:CompTimes}
\end{table}
The estimation times are sensitive to the choice of hyperparameters in the ML components, in particular for polynomial SVMs and $k$-NN regression.
\\
Using kernel-smoothing density estimates at $100$ randomly chosen points for the normalizing constant reduces computation times to below $90$s in all cases.
\\
The estimations for the iterative exploration of the extreme tail in Section \ref{sec:itexext}
were implemented with the quadrature formulas outlined in Section \ref{subsec:ImpIssues}. 
For the identity of normals, the DRM estimation required about $5$s for both the non-iterative and iterative IS methods.
 Estimating the sum of normals took $5$ seconds with the non-iterative IS and $15$ seconds with the iterative IS. For the product of normals, computation times were $48$ seconds for the non-iterative IS and $120$ seconds for the iterative IS. Finally, estimating the sum of squared normals required $493$ seconds for the non-iterative IS and $1107$ seconds for the iterative IS. \\
In the ALM case studies discussed in Section \ref{sec:ALM}, the estimation of the DRMs with exact IS required $116$s, with the Gaussian SVM 
IS $245$s, with the $k$-NN IS $99$s, with the linear regression IS $131$s and with the linear 
SVM $284$s.

\subsection{A Sharper Error Bound for Individual Importance Sampling} \label{app:SharperErrorBound}

The error bound (\ref{eq:ErrorBound}) does not take advantage of the independence of 
the quantile estimates when individual IS is applied. Consequently, the error bound
in (\ref{eq:ErrorBound}) remains applicable to both individual
and pooled sample allocation, as discussed in Section \ref{subsec:EffUse}. 
However, by concentrating on IS sampling and leveraging the 
independence of the quantile estimates, we can derive a more precise inequality for the MSE in the DRM estimation.
\begin{prop}
    Consider the estimator $\hat{\rho}_g(Y)$ defined in (\ref{eq:dicretization}) and the individual sample allocation. Then:
    \begin{align}
        \sqrt{\mathsf{E}\left[ ( \rho_g(Y) - \hat{\rho}_g(Y))^2 \right]} 
        \lessapprox DE(m, \tilde{N}) +\sqrt{  \sum_{i=0}^m  \frac{\tilde{V}(1 - \alpha_i, F_{\vartheta_{i}^\ast})}{N_i} (g(\alpha_{i+1})-g(\alpha_i))^2}
        , \label{eq:IndErrorBound} \\
        =: \tilde{\mathcal{E}}(F_{\vartheta_0^\ast}, F_{\vartheta_1^\ast},\dots,  F_{\vartheta_m^\ast}, \tilde{N}, m), \notag
    \end{align}
    where:
    \begin{align*}
        &DE(m, \tilde{N}) = \sum_{i=0}^m q_Y(1 - \alpha_i) (g(\alpha_{i+1}) - g(\alpha_i)) - \rho_g(Y),\\
        &\tilde{V}(1 - \alpha_i, F_{\vartheta_{i}^\ast}) =  \frac{\mathsf{E}_{F_{\vartheta_i^\ast}} \left[ \frac{dF}{dF_{\vartheta_i^\ast}}(X)^2 \mathds{1}_{\{h(X) > q_Y(1 - \alpha_i)\}} 
	\right] - \alpha_i^2}{ G^\prime(q_Y(1 - \alpha_i))^2}.
    \end{align*}
\end{prop}
\begin{proof}
    Let $\bar{q}_Y(1 - u) = \sum_{i=0}^m \mathds{1}_{\{u \in [\alpha_i, \alpha_{i+1})\}} q_Y(1 - \alpha_i)$.
    Applying the triangle inequality, we obtain:
    \begin{align*}
    &\sqrt{\mathsf{E} \left[ (\rho_g(Y) - \hat{\rho}_g(Y))^2] \right]} \\
    \leq &\underbrace{\sqrt{ \mathsf{E} \left[ \left( \int_0^1 q_Y(1 - u) - \bar{q}_Y(1 -u) dg(u) \right)^2 \right]}}_{=: DE(m, \tilde{N})} 
    + \underbrace{\sqrt{\mathsf{E} 
    \left[\left( \int_0^1 \bar{q}_Y(1 - u) - \hat{q}_Y(1 - u) dg(u) \right)^2 \right]}
    }_{ (\ast) }.
\end{align*}
For the discretization error, we express:
\begin{align*}
    DE(m, \tilde{N}) &= \left( \mathsf{E} \left[ \left(\sum_{i=0}^m \int_{[\alpha_i, \alpha_{i+1})}q_Y(1 - u) - q_Y(1 - \alpha_i) dg(u) \right)^2 \right] \right)^{\frac{1}{2}} \\
    &=\left\vert \sum_{i=0}^m \int_{[\alpha_i, \alpha_{i+1})} q_Y
    (1-u) - q_Y(1 - \alpha_i) dg(u) \right\vert \\
    &= \sum_{i=0}^m q_Y(1 - \alpha_i) (g(\alpha_{i+1}) - g(\alpha_i)) - 
    \int_0^1 q_Y(1 -u) dg(u) \\
    &= \sum_{i=0}^m q_Y(1 - \alpha_i) (g(\alpha_{i+1}) - g(\alpha_i)) - \rho_g(Y).
\end{align*}
For the estimation error, using Theorem \ref{theo:CLT} and the independence of the quantile estimators, we have: 
\begin{align*}
 (\ast) &=\mathsf{E} \left[ \left( \sum_{i=0}^m q_Y(1 - 
        \alpha_i) - \hat{q}_{F_{\vartheta_i^\ast}, N_i}(1 - \alpha_i)) ( g(\alpha_{i+1}) - g(\alpha_i)\right) ^2 \right] \\
        &= \sum_{i=0}^m \frac{\mathsf{E}_{F_{\vartheta_i^\ast}} \left[ \frac{dF}{dF_{\vartheta_i^\ast}}(X)^2 \mathds{1}_{\{h(X) > q_Y(1 - \alpha_i)\}} 
	       \right] - \alpha_i^2}{N_i G^\prime(q_Y(1 - \alpha_i))^2}
	       (g(\alpha_{i+1} ) - g(\alpha_i) )^2.
\end{align*}
\end{proof}
Following the approach to derive (\ref{theo:OptAllo}), we obtain the optimal allocation by minimizing $\sum_{i=0}^m \frac{\tilde{V}(1 - \alpha_i, F_{\vartheta_i^\ast})}{N_i} (g(\alpha_{i+1}) - g(\alpha_i))^2$ under the constraint $\sum_{i=0}^m N_i = N$. Using calculations similar to those in Appendix \ref{app:OptAllo}, we find:
$$ \tilde{N}_i^\ast = N \frac{\tilde{c}_i}{\sum_{j=0}^m \tilde{c}_j}, \quad i = 0,1,\dots, m,$$
where
$$ \tilde{c}_i =  \frac{\sqrt{\mathsf{E}_{F_{\vartheta_i^\ast}} \left[ \frac{dF}{dF_{\vartheta_i^\ast}}(X)^2 \mathds{1}_{\{h(X) > q_Y(1 - \alpha_i)\}} 
	       \right] - \alpha_i^2}}{ G^\prime(q_Y(1 - \alpha_i))}
(g(\alpha_{i+1} ) - g(\alpha_i) ).$$

\subsection{Additional Plots}\label{app:AddPlots}

\begin{figure}[H]
    \centering
    \begin{minipage}[c]{0.32\textwidth}
        \centering
        \includegraphics[width=1.11\textwidth]{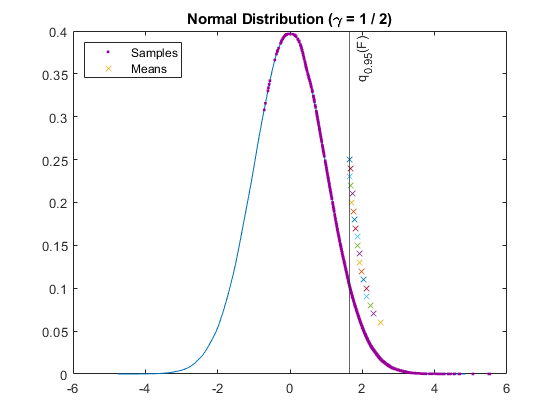}
    \end{minipage}
    \begin{minipage}[c]{0.32\textwidth}
        \centering
        \includegraphics[width=1.11\textwidth]{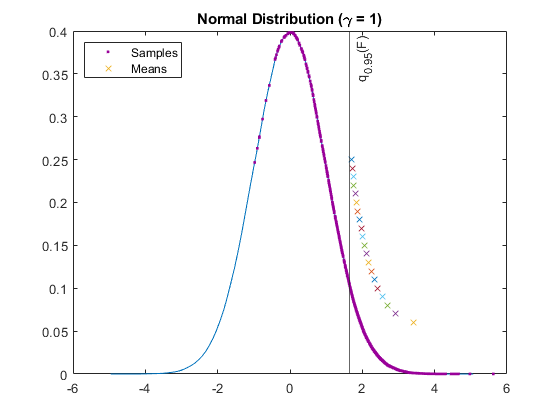}
    \end{minipage}
     \begin{minipage}[c]{0.32\textwidth}
        \centering
        \includegraphics[width=1.11\textwidth]{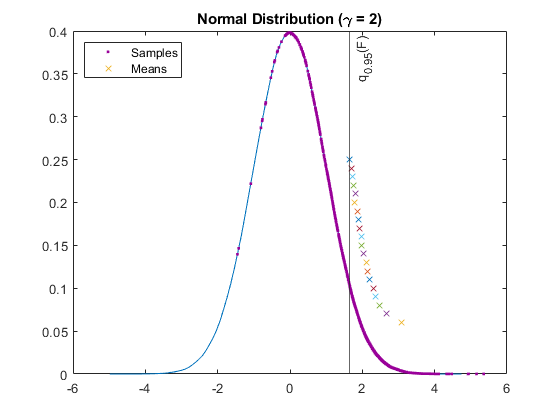}
    \end{minipage}
    \begin{minipage}[c]{0.32\textwidth}
        \centering
        \includegraphics[width=1.11\textwidth]{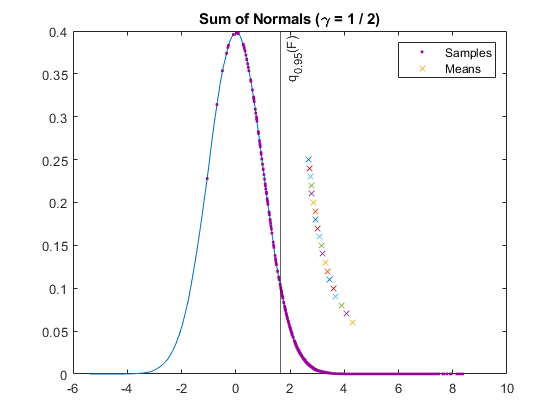}
    \end{minipage}
    \begin{minipage}[c]{0.32\textwidth}
        \centering
        \includegraphics[width=1.11\textwidth]{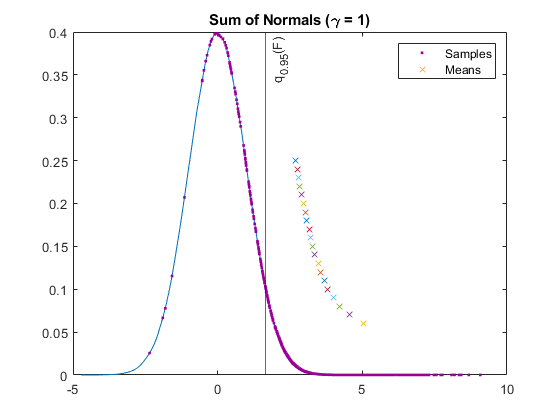}
    \end{minipage}
    \begin{minipage}[c]{0.32\textwidth}
        \centering
        \includegraphics[width=1.11\textwidth]{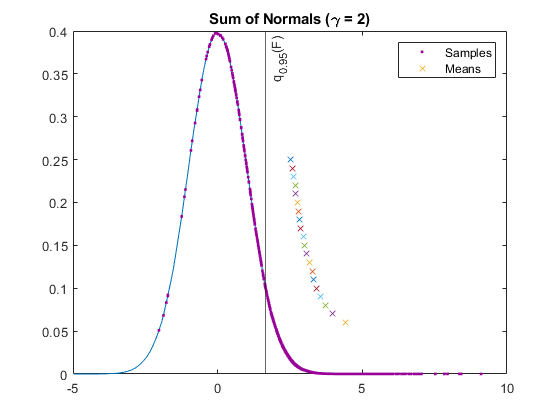}
    \end{minipage}
    \begin{minipage}[c]{0.32\textwidth}
        \centering
        \includegraphics[width=1.129\textwidth]{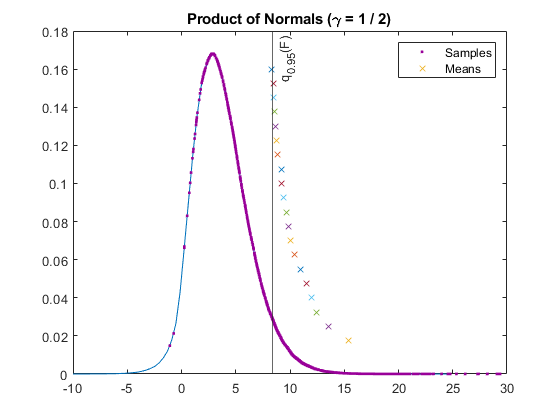}
    \end{minipage}
    \begin{minipage}[c]{0.32\textwidth}
        \centering
        \includegraphics[width=1.11\textwidth]{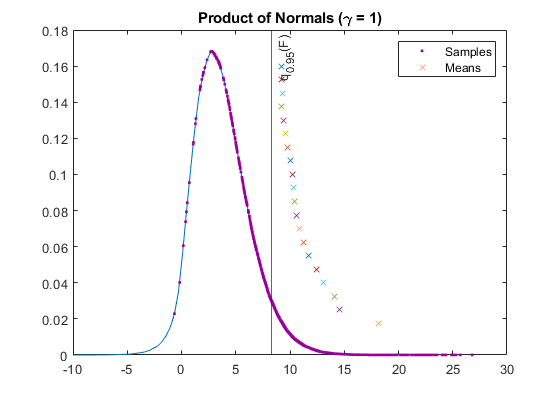}
    \end{minipage}
    \begin{minipage}[c]{0.32\textwidth}
        \centering
        \includegraphics[width=1.11\textwidth]{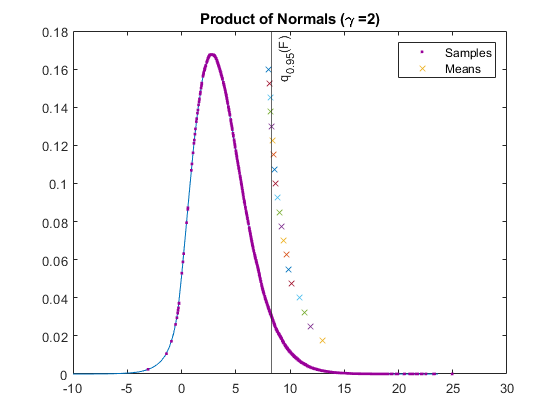}
    \end{minipage}
    \begin{minipage}[c]{0.32\textwidth}
        \centering
        \includegraphics[width=1.11\textwidth]{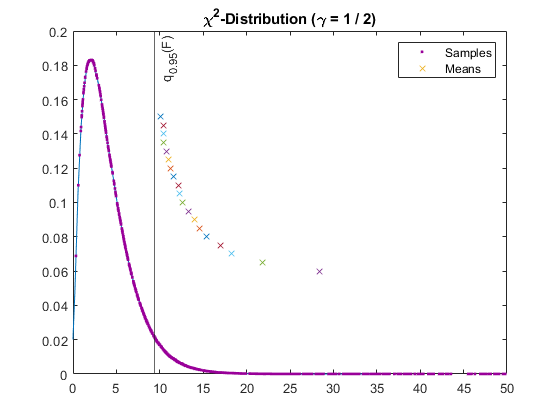}
    \end{minipage}
    \begin{minipage}[c]{0.32\textwidth}
        \centering
        \includegraphics[width=1.11\textwidth]{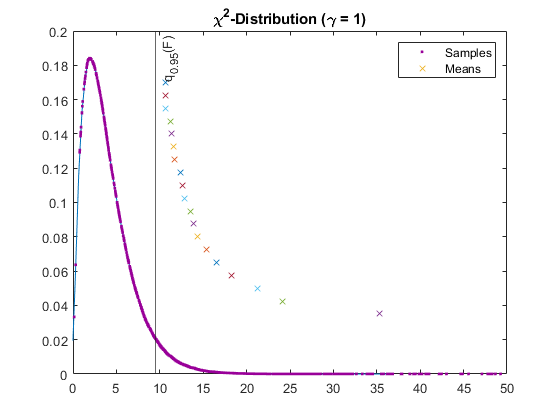}
    \end{minipage}
    \begin{minipage}[c]{0.32\textwidth}
        \centering
        \includegraphics[width=1.11\textwidth]{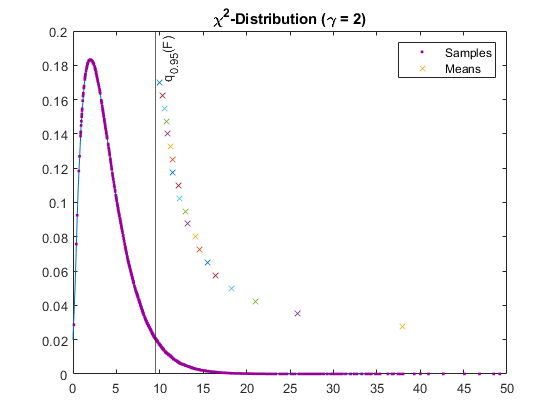}
    \end{minipage}
    \begin{minipage}[c]{0.32\textwidth}
        \centering
        \includegraphics[width=1.11\textwidth]{Picture/UnifTrans/SampleDist_BB_conv.png}
    \end{minipage}
    \begin{minipage}[c]{0.32\textwidth}
        \centering
        \includegraphics[width=1.11\textwidth]{Picture/UnifTrans/SampleDist_BB_avar.png}
    \end{minipage}
    \begin{minipage}[c]{0.32\textwidth}
        \centering
        \includegraphics[width=1.11\textwidth]{Picture/UnifTrans/SampleDist_BB_conc.png}
    \end{minipage}
    \caption{$200$ samples drawn from the mixture distribution plotted on the underlying 
    distribution of the model $Y$ for the case studies (1) to (6). To approximate the mixture weights and
    optimal mixture components $M = 20,000$ pivot samples were drawn. For the 
    estimation of the quantile and DRM $N = 100,000$ samples are drawn 
    from the mixture distribution. For further details, see Section \ref{sec:SampleDistApp}.}
    \label{fig:SampleDistApp}
\end{figure}

\begin{figure}[H]
    \centering
    \begin{minipage}[c]{0.32\textwidth}
        \centering
        \includegraphics[width=1.11\textwidth]{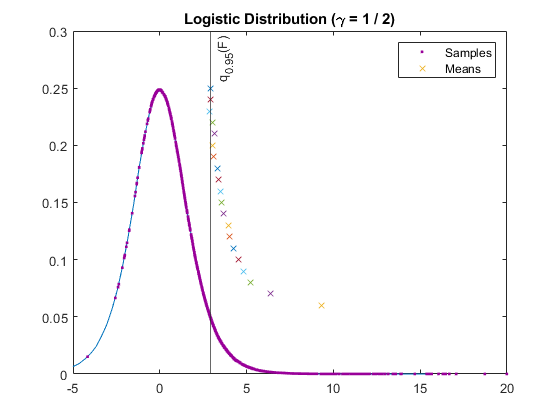}
    \end{minipage}
    \begin{minipage}[c]{0.32\textwidth}
        \centering
        \includegraphics[width=1.11\textwidth]{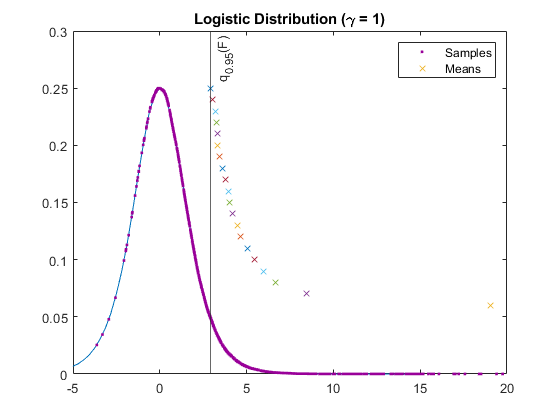}
    \end{minipage}
    \begin{minipage}[c]{0.32\textwidth}
        \centering
        \includegraphics[width=1.11\textwidth]{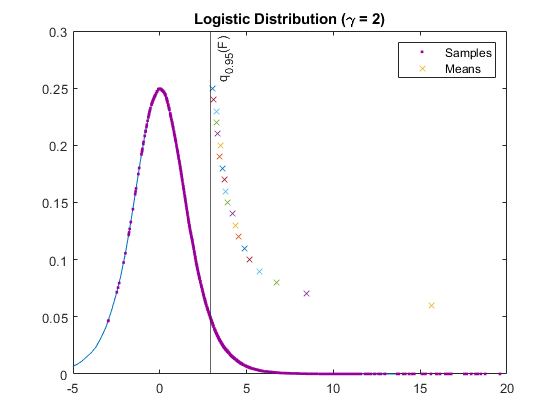}
    \end{minipage}
    \caption{Continuation of Figure \ref{fig:SampleDistApp}.}
      \label{fig:SampleDistApp3}
\end{figure}

\begin{figure}[H]
    \centering
    \begin{minipage}[c]{0.32\textwidth}
        \centering
        \includegraphics[width=1.11\textwidth]{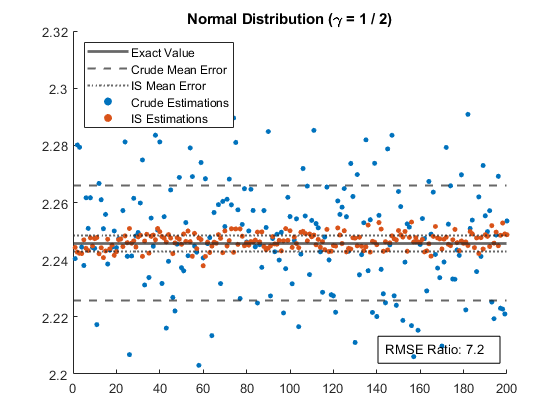}
    \end{minipage}
    \begin{minipage}[c]{0.32\textwidth}
        \centering
        \includegraphics[width=1.11\textwidth]{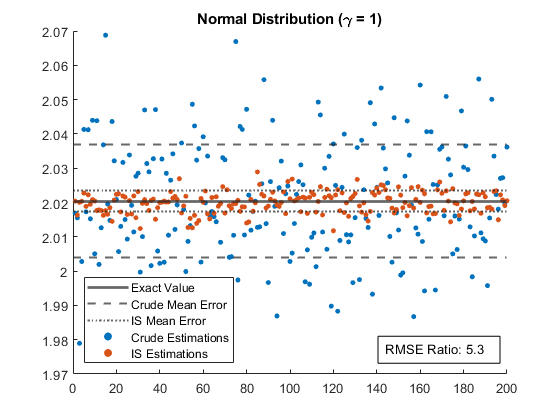}
    \end{minipage}
        \begin{minipage}[c]{0.32\textwidth}
        \centering
        \includegraphics[width=1.11\textwidth]{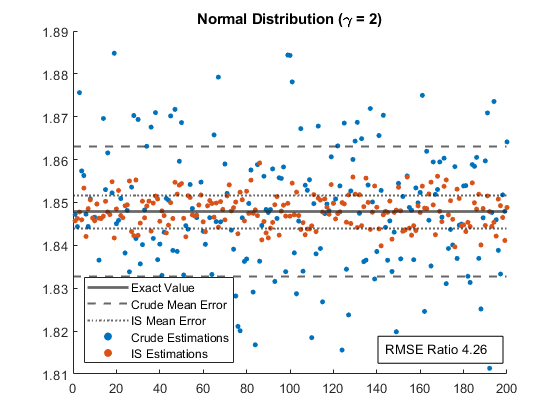}
    \end{minipage}
    \begin{minipage}[c]{0.32\textwidth}
        \centering
        \includegraphics[width=1.11\textwidth]{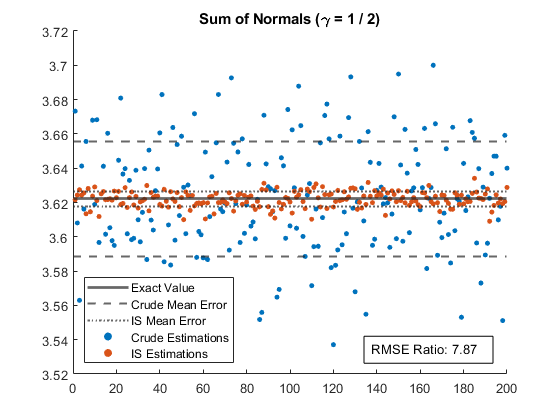}
    \end{minipage}
    \begin{minipage}[c]{0.32\textwidth}
        \centering
        \includegraphics[width=1.11\textwidth]{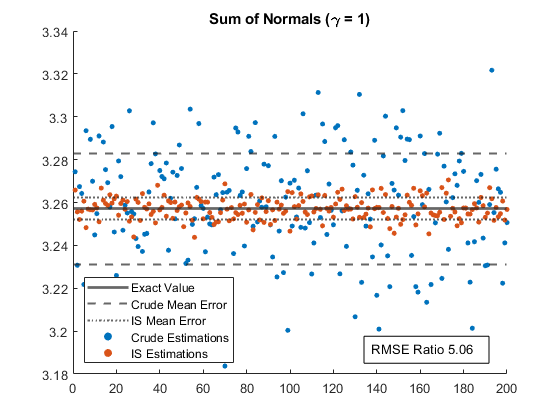}
    \end{minipage}
        \begin{minipage}[c]{0.32\textwidth}
        \centering
        \includegraphics[width=1.11\textwidth]{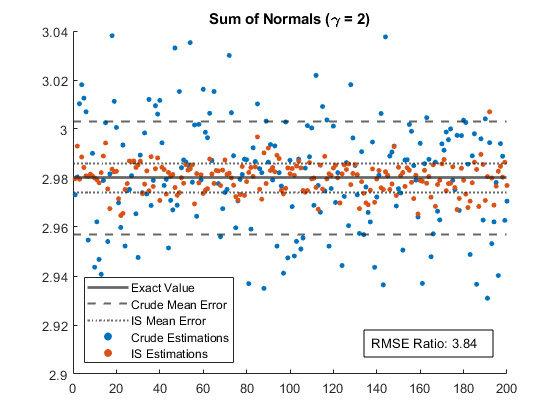}
    \end{minipage}
        \begin{minipage}[c]{0.32\textwidth}
        \centering
        \includegraphics[width=1.11\textwidth]{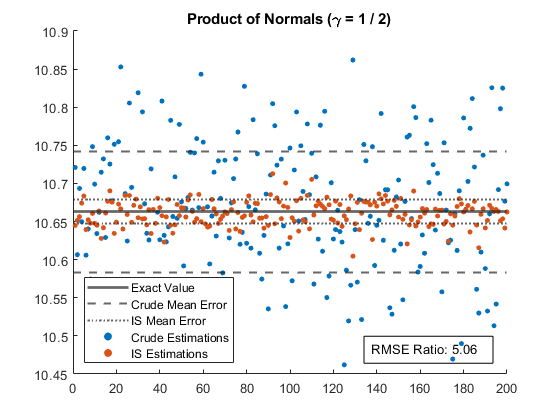}
    \end{minipage}
    \begin{minipage}[c]{0.32\textwidth}
        \centering
        \includegraphics[width=1.11\textwidth]{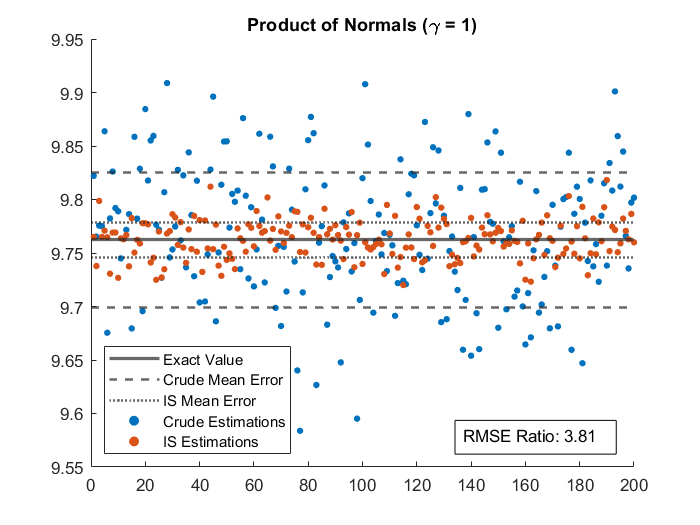}
    \end{minipage}
    \begin{minipage}[c]{0.32\textwidth}
        \centering
        \includegraphics[width=1.11\textwidth]{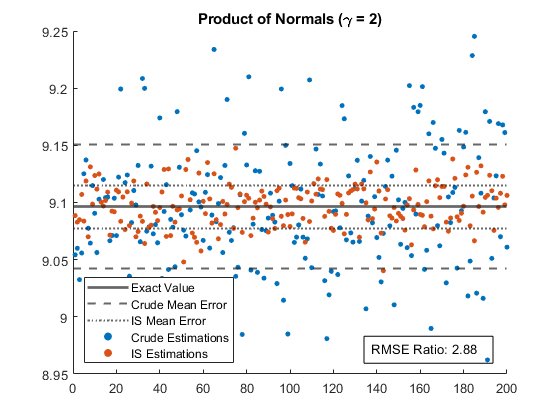}
    \end{minipage}
    \caption{200 estimations with a crude Monte Carlo estimation and the proposed importance sampling 
    method for the models (1) to (6) and the considered DRMs $\rho_{g_{\gamma, \alpha}}$, 
    $\gamma \in \{1/2, 1, 2\}, \alpha = 0.05$.
    Also shown is the ``exact value", which is calculated with
    a crude Monte Carlo estimation over $10,000,000$ samples, the estimated 
    root mean square error of the estimation around the exact value and the ratio of the 
    root mean square error of the crude method and importance sampling method. }
    \label{fig:ErrorDist}
\end{figure}

\begin{figure}[H]
    \centering
          \begin{minipage}[c]{0.32\textwidth}
        \centering
        \includegraphics[width=1.11\textwidth]{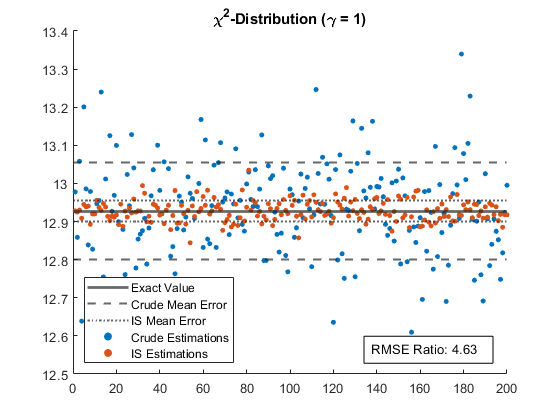}
    \end{minipage}
    \begin{minipage}[c]{0.32\textwidth}
        \centering
        \includegraphics[width=1.11\textwidth]{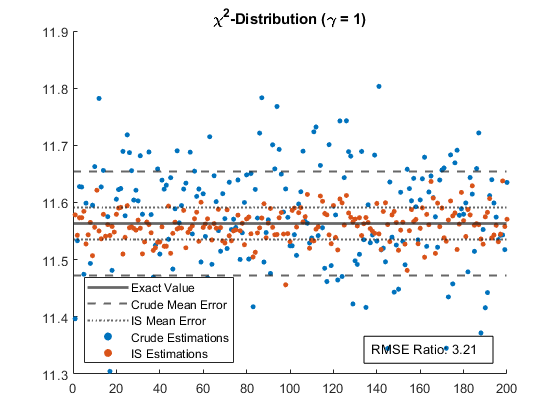}
    \end{minipage}
        \begin{minipage}[c]{0.32\textwidth}
        \centering
        \includegraphics[width=1.11\textwidth]{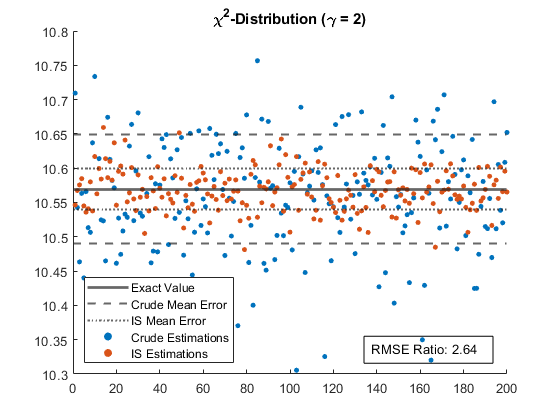}
    \end{minipage}
    \begin{minipage}[c]{0.32\textwidth}
        \centering
        \includegraphics[width=1.11\textwidth]{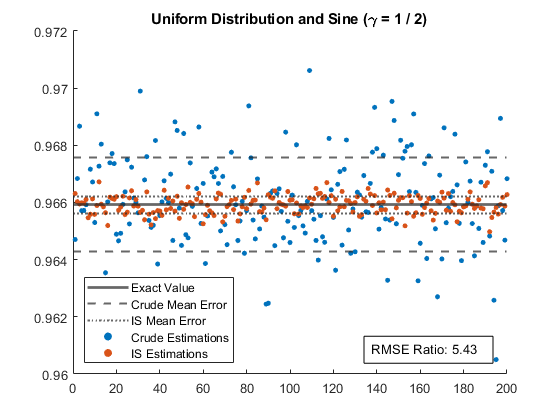}
    \end{minipage}
    \begin{minipage}[c]{0.32\textwidth}
        \centering
        \includegraphics[width=1.11\textwidth]{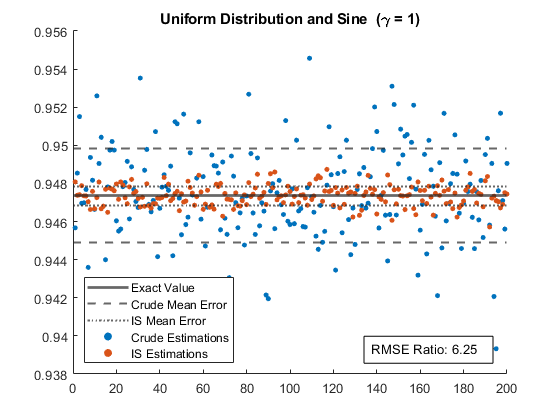}
    \end{minipage}
    \begin{minipage}[c]{0.32\textwidth}
        \centering
        \includegraphics[width=1.11\textwidth]{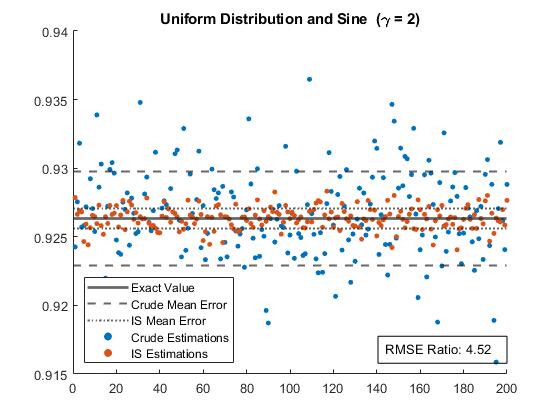}
    \end{minipage}
    \begin{minipage}[c]{0.32\textwidth}
        \centering
        \includegraphics[width=1.11\textwidth]{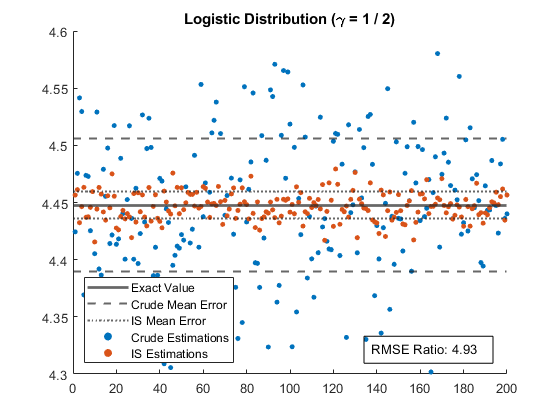}
    \end{minipage}
    \begin{minipage}[c]{0.32\textwidth}
        \centering
        \includegraphics[width=1.11\textwidth]{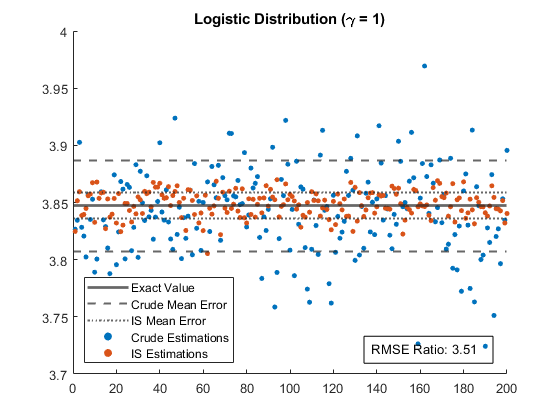}
    \end{minipage}
    \begin{minipage}[c]{0.32\textwidth}
        \centering
        \includegraphics[width=1.11\textwidth]{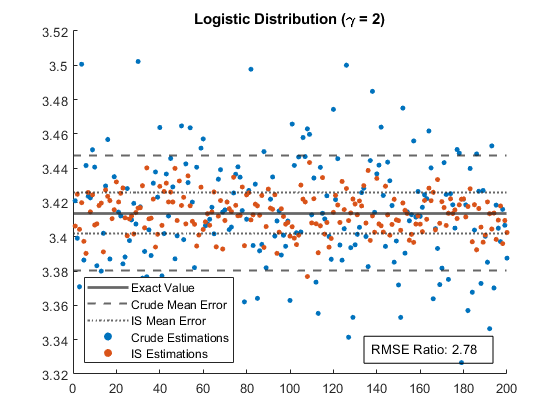}
    \end{minipage}
    \caption{Continuation of Figure \ref{fig:ErrorDist}.}
\end{figure}

\begin{figure}[H]
    \centering
    \begin{minipage}[c]{0.32\textwidth}
        \centering
        \includegraphics[width=1.11\textwidth]{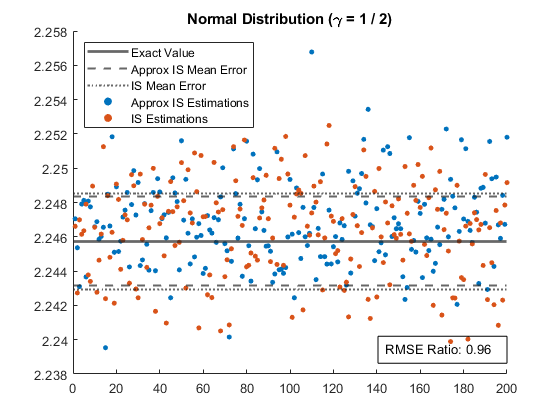}
    \end{minipage}
    \begin{minipage}[c]{0.32\textwidth}
        \centering
        \includegraphics[width=1.11\textwidth]{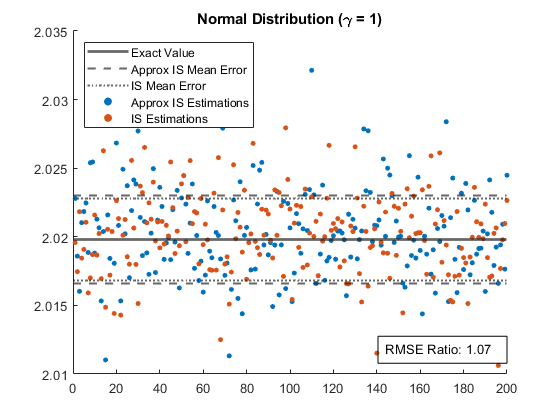}
    \end{minipage}
        \begin{minipage}[c]{0.32\textwidth}
        \centering
        \includegraphics[width=1.11\textwidth]{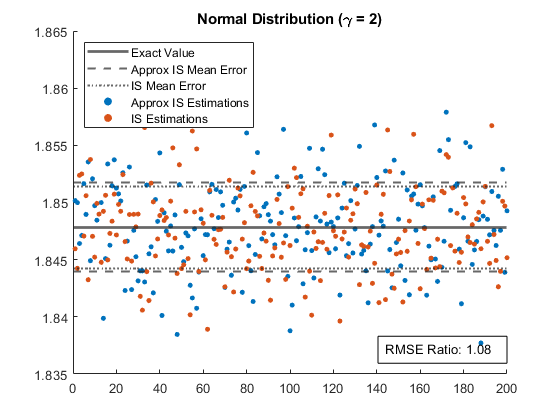}
    \end{minipage}
    \caption{200 estimations with the importance sampling method with exact knowledge of the
    model and the importance sampling method with an approximation of the model chosen through $k$-fold validation for the case studies (1) to (6) and the considered DRMs $\rho_{g_{\gamma, \alpha}}$, 
    $\gamma \in \{1/2, 1, 2\}, \alpha = 0.05$.
    Also shown is the ``exact value", which is calculated with
    a crude Monte Carlo estimation over $10,000,000$ samples, the estimated 
    root mean square error of the estimation around the exact value and the RMSE ratio between the two importance sampling methods. }
    \label{fig:ErrorComp}
\end{figure}

\hfill

\begin{figure}[H]
    \centering
      \begin{minipage}[c]{0.32\textwidth}
        \centering
        \includegraphics[width=1.11\textwidth]{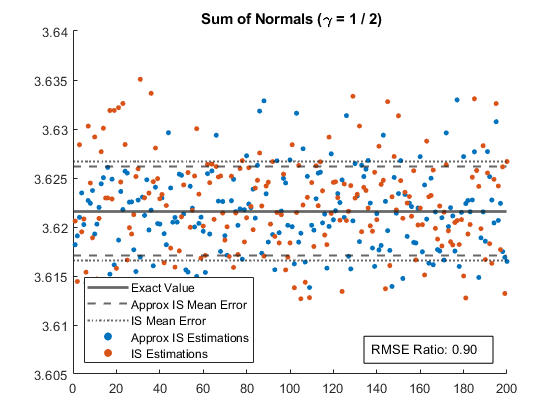}
    \end{minipage}
    \begin{minipage}[c]{0.32\textwidth}
        \centering
        \includegraphics[width=1.11\textwidth]{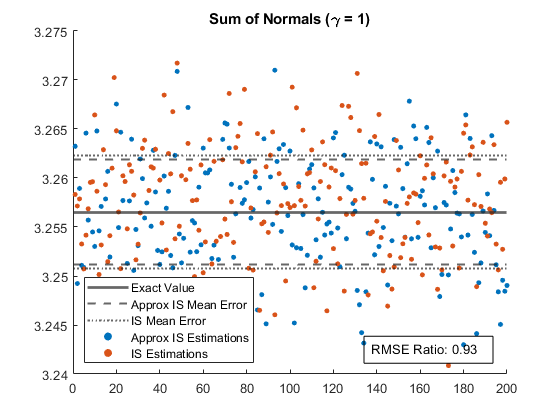}
    \end{minipage}
        \begin{minipage}[c]{0.32\textwidth}
        \centering
        \includegraphics[width=1.11\textwidth]{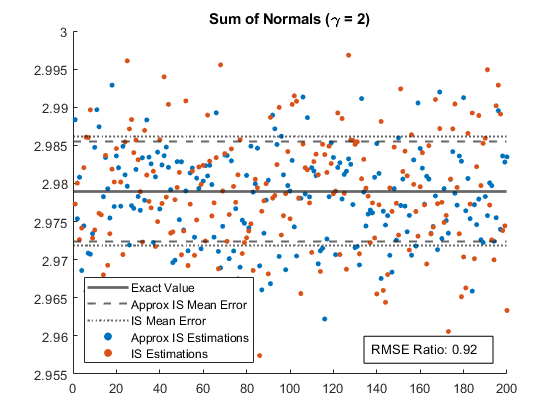}
    \end{minipage}
    \begin{minipage}[c]{0.32\textwidth}
        \centering
        \includegraphics[width=1.11\textwidth]{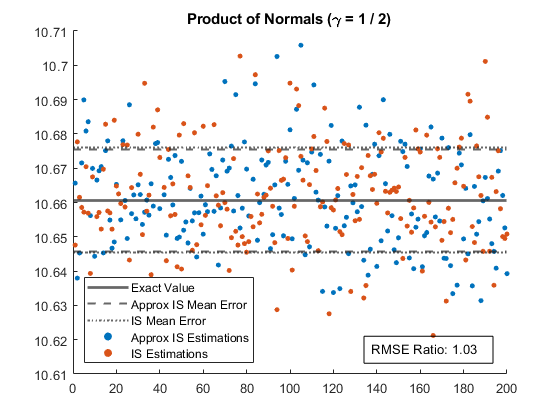}
    \end{minipage}
    \begin{minipage}[c]{0.32\textwidth}
        \centering
        \includegraphics[width=1.11\textwidth]{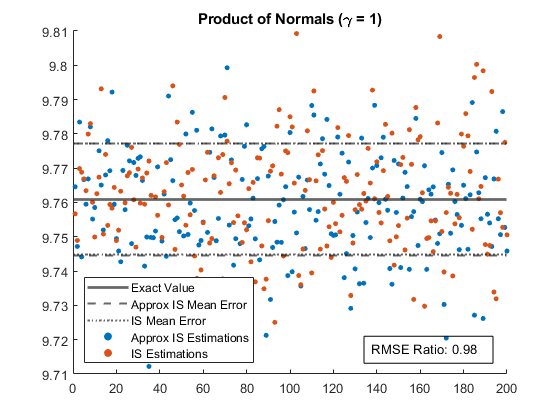}
    \end{minipage}
        \begin{minipage}[c]{0.32\textwidth}
        \centering
        \includegraphics[width=1.11\textwidth]{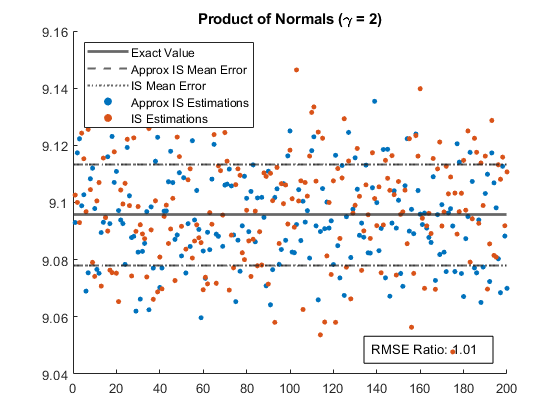}
    \end{minipage}
        \begin{minipage}[c]{0.32\textwidth}
        \centering
        \includegraphics[width=1.11\textwidth]{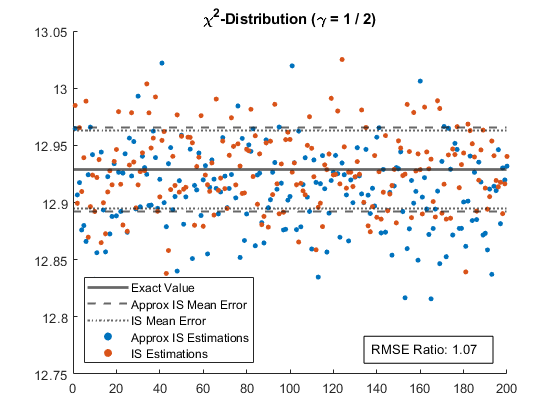}
    \end{minipage}
    \begin{minipage}[c]{0.32\textwidth}
        \centering
        \includegraphics[width=1.11\textwidth]{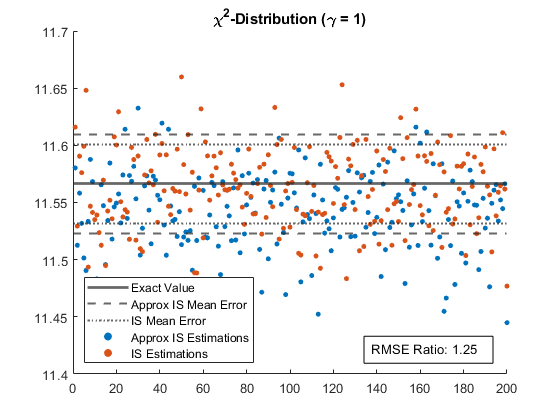}
    \end{minipage}
        \begin{minipage}[c]{0.32\textwidth}
        \centering
        \includegraphics[width=1.115\textwidth]{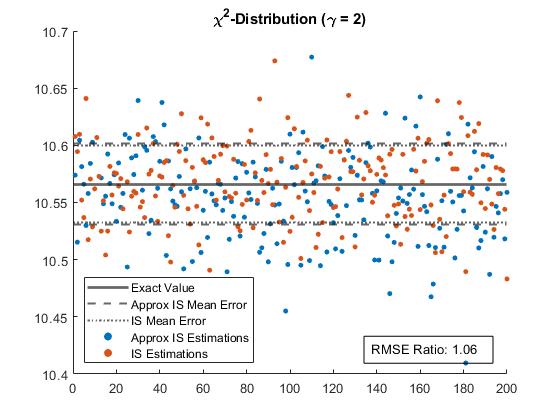}
    \end{minipage}
       \begin{minipage}[c]{0.32\textwidth}
        \centering
        \includegraphics[width=1.11\textwidth]{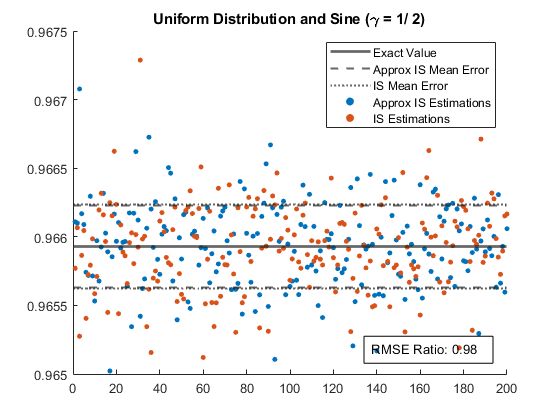}
    \end{minipage}
    \begin{minipage}[c]{0.32\textwidth}
        \centering
        \includegraphics[width=1.11\textwidth]{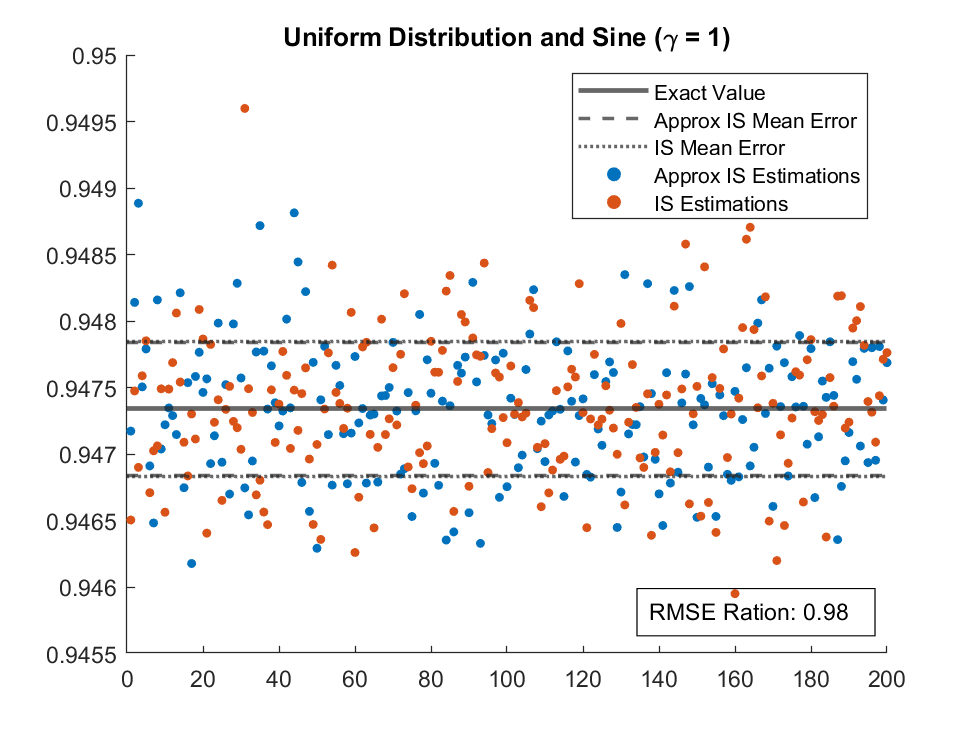}
    \end{minipage}
    \begin{minipage}[c]{0.32\textwidth}
        \centering
        \includegraphics[width=1.11\textwidth]{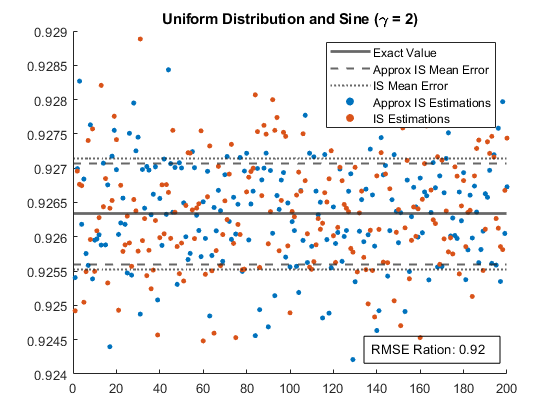}
    \end{minipage}
    \begin{minipage}[c]{0.32\textwidth}
        \centering
        \includegraphics[width=1.11\textwidth]{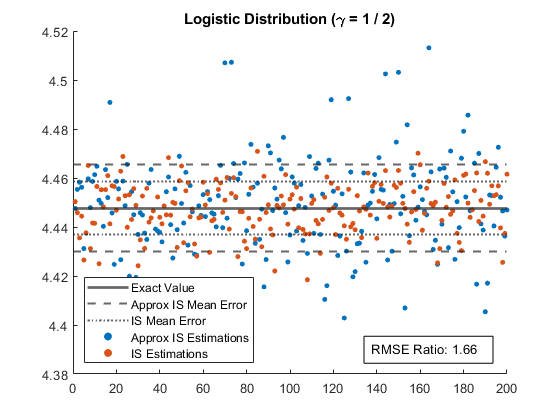}
    \end{minipage}
    \begin{minipage}[c]{0.32\textwidth}
        \centering
        \includegraphics[width=1.11\textwidth]{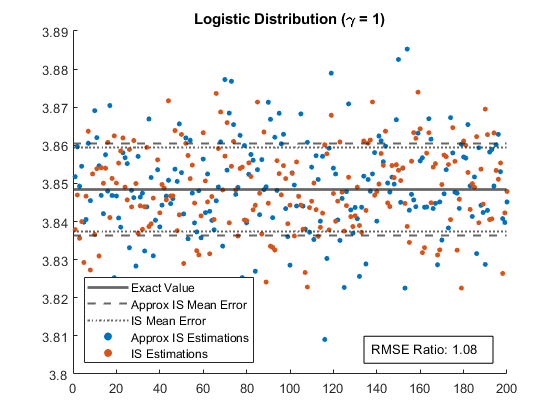}
    \end{minipage}
    \begin{minipage}[c]{0.32\textwidth}
        \centering
        \includegraphics[width=1.11\textwidth]{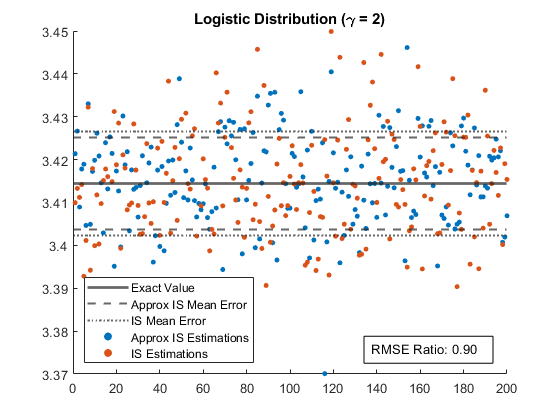}
    \end{minipage}
    \caption{Continuation of Figure \ref{fig:ErrorComp}. }
\end{figure}

\begin{figure}[H]
    \centering
    \begin{minipage}[c]{0.32\textwidth}
        \centering
        \includegraphics[width=1.11\textwidth]{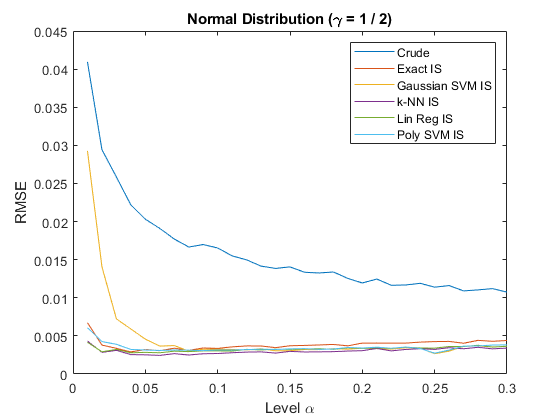}
    \end{minipage}
    \begin{minipage}[c]{0.32\textwidth}
        \centering
        \includegraphics[width=1.11\textwidth]{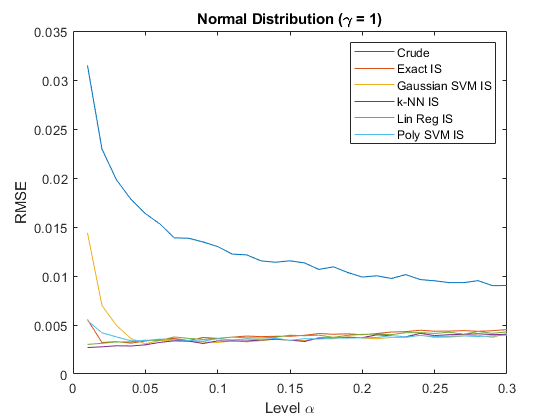}
    \end{minipage}
    \begin{minipage}[c]{0.32\textwidth}
        \centering
        \includegraphics[width=1.11\textwidth]{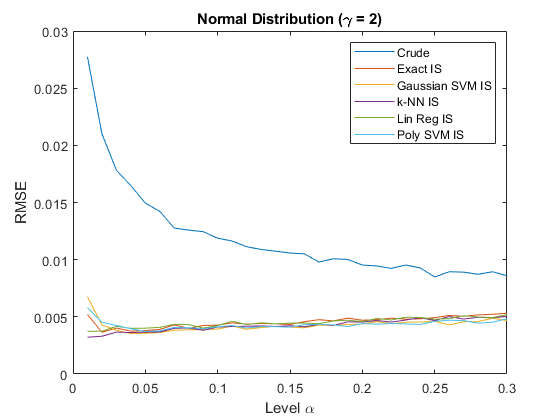}
    \end{minipage}
        \begin{minipage}[c]{0.32\textwidth}
        \centering
        \includegraphics[width=1.11\textwidth]{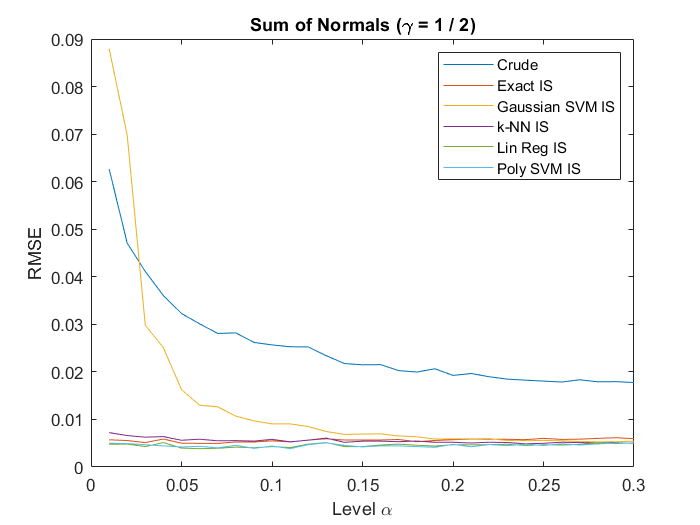}
    \end{minipage}
    \begin{minipage}[c]{0.32\textwidth}
        \centering
        \includegraphics[width=1.11\textwidth]{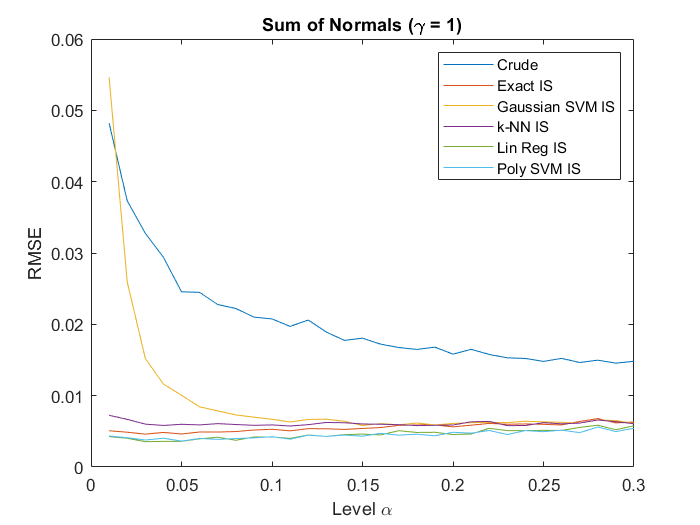}
    \end{minipage}
    \begin{minipage}[c]{0.32\textwidth}
        \centering
        \includegraphics[width=1.11\textwidth]{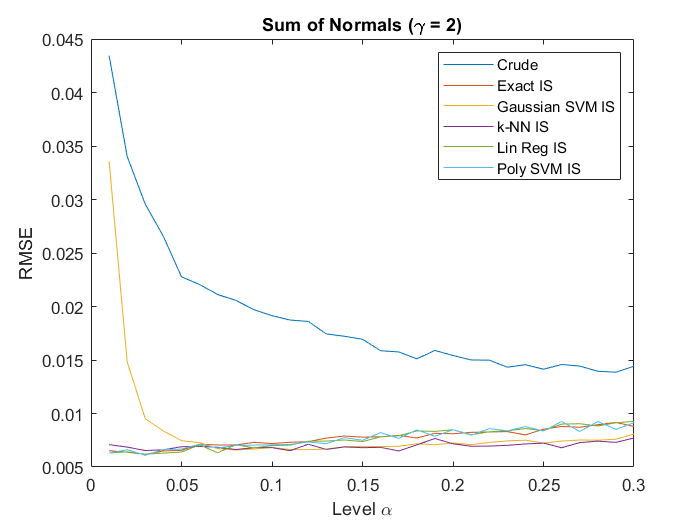}
    \end{minipage}
        \begin{minipage}[c]{0.32\textwidth}
        \centering
        \includegraphics[width=1.11\textwidth]{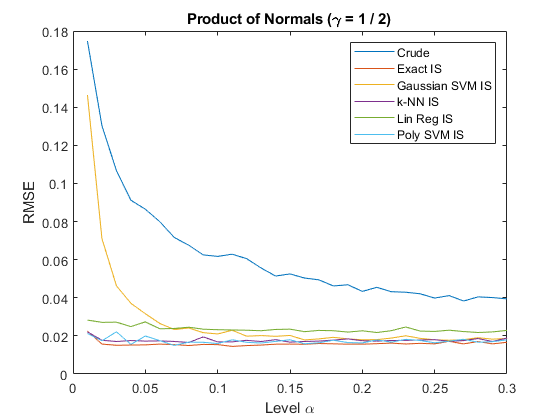}
    \end{minipage}
    \begin{minipage}[c]{0.32\textwidth}
        \centering
        \includegraphics[width=1.11\textwidth]{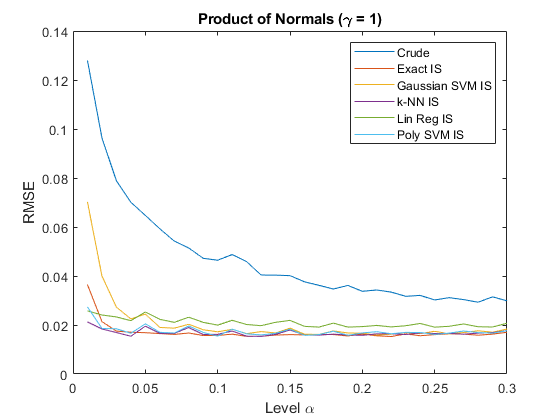}
    \end{minipage}
    \begin{minipage}[c]{0.32\textwidth}
        \centering
        \includegraphics[width=1.11\textwidth]{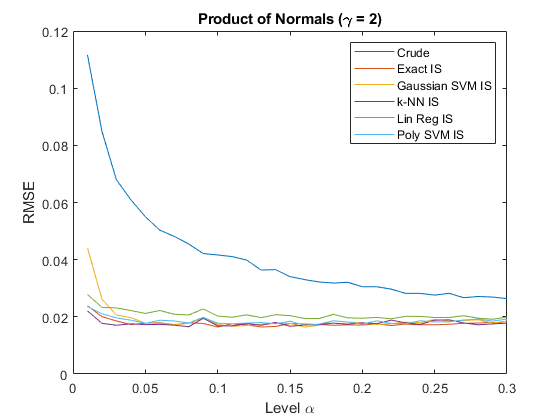}
    \end{minipage}
        \begin{minipage}[c]{0.32\textwidth}
        \centering
        \includegraphics[width=1.11\textwidth]{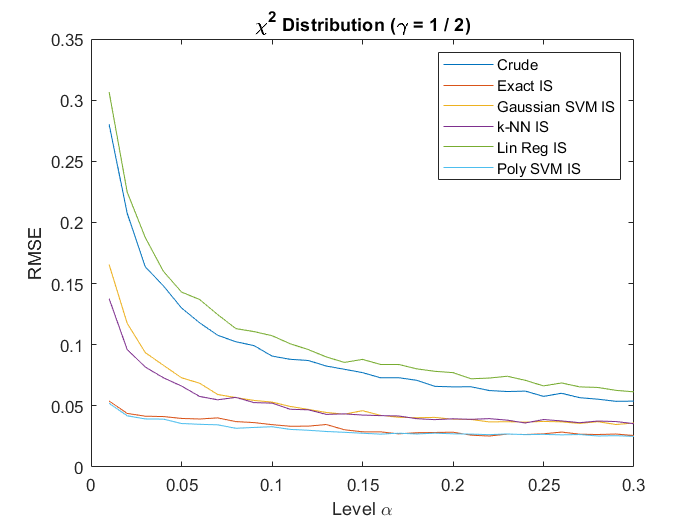}
    \end{minipage}
    \begin{minipage}[c]{0.32\textwidth}
        \centering
        \includegraphics[width=1.11\textwidth]{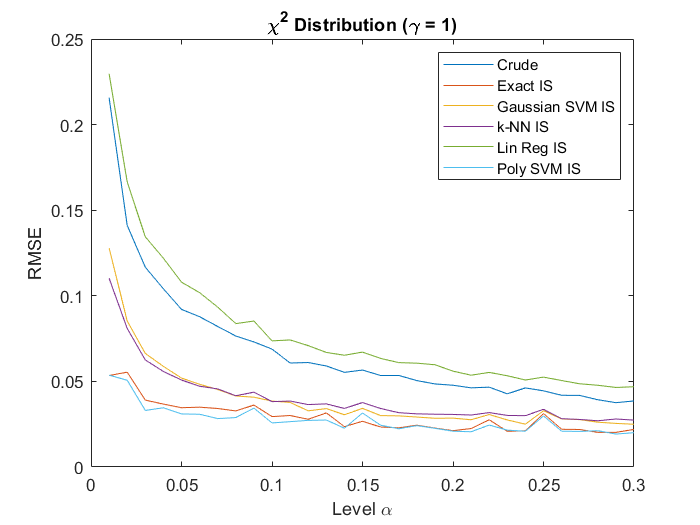}
    \end{minipage}
    \begin{minipage}[c]{0.32\textwidth}
        \centering
        \includegraphics[width=1.11\textwidth]{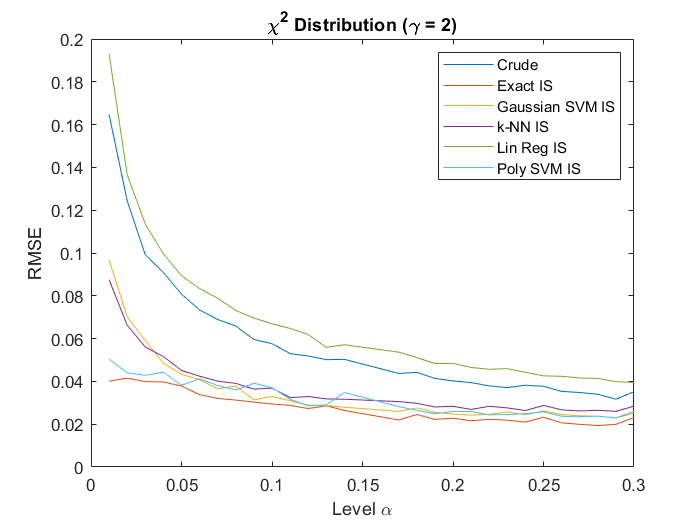}
    \end{minipage}
    \caption{Root Mean Square Error (RMSE) for estimating the DRMs $\rho_{g_{\gamma, \alpha}}$, with $\gamma \in \{1/2, 1, 2\}$, $\alpha \in [0.01, 0.3]$, for the models (1) to (6).
    The DRMs are estimated with a crude Monte Carlo method and the proposed importance sampling method using different approximations of the black box models used 
    in the paper. }
    \label{fig:LevelError}
\end{figure}

\begin{figure}[H]
    \centering
    \begin{minipage}[c]{0.32\textwidth}
        \centering
        \includegraphics[width=1.11\textwidth]{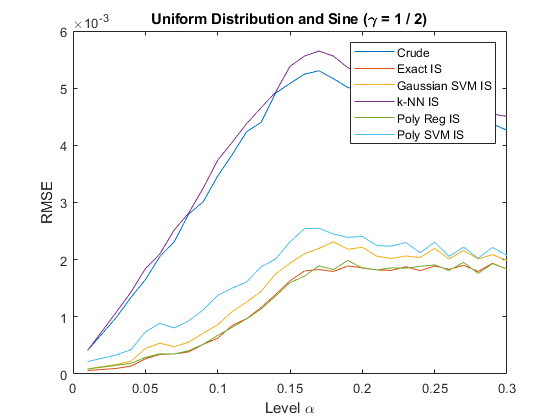}
    \end{minipage}
    \begin{minipage}[c]{0.32\textwidth}
        \centering
        \includegraphics[width=1.11\textwidth]{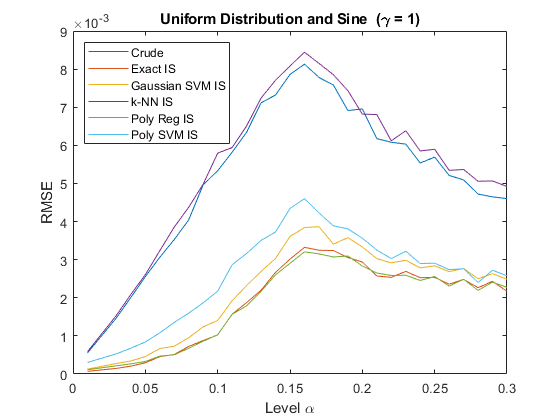}
    \end{minipage}
    \begin{minipage}[c]{0.32\textwidth}
        \centering
        \includegraphics[width=1.11\textwidth]{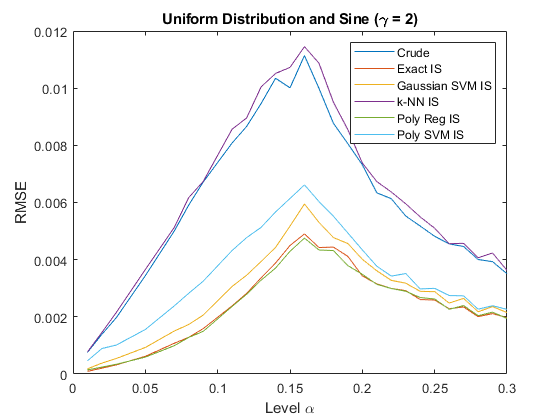}
    \end{minipage}
    \begin{minipage}[c]{0.32\textwidth}
        \centering
        \includegraphics[width=1.11\textwidth]{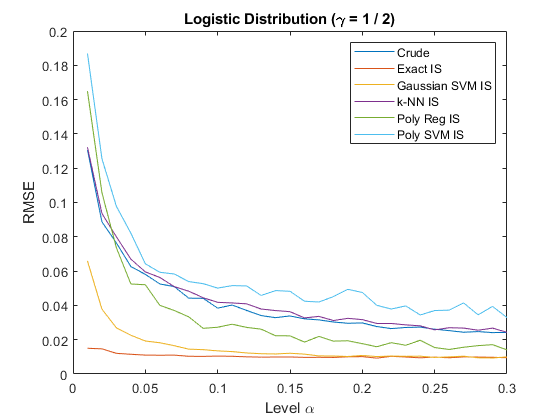}
    \end{minipage}
    \begin{minipage}[c]{0.32\textwidth}
        \centering
        \includegraphics[width=1.11\textwidth]{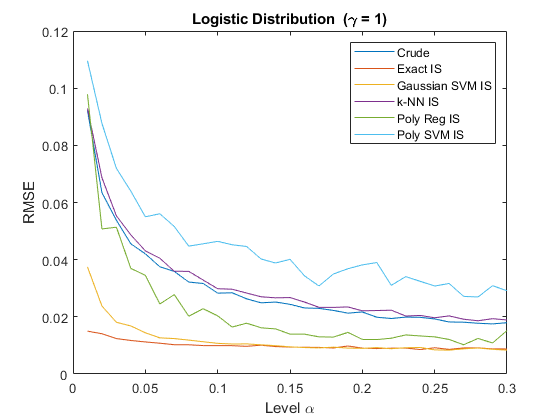}
    \end{minipage}
    \begin{minipage}[c]{0.32\textwidth}
        \centering
        \includegraphics[width=1.11\textwidth]{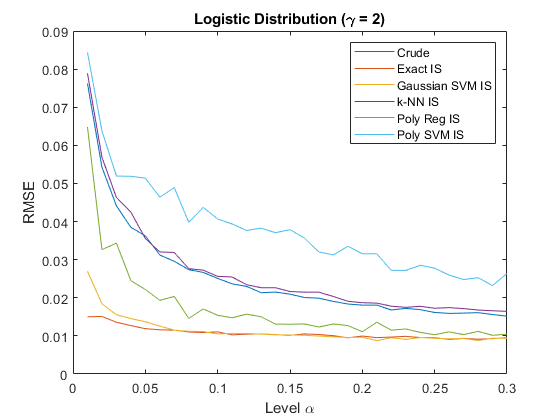}
    \end{minipage}
    \caption{Continuation of Figure \ref{fig:LevelError}. }    \label{fig:LevelError3}
\end{figure}

\begin{figure}[H]
    \centering
    \begin{minipage}[c]{0.32\textwidth}
        \centering
        \includegraphics[width=1.11\textwidth]{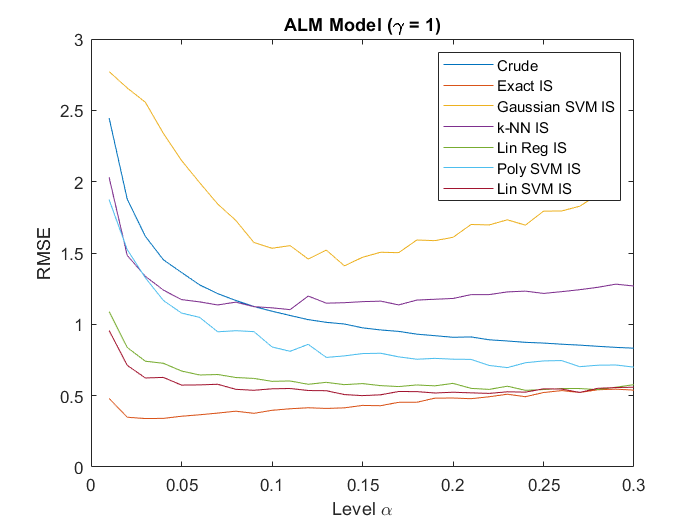}
    \end{minipage}
    \begin{minipage}[c]{0.32\textwidth}
        \centering
        \includegraphics[width=1.11\textwidth]{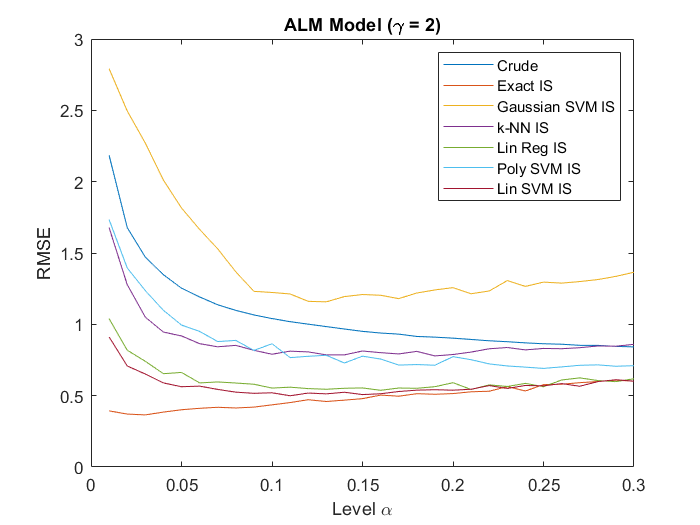}
    \end{minipage}
    \begin{minipage}[c]{0.32\textwidth}
        \centering
        \includegraphics[width=1.11\textwidth]{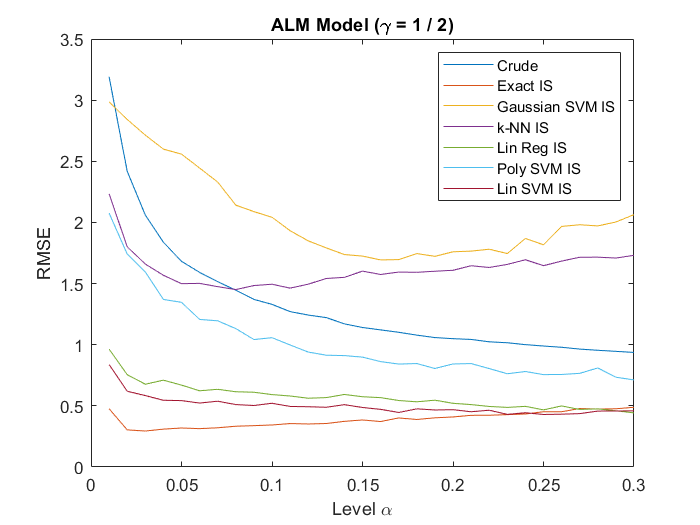}
    \end{minipage}
    \caption{RMSE of the crude method and various importance sampling methods 
    of the considered DRMs for the evolution of the net 
    asset value in the ALM model. The importance sampling methods are implemented with 
    the different approximation techniques considered in the paper. }
    \label{fig:ALM_LevelError}
\end{figure}

\end{document}